\definecolor{darkred}{rgb}{0.8,0.1,0.1}
\definecolor{darkblue2}{rgb}{0.1,0.2,0.8}
\newcommand\update[1]{{\color{black} #1}}
\definecolor{darkblue}{RGB}{0,76,156}
\definecolor{darkkblue}{RGB}{0,0,153}
\definecolor{blue2}{RGB}{102,178,255}
\newtheorem{definition}{Definition}
\newtheorem{proposition}{Proposition}
\newtheorem{theorem}[proposition]{Theorem}
\newtheorem{remark}{Remark}
\def\squareforqed{\hbox{\rlap{$\sqcap$}$\sqcup$}}
\def\qed{\ifmmode\squareforqed\else{\unskip\nobreak\hfil
\penalty50\hskip1em\null\nobreak\hfil\squareforqed
\parfillskip=0pt\finalhyphendemerits=0\endgraf}\fi}
\def\endenv{\ifmmode\;\else{\unskip\nobreak\hfil
\penalty50\hskip1em\null\nobreak\hfil\;
\parfillskip=0pt\finalhyphendemerits=0\endgraf}\fi}
\newenvironment{proof}{\noindent \textbf{{Proof.}~}}{\hfill $\blacksquare$}
\mathchardef\ordinarycolon\mathcode`\:
\def\vcentcolon{\mathrel{\mathop\ordinarycolon}}
\def\resetMathstrut@{%
    \setbox\z@\hbox{%
        \mathchardef\@tempa\mathcode`\[\relax
        \def\@tempb##1"##2##3{\the\textfont"##3\char"}%
        \expandafter\@tempb\meaning\@tempa \relax
    }%
    \ht\Mathstrutbox@\ht\z@ \dp\Mathstrutbox@\dp\z@}
\newcommand{\nc}{\newcommand}
\nc{\rnc}{\renewcommand}
\nc{\beg}{\begin{equation}}
\nc{\eeq}{{\end{equation}}}
\nc{\beqa}{\begin{eqnarray}}
\nc{\eeqa}{\end{eqnarray}}
\nc{\lbar}[1]{\overline{#1}}
\nc{\bra}[1]{\langle#1|}
\nc{\ket}[1]{|#1\rangle}
\nc{\ketbra}[2]{|#1\rangle\!\langle#2|}
\nc{\braket}[2]{\langle#1|#2\rangle}
\nc{\proj}[1]{| #1\rangle\!\langle #1 |}
\nc{\avg}[1]{\langle#1\rangle}
\nc{\Rank}{\operatorname{Rank}}
\nc{\smfrac}[2]{\mbox{$\frac{#1}{#2}$}}
\nc{\tr}{\operatorname{Tr}}
\nc{\ox}{\otimes}
\nc{\dg}{\dagger}
\nc{\dn}{\downarrow}
\nc{\cA}{{\cal A}}
\nc{\cB}{{\cal B}}
\nc{\cC}{{\cal C}}
\nc{\cD}{{\cal D}}
\nc{\cE}{{\cal E}}
\nc{\cF}{{\cal F}}
\nc{\cG}{{\cal G}}
\nc{\cH}{{\cal H}}
\nc{\cI}{{\cal I}}
\nc{\cJ}{{\cal J}}
\nc{\cK}{{\cal K}}
\nc{\cL}{{\cal L}}
\nc{\cM}{{\cal M}}
\nc{\cN}{{\cal N}}
\nc{\cO}{{\cal O}}
\nc{\cP}{{\cal P}}
\nc{\cQ}{{\cal Q}}
\nc{\cR}{{\cal R}}
\nc{\cS}{{\cal S}}
\nc{\cT}{{\cal T}}
\nc{\cV}{{\cal V}}
\nc{\cX}{{\cal X}}
\nc{\cY}{{\cal Y}}
\nc{\cZ}{{\cal Z}}
\nc{\cW}{{\cal W}}
\nc{\csupp}{{\operatorname{csupp}}}
\nc{\qsupp}{{\operatorname{qsupp}}}
\nc{\var}{{\operatorname{var}}}
\nc{\rar}{\rightarrow}
\nc{\lrar}{\longrightarrow}
\nc{\polylog}{{\operatorname{polylog}}}
\nc{\wt}{{\operatorname{wt}}}
\nc{\av}[1]{{\left\langle {#1} \right\rangle}}
\nc{\supp}{{\operatorname{supp}}}
\nc{\dia}{{\diamondsuit }}
\def\a{\alpha}
\def\O{\Omega}
\nc{\RR}{{{\mathbb R}}}
\nc{\CC}{{{\mathbb C}}}
\nc{\FF}{{{\mathbb F}}}
\nc{\NN}{{{\mathbb N}}}
\nc{\ZZ}{{{\mathbb Z}}}
\nc{\PP}{{{\mathbb P}}}
\nc{\QQ}{{{\mathbb Q}}}
\nc{\UU}{{{\mathbb U}}}
\nc{\EE}{{{\mathbb E}}}
\nc{\id}{{\operatorname{id}}}
\nc{\CHSH}{{\operatorname{CHSH}}}
\nc{\be}{\begin{equation}}
\nc{\ee}{{\end{equation}}}
\nc{\bea}{\begin{eqnarray}}
\nc{\eea}{\end{eqnarray}}
\nc{\Hom}[2]{\mbox{Hom}(\CC^{#1},\CC^{#2})}
\nc{\rU}{\mbox{U}}
\nc{\ob}[1]{#1}
\nc{\SEP}{{\text{SEP}}}
\nc{\NS}{{\text{NS}}}
\nc{\LOCC}{{\operatorname{LOCC}}}
\nc{\PPT}{{\operatorname{PPT}}}
\nc{\EXT}{{\text{EXT}}}
\nc{\Sym}{{\operatorname{Sym}}}
\nc{\HH}{\mathbb{H}}
\nc{\ERLO}{{E_{\text{r,LO}}}}
\nc{\ERLOCC}{{E_{\text{r,LOCC}}}}
\nc{\ERPPT}{{E_{\text{r,PPT}}}}
\nc{\ERLOCCinfty}{{E^{\infty}_{\text{r,LOCC}}}}
\nc{\Aram}{{\operatorname{\sf A}}}
\rnc{\bar}{\;\rule{0pt}{9.5pt}\right|\;}
\nc{\lset}{\left\{\left.}
\nc{\rset}{\right\}}
\nc{\lsetr}{\left\{}
\nc{\rsetr}{\right.\right\}}
\nc{\barr}{\left|\rule{0pt}{9.5pt}\;}
\let\id\1
\nc{\norm}[2]{\left\lVert#1\right\rVert_{#2\!}}
\nc{\lnorm}[2]{\left\lVert#1\right\rVert_{\ell_{#2}}}
\nc{\EPPT}{{E_{\operatorname{PPT}}}}
\nc{\EPPTone}{{E_{\operatorname{PPT}}^{(1)}}}
\nc{\EK}{{E_{\kappa}}}
\begin{document}
\title{Exact entanglement cost of quantum states and channels under positive-partial-transpose-preserving operations}
 \author{Xin Wang}
 \email{wangxin73@baidu.com}
 \affiliation{Institute for Quantum Computing, Baidu Research, Beijing 100093, China}
  \affiliation{Joint Center for Quantum Information and Computer Science, University of Maryland, College Park, Maryland 20742, USA}

 \author{Mark M. Wilde}
\email{wilde@cornell.edu}
\affiliation{Hearne Institute for Theoretical Physics, Department of Physics and Astronomy,
Center for Computation and Technology, Louisiana State University, Baton Rouge, Louisiana 70803, USA}
\affiliation{School of Electrical and Computer
  Engineering, Cornell University, Ithaca, New York 14850, USA}
  
  %[Wang and Wilde, Physical Review Letters, 125(4), 040502]
\begin{abstract} 
This paper establishes single-letter formulas for the exact entanglement cost of simulating quantum channels under free quantum operations that completely preserve positivity of the partial transpose (PPT).
First, we introduce the $\kappa$-entanglement measure for point-to-point quantum channels, based on the idea of the $\kappa$-entanglement of bipartite states, and we establish several fundamental properties for it, including amortization collapse, monotonicity under PPT superchannels, additivity, normalization, faithfulness, and non-convexity.
Second, we introduce and solve the exact entanglement cost for simulating quantum channels in both the parallel and sequential settings, along with the assistance of free PPT-preserving operations. In particular, we establish that the entanglement cost in both cases is given by the same single-letter formula, the $\kappa$-entanglement measure of a quantum channel. We further show that this cost is equal to the largest $\kappa$-entanglement that can be shared or generated by the sender and receiver of the channel. This formula is calculable by a semidefinite program, thus allowing for an efficiently computable solution for general quantum channels.
Noting that the sequential regime is more powerful than the parallel regime, another notable implication of our result is that both regimes have the same power for exact quantum channel simulation, when PPT superchannels are free. For several basic Gaussian quantum channels, we show that the exact entanglement cost is given by the Holevo--Werner formula [Holevo and Werner, \textit{Phys.~Rev.~A} 63, 032312 (2001)], giving an operational meaning of the Holevo-Werner quantity for these channels.
\end{abstract}  

\date{\today}
\maketitle

%Memo of updates (the revision is ongoing, to be finished by
%Dec 5.)
%\begin{itemize}
%\item revised the abs
%\item revised the intro
%\item revising the part of state results (try to distill the main results and clarify the relationship with our PRL paper.)
%
%\item will revise the channel part on Dec 01
%\end{itemize}

%%%%%%%%%%%%%%%%%%%

\tableofcontents
\section{Introduction}

\subsection{Background}

Quantum entanglement, the most nonclassical manifestation of quantum mechanics,
has found use in a variety of physical tasks in quantum information processing, quantum cryptography, thermodynamics, and quantum computing \cite{Horodecki2009a}. 
A natural and fundamental problem is to develop a theoretical framework to quantify and describe it. In spite of remarkable recent progress in the resource theory of entanglement (for reviews see, e.g.,  \cite{Plenio2007, Horodecki2009a}),  many fundamental challenges have remained open.

One of the most important aspects of the resource theory of entanglement consists of the interconversions of states, with respect to a class of free operations. In particular, the problem of \emph{entanglement dilution} \cite{BBPS96} asks: 
given a target bipartite state $\rho_{AB}$ and a canonical unit of entanglement represented by the Bell state (or ebit) $\Phi_2\equiv |\Phi_2\rangle\!\langle \Phi_2|$, where $\ket{\Phi_2}=(\ket{00}+\ket{11})/\sqrt2$, what is the minimum rate at which we can produce copies of $\rho_{AB}$ from copies of $\Phi_2$ under a chosen set of free operations? 

The \emph{entanglement cost} \cite{Bennett1996c} was introduced to quantify the minimal rate $R$ of converting $\Phi^{\ox nR}_2$ to  $\rho_{AB}^{\ox n}$ with an arbitrarily high fidelity in the limit as $n$ becomes  large. When local operations and classical communication (LOCC) are allowed for free, the authors of \cite{Hayden2001} proved that the entanglement cost is equal to the regularized entanglement of formation~\cite{Bennett1996c}. When the free operations consist of  quantum operations that completely preserve positivity of the partial transpose (the PPT-preserving operations of \cite{R99,R01}), it is known that the entanglement cost is not equal to the regularized entanglement of formation \cite{Audenaert2003,F03,H06book}.

The \textit{exact entanglement cost} \cite{Audenaert2003} is an alternative and natural way to quantify the cost of entanglement dilution, being defined as the smallest asymptotic rate $R$ at which $\Phi_2^{\ox nR}$ is required in order to reproduce $\rho_{AB}^{\ox n}$ exactly. 
\update{The exact entanglement cost under PPT-preserving operations (PPT entanglement cost) was introduced and solved for a large class of quantum states in \cite{Audenaert2003}, but it has hitherto remained unknown for general quantum states until the recent solutions in \cite{WW18,WW20} (note that \cite{WW20} is a companion paper of the original announcement in \cite{WW18}).}

The above resource-theoretic problems can alternatively be phrased as simulation problems: How many copies of $\Phi_2$ are needed to simulate $n$ copies of a given bipartite state $\rho_{AB}$? As discussed above, the simulation can be either approximate, such that a verifier has little chance of distinguishing the simulation from the ideal case, while it can also be exact, such that a verifier has no chance at all for  distinguishing the simulation from the ideal case.

With this perspective, it is also natural to consider the simulation of a quantum channel, when allowing some set of operations for free and metering the entanglement cost of the simulation. The authors of \cite{BBCW13} defined the entanglement cost of a channel to be the smallest rate $R$ at which  $\Phi_2^{\ox nR}$ is needed, along with the free assistance of LOCC, in order to simulate the channel $\mathcal{N}^{\ox n}$, in such a way that a verifier would have little chance of distinguishing the simulation from the ideal case of $\mathcal{N}^{\ox n}$. In \cite{BBCW13}, it was shown that the regularized entanglement of formation of the channel is equal to its entanglement cost, thus extending the result of \cite{Hayden2001} in a natural way.

In a recent work \cite{Wilde2018}, it was observed that the channel simulation task defined in \cite{BBCW13} is actually a particular kind of simulation, called a parallel channel simulation. The paper \cite{Wilde2018} then defined an alternative notion of channel simulation, called sequential channel simulation, in which the goal is to simulate $n$ uses of the channel $\mathcal{N}$ in such a way that the most general verification strategy would have little chance of distinguishing the simulation from the ideal $n$ uses of the channel. Although a general formula for the entanglement cost in this scenario was not found, it was determined for several key channel models, including erasure, dephasing, three-dimensional Holevo--Werner, and single-mode pure-loss and pure-amplifier bosonic Gaussian channels.

\subsection{Summary of results}

% In this paper, we solve significant questions in the resource theory of entanglement, one of which has remained open since the inception of entanglement theory over two decades ago. Namely, we prove that the exact PPT-entanglement cost for both quantum states and channels have efficiently computable, single-letter formulas, reflecting the fundamental entanglement structures of bipartite quantum states and channels. Notably, this is the first time that an entanglement measure has been shown to be both efficiently computable and to possess a direct operational meaning. Furthermore, we prove that the exact parallel and sequential entanglement costs of quantum channels are given by the same efficiently computable, single-letter formula.

\update{In this paper, we solve significant questions in the resource theory of entanglement, one of which has remained open since the inception of entanglement theory over two decades ago. Namely, we prove that the exact PPT-entanglement cost for quantum channels  has an efficiently computable, single-letter formula, reflecting the fundamental entanglement structure of bipartite quantum states and channels. Along with this claim, we prove that the exact parallel and sequential entanglement costs of quantum channels are given by the same efficiently computable, single-letter formula.}

We note here that all of our results apply to the resource theory of NPT (non-positive partial transpose) entanglement, introduced in \cite{R99,R01} and considered in \cite{Audenaert2003}, rather than to the more standard resource theory of entanglement, as introduced in \cite{Bennett1996c}. The key difference is that the free operations allowed here are completely PPT-preserving (C-PPT-P) operations, whereas the free operations allowed in the standard resource theory are LOCC. Since LOCC is contained in the set of C-PPT-P operations, the operational quantities considered here provide bounds on operational quantities in the standard resource theory.

\update{Our paper is structured as follows. We first introduce the $\kappa$-entanglement measure of a bipartite state and review its desirable properties \footnotemark[1]\footnotetext[1]{Note that $\kappa$-entanglement of quantum states was first introduced and proved to be equal to the exact entanglement cost in the original arXiv version of this paper in 2018 \cite{WW18}, and the related result was published in the companion paper~\cite{WW20}.}, including monotonicity under completely-PPT-preserving channels, additivity, normalization, faithfulness, non-convexity, and non-monogamy. For finite-dimensional states, it is also efficiently computable by means of a semi-definite program. In particular, the $\kappa$-entanglement is equal to the exact entanglement cost of a quantum state. We further evaluate the $\kappa$-entanglement (and the exact entanglement cost) for several bipartite states of interest (cf.~Section \ref{sec:examples-states}), including isotropic states, Werner states, maximally correlated states, some states supported on the $3\times 3$ antisymmetric subspace, and all bosonic Gaussian states.}

% Next, we prove that the $\kappa$-entanglement is equal to the exact entanglement cost of a quantum state. This direct operational interpretation  and the fact that both convexity and monogamy are violated for the $\kappa$-entanglement measure calls into question whether these properties are truly necessary for entanglement. 

In Section~\ref{sec:kappa channel}, we extend the $\kappa$-entanglement measure from bipartite states to point-to-point quantum channels. We prove that it also satisfies several desirable properties, including non-increase under amortization, monotonicity under a class of PPT superchannels, additivity, normalization, faithfulness, and non-convexity. For finite-dimensional channels, it is also efficiently computable by means of a semi-definite program.

In Section~\ref{sec:cost channel}, we prove that the $\kappa$-entanglement of channels has a direct operational meaning as the entanglement cost of both parallel and sequential channel simulation. Thus, the theory of channel simulation significantly simplifies for the setting in which completely-PPT-preserving channels are allowed for free. In addition to all of the properties that it satisfies, this operational interpretation solidifies the $\kappa$-entanglement of a channel as a foundational measure of the entanglement of a quantum channel.

As a last contribution of this paper (cf.,~Sections~\ref{sec:examples of channels} and \ref{sec:gaussian-channels}), we evaluate the $\kappa$-entanglement (and exact entanglement cost) of several important channel models, including erasure, depolarizing, dephasing, and amplitude damping channels. We also leverage recent results in the literature \cite{LMGA17}, regarding the teleportation simulation of bosonic Gaussian channels, in order to evaluate the $\kappa$-entanglement and exact entanglement cost for several fundamental bosonic Gaussian channels. We remark that these latter results provide a direct operational interpretation of the Holevo--Werner quantity \cite{HW01} for these channels.

Finally, we conclude with a summary and some open questions.

%%%%%%%%%%%%%%%%%%%%%%%%%%%%%

\section{$\kappa$-entanglement measure and exact entanglement cost of quantum states}

\subsection{$\kappa$-entanglement measure and its operational meaning}

\update{We first recall an entanglement measure called the $\kappa$-entanglement measure for a bipartite state, which was introduced and analyzed in the original arXiv version of this paper in 2018 \cite{WW18} and published in the companion paper~\cite{WW20}. Here, we review the  important properties of this entanglement measure and its operational meaning as the exact entanglement cost.}

\begin{definition}[$\kappa$-entanglement measure \cite{WW20}]
\label{def:kappa-ent}
Let $\rho_{AB}$ be a bipartite state acting on a separable Hilbert space.  The $\kappa$-entanglement measure is defined as follows:
\begin{equation}
\label{eq:a prime}
E_\kappa(\rho_{AB})\coloneqq \inf_{S_{AB}\ge 0} \{ \log_2 \tr S_{AB} : -S_{AB}^{T_B}\le\rho_{AB}^{T_B}\le S_{AB}^{T_B} \}.
\end{equation}
\end{definition}

In the case that the state $\rho_{AB}$ acts on a finite-dimensional Hilbert space, then $E_\kappa(\rho_{AB})$ is calculable by a semi-definite program, and it is thus efficiently computable with respect to the dimension of the Hilbert space. Throughout this paper, we consider completely-PPT-preserving operations \cite{R99,R01}, defined as a bipartite operation $\mathcal{P}_{AB\to A'B'}$ (completely positive map) such that the map $T_{B'}\circ \mathcal{P}_{AB\to A'B'} \circ T_B$ is also completely positive, where $T_B$ and $T_{B'}$ denote the partial transpose map acting on the input system $B$ and the output system $B'$, respectively. If $\mathcal{P}_{AB\to A'B'}$ is also trace preserving, such that it is a quantum channel, and $T_{B'}\circ \mathcal{P}_{AB\to A'B'} \circ T_B$ is also completely positive, then we say that $\mathcal{P}_{AB\to A'B'}$ is a completely-PPT-preserving channel.

\textbf{Monotonicity under completely-PPT-preserving channels.}
The most important property of the $\kappa$-entanglement measure is that it does not increase under the action of a completely-PPT-preserving channel. Note that an LOCC channel \cite{Bennett1996c,CLMOW14}, as considered in entanglement theory, is a special kind of 
completely-PPT-preserving channel, as observed in \cite{R99,R01}. 
 \begin{theorem}
 [Monotonicity \cite{WW20}]
 \label{prop:ent-monotone}
Let $\rho_{AB}$ be a quantum state acting on a separable Hilbert
space, and let $\{\mathcal{P}_{AB\rightarrow A^{\prime}B^{\prime}}^{x}\}_{x}$
be a set of completely positive, trace non-increasing maps that are each completely
PPT-preserving, such that the sum map $\sum_{x}\mathcal{P}_{AB\rightarrow
A^{\prime}B^{\prime}}^{x}$ is quantum channel. Then the following entanglement
monotonicity inequality holds%
\begin{equation}
E_{\kappa}(\rho_{AB})\geq\sum_{x \, : \, p(x) > 0}p(x)E_{\kappa}\!\left(  \frac{\mathcal{P}%
_{AB\rightarrow A^{\prime}B^{\prime}}^{x}(\rho_{AB})}{p(x)}\right)
,\label{eq:mono-selective-E-kappa}%
\end{equation}
where $p(x)\coloneqq\operatorname{Tr}\mathcal{P}_{AB\rightarrow A^{\prime}B^{\prime}%
}^{x}(\rho_{AB})$. In particular, for a
completely-PPT-preserving quantum channel $\mathcal{P}_{AB\rightarrow
A^{\prime}B^{\prime}}$, the following inequality holds%
\begin{equation}
E_{\kappa}(\rho_{AB})\geq E_{\kappa}\!\left(  \mathcal{P}_{AB\rightarrow
A^{\prime}B^{\prime}}(\rho_{AB})\right)
.\label{eq:mono-non-selective-e-kappa}%
\end{equation}
\end{theorem}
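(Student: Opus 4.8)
The plan is to transport a feasible point of the semidefinite program defining $E_\kappa(\rho_{AB})$ through each of the maps $\mathcal{P}^x$ and to check that the images are feasible for the programs appearing on the right-hand side of \eqref{eq:mono-selective-E-kappa}. The key observation is that, by the very definition of a completely-PPT-preserving map, the composite $\widetilde{\mathcal{P}}^x \coloneqq T_{B'}\circ\mathcal{P}^x_{AB\to A'B'}\circ T_B$ is completely positive, and in particular positive, so it preserves the positive semidefinite order on Hermitian operators.

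First I would fix an arbitrary $S_{AB}\geq 0$ satisfying $-S_{AB}^{T_B}\leq\rho_{AB}^{T_B}\leq S_{AB}^{T_B}$ and apply the positive map $\widetilde{\mathcal{P}}^x$ to both halves of this operator inequality. Using that the partial transpose squares to the identity, one has $\widetilde{\mathcal{P}}^x(\rho_{AB}^{T_B}) = (\mathcal{P}^x_{AB\to A'B'}(\rho_{AB}))^{T_{B'}}$ and likewise for $S_{AB}$, so the inequality turns into $-(\mathcal{P}^x(S_{AB}))^{T_{B'}}\leq(\mathcal{P}^x(\rho_{AB}))^{T_{B'}}\leq(\mathcal{P}^x(S_{AB}))^{T_{B'}}$; moreover $\mathcal{P}^x(S_{AB})\geq 0$ since $\mathcal{P}^x$ is completely positive. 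Dividing through by $p(x)>0$ shows that $\mathcal{P}^x(S_{AB})/p(x)$ is feasible for the program defining $E_\kappa(\mathcal{P}^x(\rho_{AB})/p(x))$, whence $E_\kappa(\mathcal{P}^x(\rho_{AB})/p(x))\leq\log(\tr\mathcal{P}^x(S_{AB})/p(x))$.

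Next I would multiply by $p(x)$, sum over those $x$ with $p(x)>0$, and invoke concavity of the logarithm (equivalently the log-sum inequality), using that $\{p(x)\}_x$ is a probability distribution because $\sum_x\mathcal{P}^x$ is trace preserving. This gives $\sum_{x\,:\,p(x)>0}p(x)\,E_\kappa(\mathcal{P}^x(\rho_{AB})/p(x)) \leq \log\sum_x\tr\mathcal{P}^x(S_{AB}) = \log\tr((\sum_x\mathcal{P}^x)(S_{AB})) = \log\tr S_{AB}$, where the final equality again uses that $\sum_x\mathcal{P}^x$ is a channel. Taking the infimum over all feasible $S_{AB}$ yields \eqref{eq:mono-selective-E-kappa}. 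The non-selective statement \eqref{eq:mono-non-selective-e-kappa} then follows by specializing to the single completely-PPT-preserving channel $\mathcal{P}_{AB\to A'B'}$, for which trace preservation directly gives $\tr\mathcal{P}(S_{AB})=\tr S_{AB}$ and no Jensen step is required.

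I do not anticipate an essential obstacle: the whole content is that completely-PPT-preserving maps act as positive maps in the partially transposed picture in which $E_\kappa$ naturally lives. The only points that deserve a little care are the bookkeeping with the normalization $p(x)$ together with the Jensen step, the use of the involution property when moving the partial transpose past $\mathcal{P}^x$, and — since we allow states on separable, possibly infinite-dimensional, Hilbert spaces — working with a near-optimal $S_{AB}$ and passing to a limit, so that attainment of the infimum is never needed.
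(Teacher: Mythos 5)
Your proposal is correct and follows essentially the same route as the paper: apply the completely positive map $T_{B'}\circ\mathcal{P}^x_{AB\to A'B'}\circ T_B$ to a feasible $S_{AB}$, observe that $\mathcal{P}^x(S_{AB})/p(x)$ is feasible for $E_\kappa(\mathcal{P}^x(\rho_{AB})/p(x))$, and then combine trace preservation of the sum map with concavity of the logarithm before taking the infimum over $S_{AB}$. No gaps.
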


\textbf{Dual representation and additivity.}
The optimization problem dual to $E_\kappa(\rho_{AB})$ in Definition~\ref{def:kappa-ent} is as follows:
\begin{multline}
\label{eq:a dual}
E^{\text{dual}}_\kappa(\rho_{AB}) \coloneqq
\sup_{V_{AB}^{T_B},\, W_{AB}^{T_B}\ge 0} \{ \log_2 \tr \rho_{AB}(V_{AB}-W_{AB}):\\
 V_{AB}+W_{AB}\le \1_{AB} \},
\end{multline}
which can be found by the Lagrange multiplier method (see, e.g., \cite[Section~1.2.2]{Watrous2011b}).
By weak duality \cite[Section~1.2.2]{Watrous2011b}, we have for every bipartite state $\rho_{AB}$ acting on a separable Hilbert space that
\begin{equation}
E^{\text{dual}}_\kappa(\rho_{AB})\leq E_\kappa(\rho_{AB}) .
\label{eq:weak-dual-kappa}
\end{equation}
For all finite-dimensional states $\rho_{AB}$, strong duality holds, so that
\begin{equation}
E_\kappa(\rho_{AB}) = E^{\text{dual}}_\kappa(\rho_{AB}).
\label{eq:strong-dual-e-kappa}
\end{equation}
This follows as a consequence of Slater's theorem. By employing the strong duality equality in \eqref{eq:strong-dual-e-kappa} for the finite-dimensional case, along with the approach from \cite{FAR11}, we conclude that the following equality holds for all bipartite states $\rho_{AB}$ acting on a separable Hilbert space:
\begin{equation}
E_\kappa(\rho_{AB}) = E^{\text{dual}}_\kappa(\rho_{AB}).
\label{eq:strong-dual-e-kappa-inf-dim}
\end{equation}
We provide an explicit proof of \eqref{eq:strong-dual-e-kappa-inf-dim} in Appendix~\ref{app:kappa-to-dual-infty}.
Both the primal and dual SDPs for $E_\kappa$ are important, as the combination of them allows for proving the following additivity of $E_\kappa$ with respect to tensor-product states.
\begin{proposition}[Additivity \cite{WW20}]\label{lemma:add}
For all bipartite states $\rho_{AB}$ and $\omega_{A'B'}$ acting on separable Hilbert spaces, the following additivity identity holds
	\begin{equation}
	E_\kappa(\rho_{AB}\ox\omega_{A'B'})
	=E_\kappa(\rho_{AB})+E_\kappa(\omega_{A'B'}).
	\label{eq:additivity-e-kappa-states}
	\end{equation}
\end{proposition}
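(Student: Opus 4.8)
The plan is to prove the two inequalities $E_\kappa(\rho_{AB}\ox\omega_{A'B'}) \le E_\kappa(\rho_{AB}) + E_\kappa(\omega_{A'B'})$ (subadditivity) and $E_\kappa(\rho_{AB}\ox\omega_{A'B'}) \ge E_\kappa(\rho_{AB}) + E_\kappa(\omega_{A'B'})$ (superadditivity) separately. For subadditivity I would work with the primal SDP of Definition~\ref{def:kappa-ent}; for superadditivity I would work with the dual SDP~\eqref{eq:a dual}, invoking the strong-duality identity~\eqref{eq:strong-dual-e-kappa-inf-dim}. This split is precisely why both representations of $E_\kappa$ are needed, as anticipated in the paragraph preceding the proposition.

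\emph{Subadditivity.} Let $S_{AB}$ be feasible for $E_\kappa(\rho_{AB})$ and $S'_{A'B'}$ feasible for $E_\kappa(\omega_{A'B'})$. I claim $S_{AB}\ox S'_{A'B'}$ is feasible for $E_\kappa(\rho_{AB}\ox\omega_{A'B'})$. Positivity is immediate since $S_{AB},S'_{A'B'}\ge 0$. Setting $X\coloneqq S_{AB}^{T_B}$, $Y\coloneqq \rho_{AB}^{T_B}$, $X'\coloneqq (S'_{A'B'})^{T_{B'}}$, $Y'\coloneqq \omega_{A'B'}^{T_{B'}}$, feasibility gives $X\pm Y\ge0$ and $X'\pm Y'\ge0$, and from the operator identities
\begin{align}
X\ox X' - Y\ox Y' & = \tfrac{1}{2}\left[(X-Y)\ox(X'+Y') + (X+Y)\ox(X'-Y')\right], \\
X\ox X' + Y\ox Y' & = \tfrac{1}{2}\left[(X+Y)\ox(X'+Y') + (X-Y)\ox(X'-Y')\right]
\end{align}
each right-hand side is a sum of tensor products of positive semidefinite operators, hence positive semidefinite. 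Since $(S_{AB}\ox S'_{A'B'})^{T_{BB'}}=X\ox X'$ and $(\rho_{AB}\ox\omega_{A'B'})^{T_{BB'}}=Y\ox Y'$, this says precisely that $S_{AB}\ox S'_{A'B'}$ obeys the partial-transpose constraints. As $\log\tr(S_{AB}\ox S'_{A'B'}) = \log\tr S_{AB}+\log\tr S'_{A'B'}$, taking the infimum over $S_{AB}$ and then over $S'_{A'B'}$ yields subadditivity.

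\emph{Superadditivity.} By~\eqref{eq:strong-dual-e-kappa-inf-dim} it suffices to show $E^{\text{dual}}_\kappa(\rho_{AB}\ox\omega_{A'B'})\ge E^{\text{dual}}_\kappa(\rho_{AB}) + E^{\text{dual}}_\kappa(\omega_{A'B'})$. Pick (near-)optimal dual solutions $(V_{AB},W_{AB})$ for $\rho_{AB}$ and $(V'_{A'B'},W'_{A'B'})$ for $\omega_{A'B'}$, and form $\widetilde V\coloneqq V_{AB}\ox V'_{A'B'}+W_{AB}\ox W'_{A'B'}$ and $\widetilde W\coloneqq V_{AB}\ox W'_{A'B'}+W_{AB}\ox V'_{A'B'}$. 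Then $\widetilde V-\widetilde W=(V_{AB}-W_{AB})\ox(V'_{A'B'}-W'_{A'B'})$, so the dual objective multiplies, $\tr[(\rho_{AB}\ox\omega_{A'B'})(\widetilde V-\widetilde W)] = \tr[\rho_{AB}(V_{AB}-W_{AB})]\cdot\tr[\omega_{A'B'}(V'_{A'B'}-W'_{A'B'})]$; moreover $\widetilde V^{T_{BB'}} = V_{AB}^{T_B}\ox (V'_{A'B'})^{T_{B'}} + W_{AB}^{T_B}\ox (W'_{A'B'})^{T_{B'}}\ge0$ and likewise $\widetilde W^{T_{BB'}}\ge0$, using the feasibility conditions $V_{AB}^{T_B},W_{AB}^{T_B},(V'_{A'B'})^{T_{B'}},(W'_{A'B'})^{T_{B'}}\ge0$. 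The one constraint that does not pass to the tensor product for free is $\widetilde V+\widetilde W=(V_{AB}+W_{AB})\ox(V'_{A'B'}+W'_{A'B'})\le\1$: one cannot merely invoke $V_{AB}+W_{AB}\le\1$ and $V'_{A'B'}+W'_{A'B'}\le\1$, because the dual variables need not be positive semidefinite (only their partial transposes are), so $V_{AB}+W_{AB}$ may violate $\lVert V_{AB}+W_{AB}\rVert_\infty\le1$. To repair this I would either (i) first reduce, using complementary slackness with a primal optimum, to dual optimizers in the normal form $0\le V_{AB}+W_{AB}\le\1$ (and similarly for the primed pair), which makes $\widetilde V+\widetilde W\le\1$ automatic; or (ii) bypass the dual tensor construction and argue on the primal side: from a primal-optimal $S_{ABA'B'}$ for $\rho_{AB}\ox\omega_{A'B'}$ and the primal-optimal $S'_{A'B'}$ for $\omega_{A'B'}$ — which is automatically PPT, since any primal-feasible $S'_{A'B'}$ obeys $-(S'_{A'B'})^{T_{B'}}\le \omega_{A'B'}^{T_{B'}}\le (S'_{A'B'})^{T_{B'}}$ and hence $(S'_{A'B'})^{T_{B'}}\ge0$ — construct, by tracing out $A'B'$ against a suitable operator built from $(V'_{A'B'},W'_{A'B'})$ and $S'_{A'B'}$, a primal-feasible solution for $E_\kappa(\rho_{AB})$ whose trace is at most $\tr S_{ABA'B'}$ divided by $\exp E_\kappa(\omega_{A'B'})$.

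\emph{Main obstacle.} I expect the crux to lie entirely in the superadditivity step, and specifically in reconciling the positivity and partial-transpose constraints with the operator-norm constraint $\,\cdot\le\1$ under the tensor product. Because the dual variables of $E_\kappa$ are not individually positive semidefinite, naive tensoring breaks this last constraint, and dealing with it — via a normal-form reduction on the dual optimizers, or via a primal-side construction that exploits the automatic PPT property of primal-feasible operators noted above — is the one nonroutine ingredient; subadditivity and the bookkeeping with the dual objective are straightforward.
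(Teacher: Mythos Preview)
Your construction is exactly the paper's: subadditivity by tensoring primal-feasible $S$'s, and superadditivity by setting $\widetilde V=V^1\otimes V^2+W^1\otimes W^2$, $\widetilde W=V^1\otimes W^2+W^1\otimes V^2$ in the dual. The paper does not treat the constraint $\widetilde V+\widetilde W=(V^1+W^1)\otimes(V^2+W^2)\le\1$ as an obstacle at all---it simply writes ``one can verify from~\eqref{eq:a dual}'' that all three feasibility conditions hold and moves on. So the step you single out as the ``main obstacle'' is one on which the paper is silent, not one where your argument diverges from it.

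Your instinct that this step is nontrivial is in fact correct: the inequality $(V^1+W^1)\otimes(V^2+W^2)\le\1$ does \emph{not} follow for arbitrary dual-feasible pairs. On $\CC^3\otimes\CC^3$, take $V=\1-3\Phi^3$ and $W=0$; then $V^{T_B}=\1-F=2\Pi^{\mathcal A}\ge0$ and $V+W\le\1$, yet $V+W$ has eigenvalue $-2$ on $\Phi^3$, so $(V+W)^{\otimes 2}$ has eigenvalue $4$. However, your proposed repair~(i) also fails as stated: forcing $V+W=\1$ collapses the dual to $\sup\{\tr\rho^{T_B}\tilde U:-\1\le\tilde U\le\1\}=\lVert\rho^{T_B}\rVert_1$, i.e.\ to the log-negativity $E_N$, which by~\eqref{eq:strict-APE} is strictly below $E_\kappa$ in general; and the weaker restriction $0\le V+W\le\1$ amounts to demanding that the auxiliary $T^{T_B}$ be PPT, which is not obviously lossless at the optimum. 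Repair~(ii) is only sketched. In short, your proof matches the paper's line for line up to this point, the point is genuinely delicate, and neither the paper's assertion nor your suggested fixes resolve it as written.
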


\textbf{Relation to logarithmic negativity.}
There is an inequality  relating $E_\kappa$ to the logarithmic negativity \cite{Vidal2002,Plenio2005b}, defined as
 \begin{equation}
  E_{N}(\rho_{AB})\coloneqq\log_2\left \Vert \rho_{AB}^{T_{B}}\right\Vert_1.
  \label{eq:log-neg}
\end{equation}
%\begin{proposition}[see also~\cite{WW20}]
%\label{prop:connect-to-log-neg}
Let $\rho_{AB}$ be a  bipartite state acting on a separable Hilbert space.
Then%
\begin{equation}
E_{\kappa}(\rho_{AB})\geq E_{N}(\rho_{AB}%
).\label{eq:kappa-greater-than-log-neg}%
\end{equation}
If $\rho_{AB}$ satisfies the binegativity condition
\begin{equation}
|\rho_{AB}^{T_B}|^{T_B} \geq 0,
\label{eq:bi-neg-cond}
\end{equation}
then
\begin{equation}
E_{\kappa}(\rho_{AB})= E_{N}(\rho_{AB}%
).\label{eq:kappa-equal-log-neg}%
\end{equation}

\textbf{Normalization.} $E_\kappa$ is normalized on maximally entangled states, and for finite-dimensional states, it achieves its largest value on maximally entangled states.
\begin{proposition}
[Normalization \cite{WW20}]\label{prop:normalize MES}
Let $\Phi_{AB}^{M}$ be a maximally entangled state of Schmidt
rank~$M$. Then%
\begin{equation}\label{eq:normalize MES}
E_{\kappa}(\Phi_{AB}^{M})=\log_2 M.
\end{equation}
Furthermore, for every bipartite state $\rho_{AB}$, the following bound holds
\begin{equation}\label{eq:dim bound of EK}
E_{\kappa}(\rho_{AB})\le \log_2 \min\{d_A,  d_B\},
\end{equation}
where $d_A$ and $d_B$ denote the dimensions of systems $A$ and $B$, respectively.
\end{proposition}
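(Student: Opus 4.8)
The plan is to prove the two assertions separately: the normalization identity \eqref{eq:normalize MES} follows quickly from the comparison with logarithmic negativity in Proposition~\ref{prop:connect-to-log-neg}, while the dimension bound \eqref{eq:dim bound of EK} follows from teleportation together with the monotonicity inequality \eqref{eq:mono-non-selective-e-kappa}.

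For \eqref{eq:normalize MES}, I would first compute $(\Phi_{AB}^{M})^{T_B} = \tfrac1M F_M$, where $F_M = \sum_{i,j} \ketbra{ij}{ji}$ is the swap operator on $\CC^M \ox \CC^M$. Since $F_M^2 = \1_{AB}$, we get $|(\Phi_{AB}^{M})^{T_B}| = \tfrac1M \1_{AB}$, hence $\Vert (\Phi_{AB}^{M})^{T_B}\Vert_1 = \tfrac1M \cdot M^2 = M$, i.e.\ $E_N(\Phi_{AB}^{M}) = \log M$. Moreover $|(\Phi_{AB}^{M})^{T_B}|^{T_B} = \tfrac1M \1_{AB}^{T_B} = \tfrac1M \1_{AB} \ge 0$, so $\Phi_{AB}^{M}$ obeys the binegativity condition, and Proposition~\ref{prop:connect-to-log-neg} gives $E_\kappa(\Phi_{AB}^{M}) = E_N(\Phi_{AB}^{M}) = \log M$. (Alternatively, without invoking Proposition~\ref{prop:connect-to-log-neg}: $S_{AB} = \tfrac1M \1_{AB}$ is feasible in \eqref{eq:a prime} because $\Vert F_M\Vert_\infty = 1$, giving $E_\kappa(\Phi_{AB}^{M}) \le \log M$, and the matching lower bound is $E_\kappa(\Phi_{AB}^{M}) \ge E_N(\Phi_{AB}^{M}) = \log M$ from \eqref{eq:kappa-greater-than-log-neg}.)

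For \eqref{eq:dim bound of EK}, first note that $E_\kappa$ is unchanged under exchanging the roles of $A$ and $B$: applying the full transpose to the feasibility conditions in \eqref{eq:a prime} turns a feasible $S_{AB}$ for the $T_B$-definition into a feasible $S_{AB}^{T}$ for the $T_A$-definition with the same trace. Hence we may assume $d_B = \min\{d_A, d_B\}$. Now introduce $d_B$-dimensional ancillary systems $R$ (held by Alice) and $R'$, and consider the channel $\mathcal{T}_{RB \to AB}$ that (i) locally prepares $\rho_{AR'}$ on Alice's side and then (ii) teleports $R'$ to Bob through the input state on $RB$ (Bell measurement on $RR'$, classical communication of the outcome, Heisenberg--Weyl correction on $B$). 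This $\mathcal{T}$ is an LOCC channel, hence completely PPT-preserving, and exactness of $d_B$-dimensional teleportation gives $\mathcal{T}_{RB\to AB}(\Phi_{RB}^{d_B}) = \rho_{AB}$. Applying \eqref{eq:mono-non-selective-e-kappa} and the first part of the proposition then yields $E_\kappa(\rho_{AB}) \le E_\kappa(\Phi_{RB}^{d_B}) = \log d_B = \log\min\{d_A, d_B\}$.

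The spectral facts about $F_M$ and the standard form of the teleportation protocol are routine, so I do not anticipate a real obstacle. The only place that warrants a little care is checking that ``prepare $\rho_{AR'}$ locally, then teleport $R'$'' genuinely composes into an LOCC channel that outputs $\rho_{AB}$ on the correct bipartite cut (Alice holding $A$, Bob holding $B$), and noting the $A \leftrightarrow B$ symmetry of $E_\kappa$ so that one may always teleport the smaller system; a purely SDP-based alternative, constructing an explicit feasible $S_{AB}$ with $\tr S_{AB} = \min\{d_A,d_B\}$ from the operator identity underlying teleportation, is also available but less transparent.
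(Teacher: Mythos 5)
Your proposal is correct and follows essentially the same route as the paper: the normalization comes from the binegativity of $\Phi^{M}_{AB}$ together with Proposition~\ref{prop:connect-to-log-neg}, and the dimension bound from locally preparing the state, teleporting the smaller share through a maximally entangled state, and invoking the monotonicity inequality \eqref{eq:mono-non-selective-e-kappa} (the paper simply mirrors your construction, assuming $d_A\le d_B$ and having Bob prepare $\rho_{AB}$ and teleport the $A$ system to Alice). One small bookkeeping point in your $A\leftrightarrow B$ symmetry step: applying the full transpose to $-S_{AB}^{T_B}\le\rho_{AB}^{T_B}\le S_{AB}^{T_B}$ gives $-S_{AB}^{T_A}\le\rho_{AB}^{T_A}\le S_{AB}^{T_A}$, so it is the same operator $S_{AB}$ (not $S_{AB}^{T}$) that is feasible for the $T_A$-version of \eqref{eq:a prime}; alternatively, that lemma can be skipped entirely by choosing the teleportation direction according to which dimension is smaller.
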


\textbf{Faithfulness.} $E_\kappa$ is faithful, in the sense that it is non-negative and equal to zero if and only if the state is a PPT state. To be specific, the following proposition holds.
\begin{proposition}
[Faithfulness \cite{WW20}]
\label{prop:faithfulness}
For a state $\rho_{AB}$ acting on a separable Hilbert space, we
have that $E_{\kappa}(\rho_{AB})\geq0$ and $E_{\kappa}(\rho_{AB})=0$ if and
only if $\rho_{AB}^{T_{B}}\geq 0$.
\end{proposition}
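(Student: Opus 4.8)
The plan is to prove the two implications separately, both directly from the primal SDP in Definition~\ref{def:kappa-ent} together with the log-negativity bound of Proposition~\ref{prop:connect-to-log-neg}. For non-negativity, the cleanest route is to invoke \eqref{eq:kappa-greater-than-log-neg}: since $\rho_{AB}$ is a state, $\|\rho_{AB}^{T_B}\|_1 \geq \|\rho_{AB}\|_1 \cdot$ (using that partial transpose is trace-preserving and does not decrease trace norm, or more simply that $\operatorname{Tr}|\rho_{AB}^{T_B}| \geq |\operatorname{Tr}\rho_{AB}^{T_B}| = \operatorname{Tr}\rho_{AB} = 1$), hence $E_N(\rho_{AB}) = \log\|\rho_{AB}^{T_B}\|_1 \geq 0$, and therefore $E_\kappa(\rho_{AB}) \geq E_N(\rho_{AB}) \geq 0$. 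Alternatively one can argue directly: if $S_{AB}$ is feasible, then $S_{AB}^{T_B} \geq \rho_{AB}^{T_B}$ implies $\operatorname{Tr} S_{AB} = \operatorname{Tr} S_{AB}^{T_B} \geq \operatorname{Tr}\rho_{AB}^{T_B} = 1$, so $\log \operatorname{Tr} S_{AB} \geq 0$ for every feasible point. I would present this second argument, as it is self-contained and also sets up the equality case.

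For the ``if'' direction of the faithfulness equivalence: suppose $\rho_{AB}^{T_B} \geq 0$. Then I claim $S_{AB} = \rho_{AB}$ is feasible for \eqref{eq:a prime}. Indeed $\rho_{AB} \geq 0$, and the constraint $-S_{AB}^{T_B} \leq \rho_{AB}^{T_B} \leq S_{AB}^{T_B}$ becomes $-\rho_{AB}^{T_B} \leq \rho_{AB}^{T_B} \leq \rho_{AB}^{T_B}$; the right inequality is an equality, and the left one reads $\rho_{AB}^{T_B} \geq 0$, which holds by hypothesis. Hence $E_\kappa(\rho_{AB}) \leq \log\operatorname{Tr}\rho_{AB} = \log 1 = 0$. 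Combined with $E_\kappa(\rho_{AB}) \geq 0$ from the first part, this gives $E_\kappa(\rho_{AB}) = 0$.

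For the ``only if'' direction: suppose $E_\kappa(\rho_{AB}) = 0$. Then by \eqref{eq:kappa-greater-than-log-neg}, $0 = E_\kappa(\rho_{AB}) \geq E_N(\rho_{AB}) = \log\|\rho_{AB}^{T_B}\|_1 \geq 0$, forcing $\|\rho_{AB}^{T_B}\|_1 = 1 = \operatorname{Tr}\rho_{AB}^{T_B}$. The condition $\operatorname{Tr}|X| = \operatorname{Tr} X$ for a Hermitian operator $X$ holds if and only if $X \geq 0$ (the negative part must vanish), so $\rho_{AB}^{T_B} \geq 0$. The main subtlety to be careful about is the separable-Hilbert-space setting: $E_\kappa$ is defined via an infimum that need not be attained, so in the ``only if'' direction I should not assume a feasible optimizer exists, but the argument above via $E_N$ only uses the already-established inequality \eqref{eq:kappa-greater-than-log-neg} and a trace-norm fact valid for trace-class operators, so no attainability is needed. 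Likewise in the ``if'' direction the explicit feasible point $S_{AB} = \rho_{AB}$ is trace-class since $\rho_{AB}$ is a state, so the infimum is genuinely bounded above by $0$. The only mild obstacle is verifying that $\operatorname{Tr}|X| = \operatorname{Tr} X \iff X \geq 0$ holds for trace-class Hermitian $X$ on a separable Hilbert space, which follows from the spectral decomposition of compact self-adjoint operators.
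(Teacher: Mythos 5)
Your proposal is correct and takes essentially the same approach as the paper: the ``if'' direction uses the identical feasible point $S_{AB}=\rho_{AB}$, and the ``only if'' direction is the same argument via $E_\kappa \geq E_N \geq 0$ forcing $\|\rho_{AB}^{T_B}\|_1 = 1$ and hence a vanishing negative part. The only cosmetic difference is in non-negativity, where the paper plugs the dual feasible point $V_{AB}=\1_{AB}$, $W_{AB}=0$ into \eqref{eq:a dual} and invokes weak duality, whereas you bound the primal objective directly; both are fine.
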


% \begin{proof}
% To see that $E_{\kappa}(\rho_{AB})\geq0$, take $V_{AB}=\1_{AB}$ and $W_{AB}=0$
% in \eqref{eq:a dual}, so that $E_{\kappa}^{\text{dual}}(\rho_{AB})\geq0$. Then we conclude that
% $E_{\kappa}(\rho_{AB})\geq0$ from the weak duality inequality in \eqref{eq:weak-dual-kappa}.

% Now suppose that $\rho_{AB}^{T_{B}}\geq0$. Then we can set $S_{AB}=\rho_{AB}$ in \eqref{eq:a prime},
% so that the conditions $-S_{AB}^{T_B}\le\rho_{AB}^{T_B}\le S_{AB}^{T_B}$ and $  S_{AB}\ge 0$ are satisfied. Then $\operatorname{Tr}S_{AB}=1$, so
% that $E_{\kappa}(\rho_{AB})\leq0$. Combining with the fact that $E_{\kappa
% }(\rho_{AB})\geq0$ for all states, we conclude that $E_{\kappa}(\rho_{AB})=0$
% if $\rho_{AB}^{T_{B}}\geq0$.

% Finally, suppose that $E_{\kappa}(\rho_{AB})=0$. Then, by Proposition~\ref{prop:connect-to-log-neg}, $E_{N}(\rho_{AB})=0$, so
% that $\left\Vert \rho_{AB}^{T_{B}}\right\Vert _{1}=1$. Decomposing $\rho
% _{AB}^{T_{B}}$ into positive and negative parts as $\rho_{AB}^{T_{B}}=P-N$
% (such that $P,N\geq0$ and $PN=0$), we have that $1=\operatorname{Tr}\rho
% _{AB}=\operatorname{Tr}\rho_{AB}^{T_{B}}=\operatorname{Tr}P-\operatorname{Tr}%
% N$. But we also have by assumption that $1=\left\Vert \rho_{AB}^{T_{B}%
% }\right\Vert _{1}=\operatorname{Tr}P+\operatorname{Tr}N$. Subtracting these
% equations gives $\operatorname{Tr}N=0$, which implies that $N=0$. From this,
% we conclude that $\rho_{AB}^{T_{B}}=P\geq0$.
% \end{proof}

\textbf{No convexity.} The $\kappa$-entanglement measure is not generally convex. Due to \eqref{eq:kappa-equal-log-neg} and the fact that the binegativity condition in \eqref{eq:bi-neg-cond} holds for every two-qubit state \cite{Ishizaka2004a},  the non-convexity of $\EK$ boils down to  finding a two-qubit example for which the logarithmic negativity is not convex. In particular, let us choose the two-qubit states 
\begin{align}\label{state:no convex}
\rho_1=\Phi_2,\quad \rho_2=\frac{1}{2}(\proj{00}+\proj{11}), 
\end{align}
and their average $\rho=\frac{1}{2} (\rho_1+\rho_2)$. By direct calculation, we have 
\begin{align}
\EK(\rho) > \frac 1 2(\EK(\rho_1)+\EK(\rho_2)),
\end{align}
which implies that the $\kappa$-entanglement is not convex.

% \begin{proposition}[]
% \label{prop:no-convexity}
% The $\kappa$-entanglement measure is not generally convex.
% \end{proposition}
% \begin{proof}
% Due to Proposition~\ref{prop:connect-to-log-neg} and the fact that the binegativity is always positive for any two-qubit state \cite{Ishizaka2004a},  the non-convexity of $\EK$ boils down to  finding a two-qubit example for which the logarithmic negativity is not convex. In particular, let us choose the two-qubit states $\rho_1=\Phi_2$,  $\rho_2=\frac{1}{2}(\proj{00}+\proj{11})$, and their average $\rho=\frac{1}{2} (\rho_1+\rho_2)$.
% By direct calculation, we obtain 
% \begin{align}
% \EK(\rho_1)&=E_N(\rho_1)=1,\\
% \EK(\rho_2)&=E_N(\rho_2)=0,\\
% \EK(\rho)&=E_N(\rho)=\log_2 \frac{3}{2}.
% \end{align}
% %To see this, let us choose 
% %\begin{align}
% %\rho_1=\frac{1}{2}(\proj{v_1}+\proj{22}),  \rho_2=\proj{22}.
% %\end{align}
% %Suppose the average of them is $\rho=\frac{1}{2} (\rho_1+\rho_2)$. It can be solved that $\EK(\rho_1)=\log_2 \frac{3}{2}$, $\EK(\rho_2)=0$, and $\EK(\rho)=\log_2 \frac{5}{4}$. 
% Therefore, we have	
% \begin{align}
% \EK(\rho) 
% 		> \frac 1 2(\EK(\rho_1)+\EK(\rho_2)),
% \end{align}
% which concludes the proof.
% \end{proof}

\textbf{No monogamy.} If an entanglement measure $E$ is monogamous \cite{CKW00,T04,KWin04}, then the following inequality should be satisfied for every tripartite state $\rho_{ABC}$:
\begin{align}\label{eq:monogamy}
E(\rho_{AB})+E(\rho_{AC})\le E(\rho_{A(BC)}),
\end{align}
where the entanglement in $E(\rho_{A(BC)})$ is understood to be with respect to the bipartite cut between systems $A$ and $BC$.
It is known that some entanglement measures satisfy the monogamy inequality above \cite{CKW00,KWin04}.
However, the $\kappa$-entanglement measure is not generally monogamous. 
 Consider a state ${\ket\psi}\!{\bra \psi}_{ABC}$ of three qubits, where ${\ket\psi}_{ABC}=\frac{1}{2}(\ket{000}_{ABC}+\ket{011}_{ABC}+\sqrt 2 \ket {110}_{ABC})$.
Due the fact that $\ket\psi_{ABC}$ can be written as
 \begin{equation}
 \ket\psi_{ABC} = [\ket 0_A \otimes \ket \Phi_{BC} + \ket 1_A \otimes \ket{10}_{BC}]/\sqrt{2},
 \end{equation}
 where $\ket{ \Phi}_{BC} = [\ket{00}_{BC} + \ket{11}_{BC} ] / \sqrt{2}$, this state is locally equivalent to $\ket{\Phi}_{AB}\otimes \ket 0_C$ with respect to the bipartite cut $A|BC$.
 One then finds that $\EK(\psi_{A(BC)})=\EK(\Phi_{AB})=E_N(\Phi_{AB})=1$. Furthermore, we have that
 	$\EK(\psi_{AB})=E_N(\psi_{AB}) =\log_2 \frac 3 2$, and
 	$\EK(\psi_{AC})=E_N(\psi_{AC})= \log_2 \frac 3 2$,
 which implies that
 \begin{align}
 \EK(\psi_{AB})+\EK(\psi_{AC}) >\EK(\psi_{A(BC)}).
 \end{align}

\textbf{$\kappa$-entanglement measure is equal to the exact PPT-entanglement cost.}
The $\kappa$-entanglement of a bipartite state is  equal to its exact entanglement cost, when completely-PPT-preserving channels are allowed for free.  Let $\Omega$ represent a set of free channels, which can be either \text{LOCC} or \text{PPT}. 
The one-shot exact entanglement cost of a  state $\rho_{AB}$, under the $\Omega$ channels, is defined as
\begin{align}
E^{(1)}_{\O}(\rho_{AB})= \inf_{\Lambda\in \Omega}\left\{\log_2 d:   \rho_{AB}=\Lambda_{\hat{A}\hat{B}\to AB} (\Phi^{d}_{\hat{A}\hat{B}})\right\},
\end{align}
where  $\Phi^{d}_{\hat{A}\hat{B}}=[1/d]\sum_{i,j=1}^{d}\ketbra{ii}{jj}_{\hat{A}\hat{B}}$ represents the standard maximally entangled state of Schmidt rank~$d$. The exact entanglement cost of a bipartite state $\rho_{AB}$, under the $\Omega$ channels, is defined as
\begin{align}
E_{\O}(\rho_{AB})= \limsup_{n \to \infty} \frac{1}{n}E^{(1)}_{\O}(\rho_{AB}^{\ox n}).
\end{align}
The exact entanglement cost under LOCC channels was previously considered in \cite{N99,TH00,H06book,YC18}, while the exact entanglement cost under PPT channels was considered in \cite{Audenaert2003,Matthews2008}.

In \cite{Audenaert2003}, the following bounds were given for $\EPPT$:
\begin{align}
E_N(\rho_{AB})\le	\EPPT(\rho_{AB})\le \log_2  Z(\rho_{AB}),
\label{eq:ape-bnds}
\end{align}
the lower bound being the logarithmic negativity recalled in \eqref{eq:log-neg},
and the upper bound defined as
\begin{equation}
Z(\rho_{AB})\coloneqq \tr |\rho_{AB}^{T_B}| +\dim(\rho_{AB})\max\{0,-\lambda_{\min}(|\rho_{AB}^{T_B}|^{T_B})\}.
\end{equation}
Due to the presence of the dimension factor $\dim(\rho_{AB})$, the upper bound in \eqref{eq:ape-bnds} clearly only applies in the case that $\rho_{AB}$ is finite-dimensional.

In what follows, we first recast
$E^{(1)}_{\PPT}(\rho_{AB})$ as an optimization problem, by building on previous developments in \cite{Audenaert2003,Matthews2008}. After that, we bound
$E^{(1)}_{\PPT}(\rho_{AB})$
in terms of $E_\kappa$, by observing that $E_\kappa$ is a relaxation of the optimization problem for $E^{(1)}_{\PPT}(\rho_{AB})$. We then finally prove that $E_{\PPT}(\rho_{AB})$ is equal to $E_\kappa$. 

\begin{theorem}[\cite{WW20}]
\label{thm:exact-cost-states}
Let $\rho_{AB}$ be a bipartite state acting on a separable Hilbert space. Then the one-shot exact PPT-entanglement cost $E_{\operatorname{PPT}}^{(1)}%
(\rho_{AB})$ is given by the following
optimization:
\begin{multline}
E_{\operatorname{PPT}}^{(1)}(\rho_{AB})=\inf\big\{  \log_{2}m: \\
-\left(
m-1\right)  G_{AB}^{T_{B}}\leq\rho_{AB}^{T_{B}}\leq\left(  m+1\right)
G_{AB}^{T_{B}},\\ G_{AB}\geq0,\ \operatorname{Tr}G_{AB}=1\big\}  .
\label{eq:op-quantity-PPT-cost}
\end{multline}
\end{theorem}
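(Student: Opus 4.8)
The plan is to characterize $E^{(1)}_{\PPT}(\rho_{AB})$ directly from the definition, in which we seek a completely-PPT-preserving channel $\Lambda_{\hat A\hat B\to AB}$ carrying the maximally entangled state $\Phi^m_{\hat A\hat B}$ to $\rho_{AB}$, and to reduce the existence of such a channel to the operator inequalities displayed in \eqref{eq:op-quantity-PPT-cost}. This is essentially the argument of \cite{Audenaert2003,Matthews2008}, recast in the present notation. First I would recall that a completely-PPT-preserving channel $\Lambda$ acting on the input $\Phi^m_{\hat A\hat B}$ is, by the Choi--Jamio\l{}kowski correspondence, determined by its Choi operator, and that feeding in a maximally entangled state of rank $m$ effectively outputs (up to normalization $1/m$) the action of $\Lambda$ on $\1$ together with the transpose structure. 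Concretely, by teleportation-style reasoning one observes that there exists a completely-PPT-preserving channel $\Lambda$ with $\Lambda_{\hat A\hat B\to AB}(\Phi^m_{\hat A\hat B})=\rho_{AB}$ if and only if there exists an operator $G_{AB}\geq 0$ with $\tr G_{AB}=1$ (playing the role of the normalized output of $\Lambda$ on the ``identity'' part of the input, i.e.\ $G_{AB} = \Lambda_{\hat A \hat B \to AB}(\pi_{\hat A} \otimes \pi_{\hat B})$ appropriately interpreted) such that the PPT-preservation constraint translates into $-(m-1)G_{AB}^{T_B}\leq \rho_{AB}^{T_B}\leq (m+1)G_{AB}^{T_B}$. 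The trace-preservation and complete-positivity of $\Lambda$ give $G_{AB}\geq 0$ and $\tr G_{AB}=1$; the fact that $T_B\circ\Lambda\circ T_{\hat B}$ is completely positive is exactly what produces the two partial-transpose inequalities, since the projector onto $\Phi^m$ and its complement, under the partial transpose, decompose $\1$ into pieces with coefficients $\tfrac{m\pm 1}{2}$ relative to the normalized maximally entangled and maximally mixed components.

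The key steps, in order, are: (i) use the structure of PPT-preserving channels to write the constraint ``$\exists\,\Lambda$ PPT-preserving with $\Lambda(\Phi^m)=\rho_{AB}$'' in terms of the Choi operator $J_\Lambda$ of $\Lambda$, noting $J_\Lambda\geq 0$, $J_\Lambda^{T_B}\geq 0$ (from PPT-preservation, after suitable partial transposes on the reference systems), and the trace-preservation condition; (ii) contract $J_\Lambda$ against the input $\Phi^m$ to express $\rho_{AB}$ in terms of $J_\Lambda$, and identify the normalized operator $G_{AB}$ — this is the nontrivial algebraic heart, where one uses $F_{\hat A\hat B}^{T_{\hat B}} = m\,\Phi^m_{\hat A\hat B}$ and $(\Phi^m)^{T_{\hat B}}=\tfrac1m F$, so that $\1 = \tfrac{m+1}{?}\Pi^{\mathcal S} + \cdots$ combinations convert positivity of $J_\Lambda$ and $J_\Lambda^{T_B}$ into the claimed sandwich inequalities; (iii) conversely, given feasible $G_{AB}$ satisfying \eqref{eq:op-quantity-PPT-cost}, explicitly construct a completely-PPT-preserving channel $\Lambda$ achieving $\Lambda(\Phi^m)=\rho_{AB}$ — here one writes down a channel whose Choi operator is built from $G_{AB}$ and $\rho_{AB}$ (roughly $J_\Lambda \propto \Phi^m_{\hat A\hat B}\otimes(\text{something involving }\rho_{AB}) + (\1-\Phi^m)_{\hat A\hat B}\otimes(\text{something involving }G_{AB})$), and checks the three defining properties; (iv) conclude that $\inf\{\log_2 m : \ldots\}$ equals $E^{(1)}_{\PPT}(\rho_{AB})$, with the continuous relaxation over real $m$ versus integer $m$ handled as in \cite{Audenaert2003} by noting that enlarging $m$ only relaxes the constraints (and citing that the infimum over integers coincides with the infimum over reals by a monotonicity/padding argument).

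I would expect step (ii)--(iii) — the explicit translation between the PPT-preserving channel and the operator $G_{AB}$ in both directions — to be the main obstacle, since it requires carefully tracking several partial transposes (on $B$, on $\hat B$, and on the reference systems of the Choi isomorphism) and verifying that the complete-positivity of $T_B\circ\Lambda\circ T_{\hat B}$ corresponds precisely to $J_\Lambda^{T_B}\geq 0$ after the dust settles, which in turn yields the two-sided bound with the specific coefficients $m-1$ and $m+1$ rather than, say, $m$ on both sides. The appearance of the asymmetric coefficients is the subtle point and is exactly where the maximally entangled input (as opposed to a generic input) is used: the swap operator has eigenvalue $+1$ on the symmetric subspace of dimension $\binom{m+1}{2}$ and $-1$ on the antisymmetric subspace of dimension $\binom{m}{2}$, and these two subspace dimensions differ by $m$, which after normalization produces the $m\pm 1$ split. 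Everything else — trace conditions, the reduction from the one-shot quantity to the SDP form, and the passage between integer and real optimization — is routine bookkeeping that I would reference to the prior literature rather than belabor.
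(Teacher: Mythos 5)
Your achievability direction coincides with the paper's proof: the channel you sketch in step (iii), whose Choi operator is a $\Phi^{m}_{\hat A\hat B}$-block carrying $\rho_{AB}$ plus a $(\1_{\hat A\hat B}-\Phi^{m}_{\hat A\hat B})$-block carrying $G_{AB}$, is exactly the measure-and-prepare channel $\mathcal{P}_{\hat A\hat B\to AB}(X)=\rho_{AB}\tr[\Phi^{m}X]+G_{AB}\tr[(\1-\Phi^{m})X]$ used in the paper, and the verification that $T_{B}\circ\mathcal{P}\circ T_{\hat B}$ is completely positive proceeds via the symmetric/antisymmetric decomposition just as you anticipate. (Also, both sides of \eqref{eq:op-quantity-PPT-cost} are optimizations over integer Schmidt rank $m$, so no integer-versus-real relaxation argument is needed.)

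The genuine gap is in the converse, i.e.\ the ``only if'' half of the equivalence you assert in steps (i)--(ii). For an arbitrary completely-PPT-preserving $\Lambda_{\hat A\hat B\to AB}$ with $\Lambda(\Phi^{m})=\rho_{AB}$, positivity of the Choi operator and of its partial transpose does not by itself hand you a single unit-trace $G_{AB}$ obeying the two-sided inequality: you must explain why only the action of $\Lambda$ on $\Phi^{m}$ and on its orthogonal complement is relevant. The paper supplies this missing step by preceding $\Lambda$ with the isotropic twirl $\mathcal{T}_{\hat A\hat B}$ (an LOCC channel fixing $\Phi^{m}$), after which the channel has the measure-and-prepare form above with $G_{AB}=\Lambda\bigl((\1-\Phi^{m})/(m^{2}-1)\bigr)$, and necessity of the inequalities follows because $\Pi^{\mathcal S}_{\hat A\hat B}$ and $\Pi^{\mathcal A}_{\hat A\hat B}$ project onto orthogonal subspaces. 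Alternatively, your direct route can be completed without twirling by applying complete positivity of $T_{B}\circ\Lambda\circ T_{\hat B}$ to the two positive inputs $\Pi^{\mathcal S}_{\hat A\hat B}$ and $\Pi^{\mathcal A}_{\hat A\hat B}$: since $T_{\hat B}(\Pi^{\mathcal S/\mathcal A})=\tfrac12(\1\pm m\Phi^{m})$ and $\Lambda(\1)=\rho_{AB}+(m^{2}-1)G_{AB}$, these evaluations yield $\tfrac{m+1}{2}[\rho_{AB}^{T_{B}}+(m-1)G_{AB}^{T_{B}}]\geq0$ and $\tfrac{m-1}{2}[(m+1)G_{AB}^{T_{B}}-\rho_{AB}^{T_{B}}]\geq0$. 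In either version the correct operator is $G_{AB}=\Lambda\bigl((\1-\Phi^{m})/(m^{2}-1)\bigr)$, not $\Lambda(\pi_{\hat A}\otimes\pi_{\hat B})$ as you tentatively write: the latter equals $\rho_{AB}/m^{2}+(1-1/m^{2})G_{AB}$ and would not produce the stated coefficients. Relatedly, your closing heuristic attributing the $m\pm1$ split to the subspace dimensions $\binom{m+1}{2}$ and $\binom{m}{2}$ is not the actual mechanism; the coefficients arise because $(\Phi^{m})^{T_{\hat B}}=F_{\hat A\hat B}/m$ has eigenvalues $\pm1/m$ on the symmetric and antisymmetric subspaces, so that $(\1-\Phi^{m})^{T_{\hat B}}$ has eigenvalues $(m\mp1)/m$ there.
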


\begin{theorem}[Operational meaning \cite{WW20}]
\label{th:exact cost}
Let $\rho_{AB}$ be a  bipartite state acting on a separable Hilbert space. Then the exact PPT-entanglement cost of $\rho_{AB}$ is given by
	\begin{align}
	\label{eq:main result EPPT}		\EPPT(\rho_{AB})=E_\kappa(\rho_{AB}).
	\end{align}
\end{theorem}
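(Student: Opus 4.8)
The plan is to lift the one-shot bounds of Proposition~\ref{lemma: one-shot} to the $n$-copy level, exploit the additivity of $E_\kappa$ from Proposition~\ref{lemma:add}, and then let $n\to\infty$ so that the $\pm 1$ corrections and the integrality constraint on the dilution dimension wash out. Concretely, I would first apply Proposition~\ref{lemma: one-shot} with $\rho_{AB}^{\ox n}$ in place of $\rho_{AB}$, obtaining
\begin{equation}
\log\!\left(2^{E_\kappa(\rho_{AB}^{\ox n})}-1\right)\le \EPPTone(\rho_{AB}^{\ox n})\le \log\!\left(2^{E_\kappa(\rho_{AB}^{\ox n})}+1\right).
\end{equation}
Next I would invoke additivity, $E_\kappa(\rho_{AB}^{\ox n})=nE_\kappa(\rho_{AB})$ (Proposition~\ref{lemma:add}), substitute it in, divide by $n$, and pass to the limit. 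Writing $E\coloneqq E_\kappa(\rho_{AB})$, for $E>0$ one has
\begin{equation}
\frac1n\log\!\left(2^{nE}\pm1\right)=E+\frac1n\log\!\left(1\pm2^{-nE}\right)\xrightarrow[n\to\infty]{}E,
\end{equation}
so both bounds converge to $E$ and the squeeze theorem forces $\frac1n\EPPTone(\rho_{AB}^{\ox n})$ to converge, in particular upgrading the $\liminf$ in the definition of $\EPPT(\rho_{AB})$ to an honest limit equal to $E$. This gives \eqref{eq:main result EPPT}.

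The only point I would treat separately is the degenerate case $E_\kappa(\rho_{AB})=0$, which by faithfulness (Proposition~\ref{prop:faithfulness}) is exactly the case $\rho_{AB}^{T_B}\ge 0$: here the lower bound collapses to $\log 0$ and is vacuous, but $\EPPTone\ge 0$ holds trivially since the dilution dimension $d$ satisfies $d\ge 1$, and the upper bound gives $\frac1n\EPPTone(\rho_{AB}^{\ox n})\le\frac1n\log 2\to 0$; hence again $\EPPT(\rho_{AB})=0=E_\kappa(\rho_{AB})$.

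I do not anticipate a genuine obstacle in this last step: the substantive work has already been done in Theorem~\ref{thm:exact-cost-states} (recasting $\EPPTone$ as the bilinear optimization over $m$ and $G_{AB}$) and in Proposition~\ref{lemma: one-shot} (the semidefinite relaxation sandwiching $\EPPTone$ between $\log(2^{E_\kappa}-1)$ and $\log(2^{E_\kappa}+1)$), so the remaining content is purely the regularization argument, which additivity renders routine. The one thing I would be careful to state explicitly is that the squeeze indeed makes the $\liminf$ a limit, and to note the headline consequence: the exact PPT entanglement cost of an arbitrary bipartite state is given by the single-letter, regularization-free, semidefinite-programming-computable formula $E_\kappa$.
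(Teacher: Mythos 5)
Your proposal is correct and follows essentially the same route as the paper: apply the one-shot sandwich of Proposition~\ref{lemma: one-shot} to $\rho_{AB}^{\ox n}$, use the additivity $E_\kappa(\rho_{AB}^{\ox n})=nE_\kappa(\rho_{AB})$ from Proposition~\ref{lemma:add}, divide by $n$, and let $n\to\infty$ so both bounds converge to $E_\kappa(\rho_{AB})$. Your explicit treatment of the degenerate case $E_\kappa(\rho_{AB})=0$ (where the lower bound is vacuous but $\EPPTone\ge 0$ suffices) is a detail the paper leaves implicit, and is a welcome addition.
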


%\begin{proof}
%The main idea behind the proof is to employ the one-shot bound in Proposition~\ref{lemma: one-shot} and then the additivity relation from Proposition~\ref{lemma:add}.
%Consider that
%\begin{align}
%	\EPPT(\rho_{AB})&=\liminf_{n\to \infty}\frac{1}{n} \EPPTone(\rho_{AB}^{\ox n})\\
%	&\le \liminf_{n\to \infty}\frac{1}{n}\log_2 (2^{E_\kappa(\rho_{AB}^{\ox n})}+1)\\
%	&=\liminf_{n\to \infty}\frac{1}{n}\log_2 (2^{nE_\kappa(\rho_{AB})}+1)\\
%	&=E_\kappa(\rho_{AB}).
%	\end{align}
%By a similar method, it is easy to show that $\EPPT(\rho_{AB})\ge E_\kappa(\rho_{AB})$.
%\end{proof}

Note that Theorem~\ref{th:exact cost} constitutes a significant development for entanglement theory, representing the first time that it has been shown that an entanglement measure is not only efficiently computable but also possesses a direct operational meaning. 
   In the work of \cite{BP08,BP10}, it was conjectured that the regularized relative entropy of entanglement is equal to the entanglement cost and distillable entanglement of a bipartite quantum state, with the set of free operations being asymptotically non-entangling maps. However, in spite of the fact that the work of  \cite{BP08,BP10} conjectured a direct operational meaning to the regularized relative entropy of entanglement, this entanglement measure arguably has limited application beyond being a formal expression, due to the fact that there is no known efficient procedure for computing it. See \cite{BBGLPRT22} for recent developments and discussions.

Furthermore, in prior work, most discussions about the structure and properties of entanglement are based on entanglement measures. However, none of these measures, with the exception of the regularized relative entropy of entanglement, possesses a direct operational meaning. Thus, the connection made by Theorem~\ref{th:exact cost} allows for the study of the structure of entanglement via an entanglement measure possessing a direct operational meaning. 
   Given that $E_\kappa = E_\PPT$ is neither convex nor monogamous, this raises questions of whether these properties should really be required or necessary for measures of entanglement, in contrast to the discussions put forward in \cite{T04,Horodecki2009a} based on intuition. Furthermore, 
 $\EK$ is additive (Proposition~\ref{lemma:add}), so that Theorem~\ref{th:exact cost} implies that $\EPPT$ is additive as well:
 \begin{equation}
 \EPPT(\rho_{AB} \otimes \omega_{A'B'}) = 
  \EPPT(\rho_{AB})
  +
   \EPPT(\omega_{A'B'}).
 \end{equation}
 Thus, $\EPPT$ is the only known example of an operational quantity in entanglement theory for which the optimal rate is additive as a function of general quantum states.

\subsection{Exact entanglement cost of particular bipartite states}

\label{sec:examples-states}

To have a better understanding of exact entanglement cost, we evaluate the exact entanglement cost for particular bipartite states of interest, including isotropic states \cite{Horodecki99}, Werner states \cite{W89}, maximally correlated states \cite{R99,R01}, some states supported on the $3\times 3$ antisymmetric subspace, and bosonic Gaussian states \cite{S17}. For isotropic and Werner states, the exact PPT-entanglement cost was already determined \cite{Audenaert2003,H06book}, and so we recall these developments here.

Let $A$ and $B$ be quantum systems, each of dimension~$d$. 
For $t\in [0,1]$ and $d\geq 2$, an isotropic state is defined as follows \cite{Horodecki99}:
\begin{equation}
\rho^{(t,d)}_{AB} \coloneqq t\Phi^d_{AB} + (1-t) \frac{\1_{AB} - \Phi^d_{AB}}{d^2-1}.
\end{equation}
An isotropic state is PPT if and only if $t \leq 1/d$.
It was shown in \cite[Exercise~8.73]{H06book} that $\rho^{(t,d)}_{AB}$ satisfies the binegativity condition: $|(\rho^{(t,d)}_{AB})^{T_B}|^{T_B} \geq 0$. By applying \eqref{eq:ape-bnds}, this implies that 
\begin{align}
\EPPT(\rho^{(t,d)}_{AB}) & = E_N(\rho^{(t,d)}_{AB}) \\
& = \begin{cases}
\log_2 dt & \text{ if } t> \frac{1}{d}    \\
0 & \text{ if } t \leq  \frac{1}{d},
\label{eq:e-ppt-isotropic}
\end{cases}
\end{align}
with the second equality shown in \cite{H01,H06book}.

Let $A$ and $B$ be quantum systems, each of dimension~$d$. A
Werner state is defined for $p\in\lbrack0,1]$ as \cite{W89}
\begin{equation}
W_{AB}^{(p,d)}\coloneqq\left(  1-p\right)  \frac{2}{d\left(  d+1\right)  }\Pi
_{AB}^{\mathcal{S}}+p\frac{2}{d\left(  d-1\right)  }\Pi_{AB}^{\mathcal{A}},
\label{eq:werner-param}
\end{equation}
where $\Pi_{AB}^{\mathcal{S}}\coloneqq\left(  \1_{AB}+ F_{AB}\right)  /2$ and
$\Pi_{AB}^{\mathcal{A}}\coloneqq\left(  \1_{AB}- F_{AB}\right)  /2$
are the
projections onto the symmetric and antisymmetric subspaces of $A$ and $B$, respectively, with $F_{AB}$ denoting the swap operator. A Werner state is PPT if and only if $p\leq 1/2$.
It was shown in \cite{Audenaert2003} that $W^{(p,d)}_{AB}$ satisfies the binegativity condition: $|(W^{(p,d)}_{AB})^{T_B}|^{T_B} \geq 0$. By applying \eqref{eq:ape-bnds}, this implies that~\cite{Audenaert2003}
\begin{align}
\EPPT(W^{(p,d)}_{AB}) & = E_N(W^{(p,d)}_{AB}) 
\\
& = \begin{cases}
\log_2 \left[\frac{2}{d}(2p-1)+1\right] & \text{ if } p> 1/2    \\
0 & \text{ if } p \leq 1/2,
\end{cases}
\end{align}
with the second equality shown in \cite{H01,H06book}.

A maximally correlated state is defined as \cite{R99,R01}
\begin{equation}
\rho^{\mathbf{c}}_{AB} \coloneqq \sum_{i,j=0}^{d-1} c_{ij}\ket{ii}\!\bra{jj},\end{equation}
with the complex coefficients $\mathbf{c} \coloneqq\{c_{ij}\}_{i,j}$ being chosen such that $\sum_{i,j=0}^{d-1}  c_{ij}\ketbra{i}{j}$ is a legitimate quantum state. 
Noting that $(\rho^{\mathbf{c}}_{AB})^{T_B} = \sum_{i,j=0}^{d-1} c_{ij}\ket{ij}\!\bra{ji}$, a direct calculation reveals that
\begin{align}
|(\rho^{\mathbf{c}}_{AB})^{T_B}|= \sum_{i,j=0}^{d-1} |c_{ij}|\ket{ij}\!\bra{ij}.
\end{align}
Considering that $|(\rho^{\mathbf{c}}_{AB})^{T_B}|^{T_B}=|(\rho^{\mathbf{c}}_{AB})^{T_B}|\ge 0$, we have that
\begin{equation}
\EPPT(\rho^{\mathbf{c}}_{AB})=E_N(\rho^{\mathbf{c}}_{AB})
= \log_2\!\left(\sum_{i,j}|c_{ij}|\right).
\label{eq:EPPT-max-corr-states}
\end{equation}

The maximally correlated state $\widehat \omega_\a$  was considered recently in \cite{YC18}:
\begin{align}
\widehat \omega^\a_{AB} & \coloneqq \a \Phi^2_{AB} + \frac{1-\a}{2} (\proj{00}_{AB}+\proj{11}_{AB})\\
& = \frac{\alpha}{2}\ket{00}\!\bra{11}_{AB}
+\frac{\alpha}{2}\ket{11}\!\bra{00}_{AB}
\notag \\
& \qquad + \frac{1}{2}\ket{00}\!\bra{00}_{AB}
+\frac{1}{2}\ket{11}\!\bra{11}_{AB},
\end{align}
where $\alpha \in[0,1]$.
The authors of \cite{YC18} showed that the exact entanglement cost under LOCC is bounded as 
\begin{align}
\left\lfloor\frac{1}{\log_2(\a+1)}\right\rfloor^{-1}\ge E_{\rm LOCC}(\widehat \omega^\a_{AB} ) \ge \log_2(\a+1),
\label{eq:bounds-LOCC-max-corr-ex}
\end{align}
for $0 < \a < \sqrt 2-1$.
However, under PPT-preserving operations, by \eqref{eq:EPPT-max-corr-states}, it holds that
\begin{align}
\EPPT(\widehat \omega^\a_{AB})=\log_2(\a+1).
\end{align}
for $\alpha\in[0,1]$. This demonstrates that the lower bound in \eqref{eq:bounds-LOCC-max-corr-ex} can be understood as arising from the fact that the inequality $E_{\rm LOCC} \geq \EPPT$ generally holds for an arbitrary bipartite state.

The next example indicates the irreversibility of exact PPT entanglement manipulation, and it also implies that $\EPPT$ is generally not equal to the logarithmic negativity $E_N$.
Consider the following rank-two state supported on the $3\times 3$ antisymmetric subspace \cite{Wang2016d}:
\begin{equation}\label{state:gap}
\rho_{v}=\frac{1}{2}(\proj{v_1}+\proj{v_2}),
\end{equation}
with 
$
\ket {v_1}=(\ket {01}-\ket{10})/{\sqrt 2}$ and $ \ket {v_2}=(\ket {02}-\ket{20})/{\sqrt 2}.
$  For the state $\rho_{v}$, 
it holds that
\begin{align}
R_{\max}(\rho_v) & =E_N(\rho_{v})
=\log_2\! \left(1+\frac{1}{\sqrt 2}\right) < \EPPT(\rho_v)=1 \notag \\
& < \log_2 Z(\rho)= 
\log_2 \!\left(1+\frac{13}{4\sqrt 2}\right),
\label{eq:strict-APE}
\end{align}
where $R_{\max}(\rho_v)$ denotes the max-Rains relative entropy \cite{WD16pra}.
The strict inequalities in \eqref{eq:strict-APE} also imply that both the lower and upper bounds from \eqref{eq:ape-bnds}, i.e., from \cite{Audenaert2003}, are generally not tight.

The last examples that we consider are bosonic Gaussian states \cite{S17}. As shown in \cite{Audenaert2003}, all bosonic Gaussian states $\rho^G_{AB}$ satisfy the binegativity condition
$|(\rho^G_{AB})^{T_B}|^{T_B}\geq 0$. Thus, as a consequence of Theorem~\ref{th:exact cost} and Eq.~\eqref{eq:kappa-equal-log-neg}, we conclude that
\begin{equation}
\EPPT(\rho^G_{AB})=E_N(\rho^G_{AB})
\label{eq:EPPT-Gaussian}
\end{equation}
for every bosonic Gaussian state $\rho^G_{AB}$. Note that an explicit expression for the logarithmic negativity of a bosonic Gaussian state is available in \cite[Eq.~(15)]{WEP03}. We stress again that it is not clear whether the equality in \eqref{eq:EPPT-Gaussian} follows from the upper bound in \eqref{eq:ape-bnds}, given that the dimension of a bosonic Gaussian state is generally equal to infinity.

%
%	
%	Finally, we conclude this section with a few remarks:
%	\begin{itemize}
%\item There is an explicit example (cf.~Eq.~\eqref{state:gap}) demonstrating that both the upper and lower bounds on exact PPT-entanglement cost from \cite{Audenaert2003} are not tight.
%
%\item Theorem~\ref{th:exact cost} may shed light on the open question of whether distillable entanglement is convex \cite{Shor2001}. In the multi-partite setting, it is known that a version of distillable entanglement is not convex \cite{Shor2003}.
%
%\item Note that 
%any quantity that is 1) monotone with respect to completely-PPT-preserving channels and 2) normalized on maximally entangled states
%is a lower bound on $\EPPT$, following from the development in \cite{Audenaert2003}.
%Thus, since $\EPPT = E_\kappa$ as stated in Theorem~\ref{th:exact cost},
%this means that $E_\kappa$ is the largest of all such entanglement measures. Examples of such entanglement measures are the logarithmic negativity \cite{Vidal2002,Plenio2005b}, the max-Rains relative entropy \cite{WD16pra,WFD17}, and the Rains relative entropy \cite{R99,R01}.
%\end{itemize}

%%%%%%%%%%%%%%%%%%%%%%%%%%%%%%%%
%%%%%%%%%%%%%%%%%%%%%%%%%%%%%%%%

\section{$\kappa$-entanglement measure for quantum channels}

\label{sec:kappa channel}

Quantum channels underlie the dynamics of quantum systems and they enable the manipulation of quantum states. In order to better effectively exploit quantum resources, it is important to understand the resource cost of quantum channels. In this section, we extend the $\kappa$-entanglement measure from bipartite states to point-to-point quantum channels. We establish several  properties of the $\kappa$-entanglement of quantum channels, including the fact that it does not increase under amortization, that it is monotone under the action of a class of PPT superchannels, that it is additive, normalized, faithful, and that it is generally not convex. The fact that it is monotone under the action of a class of PPT superchannels is a basic property that we would expect to hold for a good measure of the entanglement of a quantum channel.

In what follows, we consider a channel $\mathcal{N}_{A\to B}$ that takes density operators acting on a separable Hilbert space $\mathcal{H}_A$ to those acting on a separable Hilbert space $\mathcal{H}_B$.
%We suppose that the channel $\mathcal{N}_{A\to B}$ is \textit{regular} \cite{Holevo2010,Holevo2011,H11}, in the sense that $\mathcal{N}_{A\to B}(\1_A)$ can be interpreted as a well-defined, bounded operator.
We refer to such channels simply as quantum channels, regardless of whether $\mathcal{H}_A$ or $\mathcal{H}_B$ is finite-dimensional.
If the Hilbert spaces $\mathcal{H}_A$ and $\mathcal{H}_B$ are both finite-dimensional, then we specifically refer to $\mathcal{N}_{A\to B}$ as a finite-dimensional channel. 

We 
also make use of the Choi operator $J^{\mathcal{N}}_{RB}$ \cite{Holevo2011,H11} of the channel $\mathcal{N}_{A\to B}$, defined as
\begin{equation}
J^{\mathcal{N}}_{RB} \coloneqq \mathcal{N}_{A\to B}(\Gamma_{RA})\coloneqq \sum_{i,j} \ket{i}\!\bra{j}_R \otimes \mathcal{N}_{A\to B}(\ket{i}\!\bra{j}_A),
\end{equation}
where $R$ is isomorphic to the channel input $A$, we employ the shorthand $\Gamma_{RA} \equiv \ketbra{\Gamma}{\Gamma}_{RA}$, and $\ket{\Gamma}_{RA}$ denotes the unnormalized maximally entangled vector:
\begin{equation}
\ket{\Gamma}_{RA} \coloneqq \sum_i \ket{i}_R \otimes \ket{i}_A,
\end{equation}
where $\{ \ket{i}_R\}_i$ and 
$\{ \ket{i}_A\}_i$ are orthonormal bases for the Hilbert spaces $\mathcal{H}_R$ and $\mathcal{H}_A$.

\begin{definition}
[$\kappa$-entanglement of a channel]
Let $\mathcal{N}_{A\to B}$ be a quantum channel. Then the $\kappa$-entanglement of the channel $\mathcal{N}_{A\to B}$ is defined as
\begin{multline}
E_\kappa(\cN_{A\to B}) \coloneqq \inf_{Q_{AB}\ge 0} \{ \log_2 \left\Vert \tr_{B}[Q_{AB}] \right \Vert_{\infty}   :\\
-Q_{AB}^{T_B}\le (J_{AB}^{\cN})^{T_B} \le Q_{AB}^{T_B} \}.
\label{eq:kappa-channel-primal}
\end{multline}
\end{definition}

\begin{proposition}
\label{eq:state-opt-for-E_kappa-ch}
Let $\mathcal{N}_{A\rightarrow B}$ be a quantum channel.
Then%
\begin{equation}
E_{\kappa}(\mathcal{N}_{A\rightarrow B})=\sup_{\rho_{RA}}E_{\kappa
}(\mathcal{N}_{A\rightarrow B}(\rho_{RA}%
)),\label{eq:channel-quantity-to-E-kappa}%
\end{equation}
where the supremum is with respect to all states $\rho_{RA}$ with system $R$ arbitrary.
\end{proposition}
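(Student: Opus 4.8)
The plan is to prove the two inequalities separately. For the direction $E_{\kappa}(\mathcal{N}_{A\rightarrow B})\geq\sup_{\rho_{RA}}E_{\kappa}(\mathcal{N}_{A\rightarrow B}(\rho_{RA}))$, I would take an arbitrary feasible $Q_{AB}$ for the channel optimization in \eqref{eq:kappa-channel-primal} (so $-Q_{AB}^{T_B}\le (J_{AB}^{\cN})^{T_B}\le Q_{AB}^{T_B}$ and $Q_{AB}\ge 0$), and fix an arbitrary input state $\rho_{RA}$. The key observation is that the output state $\mathcal{N}_{A\rightarrow B}(\rho_{RA})$ can be written via the Choi operator as $\sqrt{\rho_R}^{T} \, J_{RB}^{\cN} \, \sqrt{\rho_R}^{T}$ (using the standard ``post-selection''/Choi identity, with the transpose taken in the basis defining $\Gamma$), and the same conjugation applied to $Q_{RB}$ produces a feasible operator $S_{RB} \coloneqq \sqrt{\rho_R}^{T} Q_{RB} \sqrt{\rho_R}^{T}$ for $E_{\kappa}(\mathcal{N}_{A\rightarrow B}(\rho_{RA}))$. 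Indeed, conjugation by a positive operator (applied on the $R$ side, which is unaffected by $T_B$) preserves all three operator inequalities and preserves positivity. It then remains to bound $\operatorname{Tr} S_{RB} = \operatorname{Tr}[\rho_R^{T}\operatorname{Tr}_B Q_{RB}] \le \lVert \operatorname{Tr}_B Q_{RB}\rVert_\infty$, since $\rho_R^{T}$ is a density operator; taking logarithms and then the infimum over feasible $Q$ and the supremum over $\rho_{RA}$ gives the inequality.

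For the reverse direction $E_{\kappa}(\mathcal{N}_{A\rightarrow B})\leq\sup_{\rho_{RA}}E_{\kappa}(\mathcal{N}_{A\rightarrow B}(\rho_{RA}))$, the idea is that it suffices to restrict the supremum to a single well-chosen input, namely the maximally mixed state $\pi_A$ on $A$ (with purification giving the normalized maximally entangled state $\Phi_{RA}$), for which $\mathcal{N}_{A\rightarrow B}(\Phi_{RA}) = J_{RB}^{\cN}/d_A$. Given any feasible $S_{RB}$ for $E_{\kappa}(J_{RB}^{\cN}/d_A)$ — so $-S_{RB}^{T_B}\le (J_{RB}^{\cN})^{T_B}/d_A \le S_{RB}^{T_B}$ and $S_{RB}\ge 0$ — I would set $Q_{RB} \coloneqq d_A S_{RB}$, which is feasible for the channel SDP \eqref{eq:kappa-channel-primal}. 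One then needs $\log\lVert \operatorname{Tr}_B Q_{RB}\rVert_\infty = \log(d_A \lVert \operatorname{Tr}_B S_{RB}\rVert_\infty) \le \log(d_A \operatorname{Tr} S_{RB}) = E_{\kappa}(J_{RB}^{\cN}/d_A) + \log d_A$ evaluated at this $S$ — wait, that introduces an unwanted $\log d_A$, so this naive choice is too lossy. The fix is to not insist $\operatorname{Tr}_B Q_{RB}$ be proportional to the identity but rather to optimize: replace $Q_{RB}$ by $Q_{RB}' = \sqrt{Y_R} Q_{RB}\sqrt{Y_R}$ for a suitable positive $Y_R$ chosen so that $\operatorname{Tr}_B Q_{RB}'$ has operator norm equal to $\operatorname{Tr} S_{RB}$ (for instance, building $Y_R$ from the inverse of $\operatorname{Tr}_B S_{RB}$ and normalizing); this conjugation preserves feasibility of all the partial-transpose inequalities. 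More cleanly, I would instead directly invoke the monotonicity theorem (Theorem~\ref{prop:ent-monotone}): there is a completely-PPT-preserving channel (a generalized teleportation/measure-prepare map in the style of Theorem~\ref{thm:exact-cost-states}) taking $\rho_{RA}$-type inputs appropriately, together with the dual/SDP characterization; alternatively, mirror the structure of Proposition~\ref{eq:state-opt-for-E_kappa-ch}'s analogue for other channel measures in the literature.

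The main obstacle is the reverse inequality — specifically, showing that the supremum over all input states is actually \emph{achieved} (or approached) by inputs that ``see'' the full Choi operator, and that no factor of $\log d_A$ is lost. I expect the cleanest route is: (i) reduce to pure-state inputs $\psi_{RA}$ by purifying and using that $E_\kappa$ of the output can only be larger on a purification (monotonicity under the partial trace on $R$, which is a local — hence PPT-preserving — channel, via Theorem~\ref{prop:ent-monotone}); (ii) for a pure input $\psi_{RA}$ with reduced state $\rho_A$ of full rank, write $\mathcal{N}(\psi_{RA})$ as a conjugation of $J_{RB}^{\cN}$ by $\sqrt{\rho_A}^{T}$ as above and run the argument of the first direction in reverse, noting that for full-rank $\rho_A$ the conjugation is invertible so feasibility transfers both ways; (iii) handle non-full-rank and infinite-dimensional cases by a limiting/approximation argument. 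The trace-norm-versus-operator-norm bookkeeping in step (ii) is where care is needed, but it parallels the standard manipulations relating $\sup_\rho$ formulas to Choi-operator SDPs, and I would expect it to go through without surprises once the conjugation identity is set up correctly.
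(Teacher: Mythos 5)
Your first direction (that $E_{\kappa}(\cN_{A\to B})\geq E_{\kappa}(\cN_{A\to B}(\rho_{RA}))$ for every input) is essentially sound once patched: the identity $\cN(\rho_{RA})=\sqrt{\rho_R}^{T}J_{RB}^{\cN}\sqrt{\rho_R}^{T}$ is false for general mixed $\rho_{RA}$, but the reduction to pure inputs with $R\simeq A$ via purification, Schmidt decomposition, and monotonicity of $E_\kappa$ under the local partial trace (Theorem~\ref{prop:ent-monotone}) — which you do mention — repairs this, and then conjugating a feasible $Q_{RB}$ by $X_R$ and bounding $\tr[\rho_R\tr_B Q_{RB}]\le\left\Vert\tr_B Q_{RB}\right\Vert_\infty$ works exactly as in the paper.

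The genuine gap is in the reverse inequality, which is the substantive half of the statement. After the conjugation step, what you have for each fixed full-rank pure input is $E_{\kappa}(\cN(\psi_{RA}))=\inf_{Q}\{\log\tr[\rho_R\tr_B Q_{RB}]\}$ over $Q_{RB}$ feasible for the Choi-operator constraints, so the claim $E_{\kappa}(\cN)\le\sup_{\rho}E_{\kappa}(\cN(\rho_{RA}))$ is precisely the statement that $\sup_{\rho_R}\inf_{Q}\geq\inf_{Q}\sup_{\rho_R}$, i.e., a minimax exchange; "feasibility transfers both ways" for each fixed input does not give this, because a near-optimal $S_{RB}$ for one input need not yield a single $Q_{RB}$ whose $\left\Vert\tr_B Q_{RB}\right\Vert_\infty$ is small uniformly over all inputs. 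The paper closes exactly this step by invoking the Sion minimax theorem (convexity of both feasible sets plus compactness of the set of input density operators), and your proposal never supplies an equivalent ingredient. Moreover, your attempted direct construction is flawed: replacing $Q_{RB}$ by $Q_{RB}'=\sqrt{Y_R}\,Q_{RB}\sqrt{Y_R}$ does \emph{not} preserve feasibility for the channel program, since the constraint $-Q_{RB}'^{\,T_B}\le (J_{RB}^{\cN})^{T_B}\le Q_{RB}'^{\,T_B}$ has the \emph{fixed} Choi operator in the middle — conjugating $Q_{RB}$ but not $J_{RB}^{\cN}$ yields feasibility only for the conjugated operator $\sqrt{Y_R}(J_{RB}^{\cN})^{T_B}\sqrt{Y_R}$, not for $J_{RB}^{\cN}$ itself (and your earlier observation that the maximally entangled input alone loses a factor $\log d_A$ is correct, so that route is indeed dead). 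The fallback suggestions (invoke Theorem~\ref{prop:ent-monotone}, a teleportation-style map, or "mirror the literature") are not arguments, so as it stands the reverse direction is unproven; adding the minimax exchange (or an explicit strong-duality argument playing the same role) is the missing idea.
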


\begin{proof}
Due to Proposition~\ref{prop:ent-monotone}, i.e., the fact that $E_{\kappa}$ for states is monotone non-increasing with
respect to completely-PPT-preserving channels (with one such channel being a
local partial trace), it follows from purification, the Schmidt decomposition,
and this local data processing, that it suffices to optimize with respect to
pure states $\rho_{RA}$ with system $R$ isomorphic to system $A$. Thus, we
conclude that%
\begin{equation}
\sup_{\rho_{RA}}E_{\kappa}(\mathcal{N}_{A\rightarrow B}(\rho_{RA}))=\sup
_{\phi_{RA}}E_{\kappa}(\mathcal{N}_{A\rightarrow B}(\phi_{RA})),
\end{equation}
where $\phi_{RA}$ is pure and $R\simeq A$.

By definition, and using the fact that every pure state $\phi_{RA}$ of the form mentioned above can be
represented as $X_{R}\Gamma_{RA}X_{R}^{\dag}$ with $\left\Vert X_{R}%
\right\Vert _{2}=1$, we have that%
\begin{multline}
\sup_{\phi_{RA}}E_{\kappa}(\mathcal{N}_{A\rightarrow B}(\phi_{RA}))
\\=\log_2
\sup_{X_{R}:\left\Vert X_{R}\right\Vert _{2}=1,\left\vert X_{R}\right\vert
>0}\inf_{S_{RB}\geq 0}\{\operatorname{Tr}S_{RB}:\\
-S_{RB}^{T_{B}}\leq X_{R}%
[J_{RB}^{\mathcal{N}}]^{T_{B}}X_{R}^{\dag}\leq S_{RB}^{T_{B}}\},
\end{multline}
where the equality follows because the set of operators $X_{R}$ satisfying
$\left\Vert X_{R}\right\Vert _{2}=1$ and $\left\vert X_{R}\right\vert >0$ is
dense in the set of all operators satisfying $\left\Vert X_{R}\right\Vert
_{2}=1$. Now defining $Q_{RB}$ in terms of $S_{RB}=X_{R}Q_{RB}X_{R}^{\dag}$,
and using the facts that
\begin{multline}
-S_{RB}^{T_{B}}   \leq X_{R}[J_{RB}^{\mathcal{N}}]^{T_{B}}X_{R}^{\dag}\leq
S_{RB}^{T_{B}}\quad  \Leftrightarrow\quad
\\ -Q_{RB}^{T_{B}}\leq\lbrack
J_{RB}^{\mathcal{N}}]^{T_{B}}\leq Q_{RB}^{T_{B}},
\end{multline}
\begin{align}
S_{RB}   \geq0\quad & \Leftrightarrow\quad Q_{RB}\geq0,
\end{align}
for operators $X_{R}$ satisfying $\left\vert X_{R}\right\vert >0$, we find
that
\begin{widetext}
\begin{align}
& \sup_{X_{R}:\left\Vert X_{R}\right\Vert _{2}=1,\left\vert X_{R}\right\vert
>0}\inf_{S_{RB}\geq 0}\{\operatorname{Tr}S_{RB}:-S_{RB}^{T_{B}}\leq X_{R}%
[J_{RB}^{\mathcal{N}}]^{T_{B}}X_{R}^{\dag}\leq S_{RB}^{T_{B}}\}\notag \\
& =\sup_{X_{R}:\left\Vert X_{R}\right\Vert _{2}=1,\left\vert X_{R}\right\vert
>0}\inf_{Q_{RB}\geq 0}\{\operatorname{Tr}X_{R}Q_{RB}X_{R}^{\dag}:-Q_{RB}^{T_{B}}%
\leq\lbrack J_{RB}^{\mathcal{N}}]^{T_{B}}\leq Q_{RB}^{T_{B}}\}\notag \\
& =\sup_{\rho_{R}:\tr \rho_{R}=1,\rho_{R}
>0}\inf_{Q_{RB}\geq 0}\{\operatorname{Tr}[\rho_{R}\operatorname{Tr}%
_{B}[Q_{RB}]]:-Q_{RB}^{T_{B}}\leq\lbrack J_{RB}^{\mathcal{N}}]^{T_{B}}\leq
Q_{RB}^{T_{B}}\}\notag \\
& =\sup_{\rho_{R}:\tr \rho_{R}=1,\rho_{R}
\geq 0}\inf_{Q_{RB}\geq 0}\{\operatorname{Tr}[\rho_{R}\operatorname{Tr}%
_{B}[Q_{RB}]]:-Q_{RB}^{T_{B}}\leq\lbrack J_{RB}^{\mathcal{N}}]^{T_{B}}\leq
Q_{RB}^{T_{B}}\}\notag \\
& =\inf_{Q_{RB}\geq 0}\left[  \sup_{\rho_{R}:\tr \rho_{R}
=1,\rho_{R} >0}\{\operatorname{Tr}[\rho_{R}\operatorname{Tr}_{B}[Q_{RB}]]:-Q_{RB}^{T_{B}}\leq\lbrack J_{RB}%
^{\mathcal{N}}]^{T_{B}}\leq Q_{RB}^{T_{B}}\}\right]  \notag \\
& =\inf_{Q_{RB}\geq 0}\{\left\Vert \operatorname{Tr}_{B}[Q_{RB}]\right\Vert _{\infty
}:-Q_{RB}^{T_{B}}\leq\lbrack J_{RB}^{\mathcal{N}}]^{T_{B}}\leq Q_{RB}^{T_{B}%
}\}.
\end{align}
\end{widetext}
The fourth equality follows from an application of the Sion minimax theorem \cite{sion58},
given that the set of operators satisfying $\tr \rho_{R}=1$ and $\rho_{R} \geq 0$ is compact and both sets over which we are optimizing are convex. Putting everything
together, we conclude \eqref{eq:channel-quantity-to-E-kappa}.
\end{proof}

\subsection{Amortization collapse and monotonicity under a class of PPT superchannels}

In this subsection, we prove that the $\kappa$-entanglement of a quantum channel does not increase under amortization, which is a property that holds for the squashed entanglement of a channel \cite{TGW14,TGW14B},
a channel's max-relative entropy of entanglement \cite{Christandl2017}, and 
the max-Rains information of a channel \cite{BW17}. We additionally prove that this property implies that the $\kappa$-entanglement of a quantum channel does not increase under the action of a class of PPT superchannels.
A PPT superchannel $\Theta^{\operatorname{PPT}}$ is a physical transformation of a quantum channel. The class of PPT superchannels that we consider realizes the following transformation of a channel
$\mathcal{M}_{\hat{A}\rightarrow\hat{B}}$
to a channel $\mathcal{N}_{A\rightarrow B}$
in terms of 
completely-PPT-preserving channels $\mathcal{P}_{A\rightarrow\hat{A}A_{M}B_{M}%
}^{\text{pre}}$ and $\mathcal{P}_{A_{M}\hat{B}B_{M}}^{\text{post}}$:
\begin{multline}
\mathcal{N}_{A\rightarrow B}=\Theta^{\operatorname{PPT}}(\mathcal{M}_{\hat
{A}\rightarrow\hat{B}})\coloneqq\\
\mathcal{P}_{A_{M}\hat{B}B_{M}}^{\text{post}}%
\circ\mathcal{M}_{\hat{A}\rightarrow\hat{B}}\circ\mathcal{P}_{A\rightarrow
\hat{A}A_{M}B_{M}}^{\text{pre}}.
\label{eq:superchannel-action}
\end{multline}
We also state that the same property holds for the max-Rains information of a quantum channel, due to the main result of \cite{BW17}, while a channel's squashed entanglement and max-relative entropy of entanglement do not increase under the action of an LOCC superchannel.

We begin our development with the following amortization inequality:
\begin{proposition}
[Amortization inequality]
\label{prop:amort-ineq-E-kappa}Let $\rho_{A^{\prime
}AB^{\prime}}$ be a quantum state
acting on a separable Hilbert space
and let $\mathcal{N}_{A\rightarrow B}$ be a 
quantum channel. Then the following amortization inequality holds%
\begin{equation}
E_{\kappa}(\mathcal{N}_{A\rightarrow B}(\rho_{A^{\prime}AB^{\prime}%
}))-E_{\kappa}(\rho_{A^{\prime}AB^{\prime}})\leq E_{\kappa}(\mathcal{N}%
_{A\rightarrow B}).
\end{equation}

\end{proposition}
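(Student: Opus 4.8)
The plan is to prove the amortization inequality by an explicit pushforward construction. Throughout, the bipartite cut for $\rho_{A'AB'}$ is $A'A \,|\, B'$, and after the channel acts the relevant cut for $\mathcal{N}_{A\to B}(\rho_{A'AB'})$ is $A' \,|\, BB'$. Fix $\varepsilon>0$ and choose $S_{A'AB'}\ge 0$ with $-S_{A'AB'}^{T_{B'}}\le \rho_{A'AB'}^{T_{B'}}\le S_{A'AB'}^{T_{B'}}$ and $\log\tr S_{A'AB'}\le E_{\kappa}(\rho_{A'AB'})+\varepsilon$, and choose $Q_{RB}\ge 0$ (with $R\simeq A$) with $-Q_{RB}^{T_B}\le (J^{\mathcal{N}}_{RB})^{T_B}\le Q_{RB}^{T_B}$ and $\log\left\Vert\tr_{B}Q_{RB}\right\Vert_{\infty}\le E_{\kappa}(\mathcal{N}_{A\to B})+\varepsilon$. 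Let $\mathcal{Q}_{A\to B}$ be the completely positive map whose Choi operator is $Q_{RB}$, and set $S'_{A'BB'}\coloneqq\mathcal{Q}_{A\to B}(S_{A'AB'})$. I will show that $S'_{A'BB'}$ is feasible for $E_{\kappa}(\mathcal{N}_{A\to B}(\rho_{A'AB'}))$ and that $\log\tr S'_{A'BB'}\le \log\tr S_{A'AB'}+\log\left\Vert\tr_{B}Q_{RB}\right\Vert_{\infty}$, from which the claim follows upon letting $\varepsilon\to 0$.

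Positivity $S'_{A'BB'}\ge 0$ is immediate from complete positivity of $\mathcal{Q}_{A\to B}$. For the operator-inequality constraint, I would first use that $\mathcal{N}_{A\to B}$ and $\mathcal{Q}_{A\to B}$ act only on $A$ to write $[\mathcal{N}_{A\to B}(\rho_{A'AB'})]^{T_{BB'}}=(T_{B}\circ\mathcal{N}_{A\to B})(\rho_{A'AB'}^{T_{B'}})$ and $[S'_{A'BB'}]^{T_{BB'}}=(T_{B}\circ\mathcal{Q}_{A\to B})(S_{A'AB'}^{T_{B'}})$. Next, introduce $\mathcal{A}\coloneqq T_{B}\circ(\mathcal{Q}_{A\to B}+\mathcal{N}_{A\to B})$ and $\mathcal{B}\coloneqq T_{B}\circ(\mathcal{Q}_{A\to B}-\mathcal{N}_{A\to B})$: the two feasibility inequalities on $Q_{RB}$ say precisely that the respective Choi operators $(Q_{RB}\pm J^{\mathcal{N}}_{RB})^{T_B}$ are positive semidefinite, so $\mathcal{A}$ and $\mathcal{B}$ are completely positive, and $T_{B}\circ\mathcal{N}_{A\to B}=\tfrac12(\mathcal{A}-\mathcal{B})$, $T_{B}\circ\mathcal{Q}_{A\to B}=\tfrac12(\mathcal{A}+\mathcal{B})$. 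Writing $\rho_{A'AB'}^{T_{B'}}=P-N$ and $S_{A'AB'}^{T_{B'}}=P+N$ with $P,N\ge 0$ — possible exactly because $-S_{A'AB'}^{T_{B'}}\le\rho_{A'AB'}^{T_{B'}}\le S_{A'AB'}^{T_{B'}}$ — a short computation then gives $[\mathcal{N}_{A\to B}(\rho_{A'AB'})]^{T_{BB'}}=P'-N'$ and $[S'_{A'BB'}]^{T_{BB'}}=P'+N'$, where $P'\coloneqq\tfrac12[\mathcal{A}(P)+\mathcal{B}(N)]$ and $N'\coloneqq\tfrac12[\mathcal{A}(N)+\mathcal{B}(P)]$. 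The key point is that $P'\ge 0$ and $N'\ge 0$, since in every term a completely positive map is evaluated on a positive semidefinite operator; hence $-[S'_{A'BB'}]^{T_{BB'}}\le[\mathcal{N}_{A\to B}(\rho_{A'AB'})]^{T_{BB'}}\le[S'_{A'BB'}]^{T_{BB'}}$, so $S'_{A'BB'}$ is feasible.

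For the objective value, I would use $\tr S'_{A'BB'}=\tr[(\tr_{A'B'}S_{A'AB'})\,\mathcal{Q}^{\dagger}_{B\to A}(\1_{B})]$, and note that $\mathcal{Q}^{\dagger}_{B\to A}(\1_{B})\ge 0$ with $\left\Vert\mathcal{Q}^{\dagger}_{B\to A}(\1_{B})\right\Vert_{\infty}=\left\Vert\tr_{B}Q_{RB}\right\Vert_{\infty}$; since $\tr_{A'B'}S_{A'AB'}\ge 0$, the right-hand side is at most $\left\Vert\tr_{B}Q_{RB}\right\Vert_{\infty}\tr S_{A'AB'}$. Taking logarithms and combining with feasibility of $S'_{A'BB'}$ gives $E_{\kappa}(\mathcal{N}_{A\to B}(\rho_{A'AB'}))\le\log\tr S'_{A'BB'}\le E_{\kappa}(\rho_{A'AB'})+E_{\kappa}(\mathcal{N}_{A\to B})+2\varepsilon$, and $\varepsilon\to 0$ completes the proof. (In the separable-Hilbert-space setting one works with $\varepsilon$-optimal feasible points throughout, as above, rather than optimal ones; if the right-hand side is infinite the inequality is trivial.)

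The main obstacle is the operator-inequality step. The naive approach of applying a ``$\mathcal{Q}$-type'' map to $S_{A'AB'}^{T_{B'}}$ and an ``$\mathcal{N}$-type'' map to $\rho_{A'AB'}^{T_{B'}}$ and comparing them directly does not work, because $\rho_{A'AB'}^{T_{B'}}$ need not be positive and $T_{B}\circ\mathcal{Q}_{A\to B}$ need not be completely positive. The resolution is the regrouping into $P'$ and $N'$, which arranges that a completely positive map is always evaluated on a positive operator; this is precisely where the binegativity-type constraints on $Q_{RB}$ and on $S_{A'AB'}$ interlock. The only remaining verification — that the single operator $S'_{A'BB'}=\mathcal{Q}_{A\to B}(S_{A'AB'})$ does reproduce $P'+N'$ under partial transpose, so that the two descriptions of its transpose agree — is a one-line check using $P+N=S_{A'AB'}^{T_{B'}}$ and $\mathcal{A}+\mathcal{B}=2\,T_{B}\circ\mathcal{Q}_{A\to B}$.
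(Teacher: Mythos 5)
Your proof is correct and is essentially the paper's argument in different clothing: your candidate operator $S'_{A'BB'}=\mathcal{Q}_{A\to B}(S_{A'AB'})$ is exactly the paper's $F_{A'BB'}=\langle\Gamma|_{RA}(S_{A'AB'}\otimes Q_{RB})|\Gamma\rangle_{RA}$ (by the post-selected-teleportation identity), and your trace bound via $\mathcal{Q}^{\dagger}(\1_B)$ is the paper's H\"older step with $\operatorname{Tr}_B Q^{T_A}_{AB}$. The one genuine added value is that your $P',N'$ regrouping spells out explicitly the positivity decomposition that the paper leaves implicit when it asserts $\langle\Gamma|(S^{T_{B'}}\otimes Q^{T_B})|\Gamma\rangle\geq\langle\Gamma|(\rho^{T_{B'}}\otimes[J^{\mathcal{N}}]^{T_B})|\Gamma\rangle$ directly from the two operator constraints.
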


\begin{proof}
A proof for this inequality follows similarly to the proof of \cite[Proposition~1]{BW17}. We first rewrite the desired inequality as%
\begin{equation}
E_{\kappa}(\mathcal{N}_{A\rightarrow B}(\rho_{A^{\prime}AB^{\prime}}))\leq
E_{\kappa}(\mathcal{N}_{A\rightarrow B})+E_{\kappa}(\rho_{A^{\prime}%
AB^{\prime}}),
\end{equation}
and then once again as%
\begin{equation}
2^{E_{\kappa}(\mathcal{N}_{A\rightarrow B}(\rho_{A^{\prime}AB^{\prime}}))}%
\leq2^{E_{\kappa}(\mathcal{N}_{A\rightarrow B})}\cdot2^{E_{\kappa}%
(\rho_{A^{\prime}AB^{\prime}})}.\label{eq:amort-rewrite-1}%
\end{equation}
Consider that
\begin{widetext}
\begin{align}
2^{E_{\kappa}(\rho_{A^{\prime}AB^{\prime}})} &  =\inf\left\{
\operatorname{Tr}S_{A^{\prime}AB^{\prime}}:-S_{A^{\prime}AB^{\prime}%
}^{T_{B^{\prime}}}\leq\rho_{A^{\prime}AB^{\prime}}^{T_{B^{\prime}}}\leq
S_{A^{\prime}AB^{\prime}}^{T_{B^{\prime}}},\ S_{A^{\prime}AB^{\prime}}%
\geq0\right\}  ,\\
2^{E_{\kappa}(\mathcal{N}_{A\rightarrow B})} &  =\inf\left\{  \left\Vert
\operatorname{Tr}_{B}Q_{RB}\right\Vert _{\infty}:-Q_{RB}^{T_{B}}\leq\left[
J_{RB}^{\mathcal{N}}\right]  ^{T_{B}}\leq Q_{RB}^{T_{B}},\ Q_{RB}%
\geq0\right\}  .
\end{align}
\end{widetext}
Let $S_{A^{\prime}AB^{\prime}}$ be an arbitrary operator satisfying%
\begin{equation}
-S_{A^{\prime}AB^{\prime}}^{T_{B^{\prime}}}\leq\rho_{A^{\prime}AB^{\prime}%
}\leq S_{A^{\prime}AB^{\prime}}^{T_{B^{\prime}}},\ S_{A^{\prime}AB^{\prime}%
}\geq0,\label{eq:constr-for-rho-E-kap}%
\end{equation}
and let $Q_{RB}$ be an arbitrary operator satisfying%
\begin{equation}
-Q_{RB}^{T_{B}}\leq J_{RB}^{\mathcal{N}}\leq Q_{RB}^{T_{B}},\ Q_{RB}%
\geq0.\label{eq:constr-for-ch-E-kap}%
\end{equation}
Then let%
\begin{equation}
F_{A^{\prime}BB^{\prime}}=\langle\Gamma|_{RA}(S_{A^{\prime}AB^{\prime}}\otimes
Q_{RB})|\Gamma\rangle_{RA},
\end{equation}
where $|\Gamma\rangle_{RA}$ denotes the unnormalized maximally entangled
vector. It follows that $F_{A^{\prime}BB^{\prime}}\geq0$ because
$S_{A^{\prime}AB^{\prime}}\geq0$ and $Q_{RB}\geq0$. Furthermore, we have from
\eqref{eq:constr-for-rho-E-kap} and \eqref{eq:constr-for-ch-E-kap} that%
\begin{align}
F_{A^{\prime}BB^{\prime}}^{T_{BB^{\prime}}} &  =\left[  \langle\Gamma
|_{RA}(S_{A^{\prime}AB^{\prime}}\otimes Q_{RB})|\Gamma\rangle_{RA}\right]
^{T_{BB^{\prime}}}\\
&  =\langle\Gamma|_{RA}(S_{A^{\prime}AB^{\prime}}^{T_{B^{\prime}}}\otimes
Q_{RB}^{T_{B}})|\Gamma\rangle_{RA}\\
&  \geq\langle\Gamma|_{RA}(\rho_{A^{\prime}AB^{\prime}}^{T_{B^{\prime}}%
}\otimes\left[  J_{RB}^{\mathcal{N}}\right]  ^{T_{B}})|\Gamma\rangle_{RA}\\
&  =\left[  \langle\Gamma|_{RA}(\rho_{A^{\prime}AB^{\prime}}\otimes
J_{RB}^{\mathcal{N}})|\Gamma\rangle_{RA}\right]  ^{T_{BB^{\prime}}}\\
&  =\left[  \mathcal{N}_{A\rightarrow B}(\rho_{A^{\prime}AB^{\prime}})\right]
^{T_{BB^{\prime}}}.
\end{align}
Similarly, we have that%
\begin{equation}
-F_{A^{\prime}BB^{\prime}}^{T_{BB^{\prime}}}\leq\left[  \mathcal{N}%
_{A\rightarrow B}(\rho_{A^{\prime}AB^{\prime}})\right]  ^{T_{BB^{\prime}}},
\end{equation}
by using $-S_{A^{\prime}AB^{\prime}}^{T_{B^{\prime}}}\leq\rho_{A^{\prime
}AB^{\prime}}^{T_{B^{\prime}}}$ and $-Q_{RB}^{T_{B}}\leq\left[  J_{RB}%
^{\mathcal{N}}\right]  ^{T_{B}}$. Thus, $F_{A^{\prime}BB^{\prime}}$ is
feasible for $2^{E_{\kappa}(\mathcal{N}_{A\rightarrow B}(\rho_{A^{\prime
}AB^{\prime}}))}$.

Finally, consider that%
\begin{align}
& 2^{E_{\kappa}(\mathcal{N}_{A\rightarrow B}(\rho_{A^{\prime}AB^{\prime}}))} \notag \\
&
\leq\operatorname{Tr}F_{A^{\prime}BB^{\prime}}\\
&  =\operatorname{Tr}\langle\Gamma|_{RA}(S_{A^{\prime}AB^{\prime}}\otimes
Q_{RB})|\Gamma\rangle_{RA}\\
&  =\operatorname{Tr}S_{A^{\prime}AB^{\prime}}Q_{AB}^{T_{A}}\\
&  =\operatorname{Tr}\left[  S_{A^{\prime}AB^{\prime}}\operatorname{Tr}%
_{B}Q_{AB}^{T_{A}}\right]  \\
&  \leq\operatorname{Tr}S_{A^{\prime}AB^{\prime}}\left\Vert \operatorname{Tr}%
_{B}Q_{AB}^{T_{A}}\right\Vert _{\infty}\\
&  =\operatorname{Tr}S_{A^{\prime}AB^{\prime}}\left\Vert \operatorname{Tr}%
_{B}Q_{AB}\right\Vert _{\infty}.
\end{align}
The inequality above follows from H\"older's inequality. The last equality
follows because the spectrum of an operator remains invariant under the action
of a transpose. Since the inequality above holds for all $S_{A^{\prime
}AB^{\prime}}$ and $Q_{RB}$ satisfying \eqref{eq:constr-for-rho-E-kap} and
\eqref{eq:constr-for-ch-E-kap}, respectively, we conclude the inequality in~\eqref{eq:amort-rewrite-1}.
\end{proof}

\begin{definition}[Amortized $\kappa$-ent.~of a channel]
Following \cite{KW17a}, we define the amortized $\kappa$-entanglement of a quantum channel $\mathcal{N}_{A \to B}$ as 
\begin{equation}
E_\kappa^{\mathcal{A}}(\mathcal{N}_{A \to B}) \coloneqq \sup_{\rho_{A'AB'}} [
E_\kappa(\mathcal{N}_{A \to B}(\rho_{A'AB'})) - E_\kappa(\rho_{A'AB'})
].
\end{equation}
where the supremum is with respect to every state $\rho_{A'AB'}$, with the $A'$ and $B'$ systems arbitrary.
\end{definition}

In spite of the possibility that amortization might increase $E_\kappa$, a consequence of Proposition~\ref{prop:amort-ineq-E-kappa} is that in fact it does not:
\begin{proposition}
\label{prop:amort-does-not-increase-E-kappa}
Let $\mathcal{N}_{A \to B}$ be a quantum channel. Then the $\kappa$-entanglement of a channel does not increase under amortization:
\begin{equation}
E_\kappa^{\mathcal{A}}(\mathcal{N}_{A \to B}) = E_\kappa(\mathcal{N}_{A \to B}).
\end{equation}
\end{proposition}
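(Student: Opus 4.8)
The plan is to prove the equality $E_\kappa^{\mathcal{A}}(\mathcal{N}_{A\to B}) = E_\kappa(\mathcal{N}_{A\to B})$ by establishing two inequalities. The easier direction is $E_\kappa^{\mathcal{A}}(\mathcal{N}_{A\to B}) \geq E_\kappa(\mathcal{N}_{A\to B})$: this follows simply by restricting the supremum in the definition of the amortized quantity to states of the form $\rho_{A'AB'} = \phi_{RA}$ with a trivial $B'$ system, in which case $E_\kappa(\phi_{RA}) = 0$ whenever $\phi_{RA}$ is a pure state with positive partial transpose --- but more directly, one uses Proposition~\ref{eq:state-opt-for-E_kappa-ch}, which already expresses $E_\kappa(\mathcal{N}_{A\to B})$ as $\sup_{\rho_{RA}} E_\kappa(\mathcal{N}_{A\to B}(\rho_{RA}))$, and notes that any such $\rho_{RA}$ (viewed as a state on $A'A$ with $A' = R$ and no $B'$) is a valid choice in the amortized optimization with $E_\kappa(\rho_{RA}) \geq 0$ by faithfulness (Proposition~\ref{prop:faithfulness}). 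Actually, to get the clean lower bound one should take $\rho_{A'AB'}$ to be a pure state extension so that its $\kappa$-entanglement across the relevant cut vanishes; I would phrase it so that the gap term $E_\kappa(\rho_{A'AB'})$ can be made zero, recovering $E_\kappa(\mathcal{N}_{A\to B}(\rho_{RA}))$ exactly, then take the supremum.

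The reverse inequality $E_\kappa^{\mathcal{A}}(\mathcal{N}_{A\to B}) \leq E_\kappa(\mathcal{N}_{A\to B})$ is immediate from the amortization inequality already proved in Proposition~\ref{prop:amort-ineq-E-kappa}: for every state $\rho_{A'AB'}$ we have $E_\kappa(\mathcal{N}_{A\to B}(\rho_{A'AB'})) - E_\kappa(\rho_{A'AB'}) \leq E_\kappa(\mathcal{N}_{A\to B})$, and taking the supremum over all $\rho_{A'AB'}$ on the left-hand side preserves the inequality since the right-hand side does not depend on $\rho_{A'AB'}$. Combining the two inequalities yields the claimed equality.

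I would therefore write the proof in two short paragraphs. First, cite Proposition~\ref{prop:amort-ineq-E-kappa} to conclude $E_\kappa^{\mathcal{A}}(\mathcal{N}_{A\to B}) \leq E_\kappa(\mathcal{N}_{A\to B})$. Second, for the opposite inequality, invoke Proposition~\ref{eq:state-opt-for-E_kappa-ch} to write $E_\kappa(\mathcal{N}_{A\to B}) = \sup_{\rho_{RA}} E_\kappa(\mathcal{N}_{A\to B}(\rho_{RA}))$, and observe that for each fixed $\rho_{RA}$, choosing in the amortized optimization a purification (or indeed $\rho_{RA}$ itself with a trivial or PPT reference so that the subtracted term is zero) shows $E_\kappa^{\mathcal{A}}(\mathcal{N}_{A\to B}) \geq E_\kappa(\mathcal{N}_{A\to B}(\rho_{RA}))$; taking the supremum over $\rho_{RA}$ gives $E_\kappa^{\mathcal{A}}(\mathcal{N}_{A\to B}) \geq E_\kappa(\mathcal{N}_{A\to B})$.

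There is essentially no obstacle here, as both directions reduce to results already in hand; the only point requiring a little care is ensuring that in the amortized supremum one can legitimately make the subtracted term $E_\kappa(\rho_{A'AB'})$ equal to zero (or simply drop it using non-negativity from Proposition~\ref{prop:faithfulness}) so that the lower bound matches $\sup_{\rho_{RA}} E_\kappa(\mathcal{N}_{A\to B}(\rho_{RA}))$ exactly rather than with slack. This is handled by noting that states $\rho_{RA}$ with positive partial transpose (e.g.\ appropriate pure states after local processing, or simply absorbing the reference into a system on which the partial transpose stays positive) have vanishing $\kappa$-entanglement, so the supremum defining $E_\kappa^{\mathcal{A}}$ dominates the supremum defining $E_\kappa(\mathcal{N}_{A\to B})$ via Proposition~\ref{eq:state-opt-for-E_kappa-ch}.
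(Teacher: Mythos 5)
Your proposal is correct and follows essentially the same route as the paper: the inequality $E_\kappa^{\mathcal{A}}(\mathcal{N}_{A\to B})\le E_\kappa(\mathcal{N}_{A\to B})$ comes directly from Proposition~\ref{prop:amort-ineq-E-kappa}, and the reverse inequality from Proposition~\ref{eq:state-opt-for-E_kappa-ch} by taking $A'=R$ and $B'$ trivial so that the subtracted term vanishes. One point to state cleanly: that term vanishes because, with $B'$ trivial, the input state is trivially PPT across the relevant $A'A|B'$ cut (Proposition~\ref{prop:faithfulness}), not because $\rho_{RA}$ is pure, purified, or PPT across $R|A$ --- restricting to inputs PPT across $R|A$ would not recover the full supremum in Proposition~\ref{eq:state-opt-for-E_kappa-ch}.
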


\begin{proof}
The inequality 
$E_\kappa^{\mathcal{A}}(\mathcal{N}_{A \to B}) \geq E_\kappa(\mathcal{N}_{A \to B})$ follows from Proposition~\ref{eq:state-opt-for-E_kappa-ch}, by identifying $A'$ with $R$, setting $B'$ to be a trivial system, and noting that 
$E_\kappa(\rho_{A'AB'})$ vanishes for this choice. The opposite inequality is a direct consequence of Proposition~\ref{prop:amort-ineq-E-kappa}.
\end{proof}

\begin{theorem}
[Monotonicity]
Let $\mathcal{M}_{\hat{A}\rightarrow\hat{B}
}$ be a quantum channel and  $\Theta^{\operatorname{PPT}}$ a completely-PPT-preserving superchannel of the form in \eqref{eq:superchannel-action}.
The channel measure $E_{\kappa}$ is monotone
under the action of the superchannel $\Theta^{\operatorname{PPT}}$, in the sense that%
\begin{equation}
E_{\kappa}(\mathcal{M}_{\hat{A}\rightarrow\hat{B}}) 
\geq 
E_{\kappa}(\Theta^{\operatorname{PPT}}(\mathcal{M}_{\hat{A}\rightarrow\hat{B}%
})).
\end{equation}

\end{theorem}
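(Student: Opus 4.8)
The plan is to reduce the statement to the amortization results already in hand. By Proposition~\ref{prop:amort-does-not-increase-E-kappa}, we have $E_{\kappa}(\mathcal{N}_{A\to B}) = E_{\kappa}^{\mathcal{A}}(\mathcal{N}_{A\to B})$ for every channel, so with $\mathcal{N}_{A\to B} \coloneqq \Theta^{\operatorname{PPT}}(\mathcal{M}_{\hat A\to\hat B})$ it suffices to prove $E_{\kappa}^{\mathcal{A}}(\mathcal{N}_{A\to B}) \le E_{\kappa}(\mathcal{M}_{\hat A\to\hat B})$. Fix an arbitrary state $\rho_{A'AB'}$ entering the definition of $E_{\kappa}^{\mathcal{A}}(\mathcal{N}_{A\to B})$, and decompose $\mathcal{N}_{A\to B} = \mathcal{P}^{\text{post}}_{A_M\hat B B_M\to B}\circ\mathcal{M}_{\hat A\to\hat B}\circ\mathcal{P}^{\text{pre}}_{A\to\hat A A_M B_M}$ as in \eqref{eq:superchannel-action}. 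I would then process the bound in three stages, peeling off the pre-processing and post-processing with state monotonicity (Theorem~\ref{prop:ent-monotone}) and handling the middle channel with the amortization inequality (Proposition~\ref{prop:amort-ineq-E-kappa}).

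Concretely, first set $\omega_{A'\hat A A_M B'B_M} \coloneqq \mathcal{P}^{\text{pre}}_{A\to\hat A A_M B_M}(\rho_{A'AB'})$. The map $\mathcal{P}^{\text{pre}}_{A\to\hat A A_M B_M}\otimes\operatorname{id}_{A'B'}$ is a completely-PPT-preserving channel carrying the bipartite cut $(A'A)\,|\,(B')$ to the cut $(A'\hat A A_M)\,|\,(B'B_M)$, so Theorem~\ref{prop:ent-monotone} gives $E_{\kappa}(\omega) \le E_{\kappa}(\rho_{A'AB'})$. Second, apply Proposition~\ref{prop:amort-ineq-E-kappa} to the channel $\mathcal{M}_{\hat A\to\hat B}$ with pre-shared state $\omega$, identifying the role of ``$A'$'' with $A'A_M$ and that of ``$B'$'' with $B'B_M$; this yields $E_{\kappa}(\mathcal{M}_{\hat A\to\hat B}(\omega)) \le E_{\kappa}(\mathcal{M}_{\hat A\to\hat B}) + E_{\kappa}(\omega)$, where $\mathcal{M}_{\hat A\to\hat B}(\omega)$ is read with respect to the cut $(A'A_M)\,|\,(\hat B B'B_M)$. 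Third, note $\mathcal{N}_{A\to B}(\rho_{A'AB'}) = \mathcal{P}^{\text{post}}_{A_M\hat B B_M\to B}(\mathcal{M}_{\hat A\to\hat B}(\omega))$ and that $\mathcal{P}^{\text{post}}_{A_M\hat B B_M\to B}\otimes\operatorname{id}_{A'B'}$ is a completely-PPT-preserving channel taking the cut $(A'A_M)\,|\,(\hat B B'B_M)$ to $(A')\,|\,(BB')$, so Theorem~\ref{prop:ent-monotone} gives $E_{\kappa}(\mathcal{N}_{A\to B}(\rho_{A'AB'})) \le E_{\kappa}(\mathcal{M}_{\hat A\to\hat B}(\omega))$.

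Chaining the three inequalities produces $E_{\kappa}(\mathcal{N}_{A\to B}(\rho_{A'AB'})) - E_{\kappa}(\rho_{A'AB'}) \le E_{\kappa}(\mathcal{M}_{\hat A\to\hat B})$; taking the supremum over all $\rho_{A'AB'}$ gives $E_{\kappa}^{\mathcal{A}}(\mathcal{N}_{A\to B}) \le E_{\kappa}(\mathcal{M}_{\hat A\to\hat B})$, and combining with $E_{\kappa}(\mathcal{N}_{A\to B}) = E_{\kappa}^{\mathcal{A}}(\mathcal{N}_{A\to B})$ from Proposition~\ref{prop:amort-does-not-increase-E-kappa} completes the argument.

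I expect the only delicate point to be the bookkeeping of bipartite cuts: one must check that, after tensoring in the spectator systems $A'$ and $B'$, each of $\mathcal{P}^{\text{pre}}$ and $\mathcal{P}^{\text{post}}$ is completely-PPT-preserving with respect to precisely the input/output partitions used above (so that Theorem~\ref{prop:ent-monotone} applies), and that the system identifications fed into Proposition~\ref{prop:amort-ineq-E-kappa} are compatible with those partitions. Once the cuts are tracked correctly, the proof is just a concatenation of already-established monotonicity and amortization bounds, with no new calculation required.
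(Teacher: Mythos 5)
Your proposal is correct and follows essentially the same route as the paper: peel off the pre- and post-processing maps with the state monotonicity of Theorem~\ref{prop:ent-monotone}, control the middle use of $\mathcal{M}_{\hat A\to\hat B}$ via the amortization inequality of Proposition~\ref{prop:amort-ineq-E-kappa}, and conclude with the amortization collapse of Proposition~\ref{prop:amort-does-not-increase-E-kappa}; your bookkeeping of the bipartite cuts matches what the paper's argument implicitly uses. The only difference is cosmetic (you invoke the amortization inequality for $\mathcal{M}$ directly rather than bounding by $E_{\kappa}^{\mathcal{A}}(\mathcal{M})$ and then using its collapse), which is an equivalent arrangement of the same ingredients.
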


\begin{proof}
The proof is similar to that of  \cite[Proposition~6]{BHKW18}. Let
$\rho_{A^{\prime}AB^{\prime}}$ be an arbitrary input state. Then we have that
\begin{align}
& E_{\kappa}(\mathcal{N}_{A\rightarrow B}(\rho_{A^{\prime}AB^{\prime}%
}))-E_{\kappa}(\rho_{A^{\prime}AB^{\prime}}) \notag \\
& =E_{\kappa}((\mathcal{P}_{A_{M}\hat{B}B_{M}}^{\text{post}}\circ
\mathcal{M}_{\hat{A}\rightarrow\hat{B}}\circ\mathcal{P}_{A\rightarrow\hat
{A}A_{M}B_{M}}^{\text{pre}})(\rho_{A^{\prime}AB^{\prime}}))\notag \\
& \qquad -E_{\kappa}%
(\rho_{A^{\prime}AB^{\prime}})\\
& \leq E_{\kappa}((\mathcal{P}_{A_{M}\hat{B}B_{M}}^{\text{post}}%
\circ\mathcal{M}_{\hat{A}\rightarrow\hat{B}}\circ\mathcal{P}_{A\rightarrow
\hat{A}A_{M}B_{M}}^{\text{pre}})(\rho_{A^{\prime}AB^{\prime}}))\notag \\
& \qquad -E_{\kappa
}(\mathcal{P}_{A\rightarrow\hat{A}A_{M}B_{M}}^{\text{pre}}(\rho_{A^{\prime
}AB^{\prime}}))\\
& \leq E_{\kappa}((\mathcal{M}_{\hat{A}\rightarrow\hat{B}}\circ\mathcal{P}%
_{A\rightarrow\hat{A}A_{M}B_{M}}^{\text{pre}})(\rho_{A^{\prime}AB^{\prime}%
}))\notag \\
& \qquad -E_{\kappa}(\mathcal{P}_{A\rightarrow\hat{A}A_{M}B_{M}}^{\text{pre}}%
(\rho_{A^{\prime}AB^{\prime}}))\\
& \leq E_{\kappa}^{\mathcal{A}}(\mathcal{M}_{\hat{A}\rightarrow\hat{B}})\\
& =E_{\kappa}(\mathcal{M}_{\hat{A}\rightarrow\hat{B}}).
\end{align}
The first inequality follows because $E_{\kappa}(\mathcal{P}_{A\rightarrow
\hat{A}A_{M}B_{M}}^{\text{pre}}(\rho_{A^{\prime}AB^{\prime}}))\leq E_{\kappa
}(\rho_{A^{\prime}AB^{\prime}})$, given that $E_{\kappa}$ does not increase
under the action of the completely PPT-preserving channel $\mathcal{P}%
_{A\rightarrow\hat{A}A_{M}B_{M}}^{\text{pre}}$ (Proposition~\ref{prop:ent-monotone}). The second inequality follows
from a similar reasoning, but with respect to the completely-PPT-preserving channel $\mathcal{P}%
_{A_{M}\hat{B}B_{M}}^{\text{post}}$. The last inequality follows because
$\mathcal{P}_{A\rightarrow\hat{A}A_{M}B_{M}}^{\text{pre}}(\rho_{A^{\prime
}AB^{\prime}})$ is a particular bipartite state to consider at the input of
the channel $\mathcal{M}_{\hat{A}\rightarrow\hat{B}}$, but the quantity
$E_{\kappa}^{\mathcal{A}}$ involves an optimization over all such states. The
final equality is a consequence of
Proposition~\ref{prop:amort-does-not-increase-E-kappa}.
\end{proof}

\begin{remark} We remark here that the same inequality holds for the max-Rains information of a channel $R_{\max}(\cN)$, defined in \cite{WD16,WFD17} and considered further in \cite{BW17} (see also \cite{TWW14}). That is, for $\mathcal{M}_{\hat{A}\rightarrow\hat{B}
}$ a quantum channel and  $\Theta^{\operatorname{PPT}}$ a completely-PPT-preserving superchannel of the form in \eqref{eq:superchannel-action},
the following inequality holds
\begin{equation}
R_{\max}(\mathcal{M}_{\hat{A}\rightarrow\hat{B}}) 
\geq 
R_{\max}(\Theta^{\operatorname{PPT}}(\mathcal{M}_{\hat{A}\rightarrow\hat{B}%
})).
\end{equation}
This follows because $R_{\max}$ does not increase under amortization, as shown in \cite{BW17}, and because the max-Rains relative entropy does not increase under the action of a completely-PPT-preserving channel \cite{WD16pra}.

Furthermore, a similar inequality holds for the squashed entanglement  $E_{\operatorname{sq}}$ of a channel and for a channel's max-relative entropy of entanglement $E_{\max}$. In particular, let $\Theta^{\operatorname{LOCC}}$ denote an LOCC superchannel, which realizes the following transformation of a channel
$\mathcal{M}_{\hat{A}\rightarrow\hat{B}}$
to a channel $\mathcal{N}_{A\rightarrow B}$
in terms of 
LOCC channels $\mathcal{L}_{A\rightarrow\hat{A}A_{M}B_{M}%
}^{\operatorname{pre}}$ and $\mathcal{L}_{A_{M}\hat{B}B_{M}}^{\operatorname{post}}$:
\begin{align}
\mathcal{N}_{A\rightarrow B}& =\Theta^{\operatorname{LOCC}}(\mathcal{M}_{\hat
{A}\rightarrow\hat{B}})\\
& \coloneqq\mathcal{L}_{A_{M}\hat{B}B_{M}}^{\operatorname{post}}%
\circ\mathcal{M}_{\hat{A}\rightarrow\hat{B}}\circ\mathcal{L}_{A\rightarrow
\hat{A}A_{M}B_{M}}^{\operatorname{pre}}.
\label{eq:superchannel-action-LOCC}
\end{align} 
Then the following inequalities hold:
\begin{align}
E_{\operatorname{sq}}(\mathcal{M}_{\hat{A}\rightarrow\hat{B}}) 
& \geq 
E_{\operatorname{sq}}(\Theta^{\operatorname{LOCC}}(\mathcal{M}_{\hat{A}\rightarrow\hat{B}%
}))\\
E_{\max}(\mathcal{M}_{\hat{A}\rightarrow\hat{B}}) 
& \geq 
E_{\max}(\Theta^{\operatorname{LOCC}}(\mathcal{M}_{\hat{A}\rightarrow\hat{B}%
})),
\end{align}
with both inequalities following because these measures do not increase under amortization, as shown in \cite{TGW14,TGW14B} and \cite{Christandl2017}, respectively, and the squashed entanglement \cite{CW04} and max-relative entropy of entanglement of states \cite{D09,Dat09} do not increase under LOCC channels.
\end{remark}

\subsection{Dual representation and additivity}

The optimization that is dual to \eqref{eq:kappa-channel-primal} is as follows:
\begin{multline}
\label{eq:a dual channel}
E^{\text{dual}}_\kappa(\cN_{A\to B}) \coloneqq 
\\
\sup_{\substack{V_{AB}^{T_B},\, W_{AB}^{T_B},
 \rho_{A}\ge 0}} \{ \log_2 \tr J_{AB}^{\cN}(V_{AB}-W_{AB}) :\\  V_{AB}+W_{AB}\le \rho_A\ox\1_B, \, \tr \rho_A=1 \}.
\end{multline}
This follows from applying the Lagrange multiplier method.
By weak duality, we have that
\begin{equation}
E^{\text{dual}}_\kappa(\cN_{A\to B}) \leq E_\kappa(\cN_{A\to B}).
\label{eq:weak-duality-kappa-channel}
\end{equation}
If the channel $\mathcal{N}_{A \to B}$ is finite-dimensional, then strong duality holds, so that
\begin{equation}
E^{\text{dual}}_\kappa(\cN_{A\to B}) = E_\kappa(\cN_{A\to B}).
\end{equation}
Furthermore, by employing the fact that $E^{\text{dual}}_\kappa(\cN_{A\to B}) = \sup_{\rho_{RA}} E^{\text{dual}}_\kappa(\cN_{A\to B}(\rho_{RA}))$, Proposition~\ref{eq:state-opt-for-E_kappa-ch}, and \eqref{eq:strong-dual-e-kappa-inf-dim}, we conclude that the following equality holds for a quantum channel~$\mathcal{N}_{A \to B}$:
\begin{equation}
E^{\text{dual}}_\kappa(\cN_{A\to B}) = E_\kappa(\cN_{A\to B}).
\label{eq:strong-dual-eq-ch-infty}
\end{equation}

The additivity of $E_\kappa$ with respect to tensor-product channels follows from both the primal and dual representations of $E_\kappa(\cN)$.

\begin{proposition}[Additivity]
\label{prop: add kappa channel}
   Given two  quantum channels $\cN_{A\to B}$ and $\cM_{A'\to B'}$, it holds that 
	\begin{align}
		E_\kappa(\cN_{A\to B} \ox\cM_{A'\to B'})=E_\kappa(\cN_{A\to B})+E_\kappa(\cM_{A'\to B'}).
	\end{align}
\end{proposition}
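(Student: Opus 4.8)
The plan is to adapt the two-sided argument used for states in Proposition~\ref{lemma:add} to the channel SDPs: prove the subadditivity inequality $E_\kappa(\cN_{A\to B}\ox\cM_{A'\to B'})\le E_\kappa(\cN_{A\to B})+E_\kappa(\cM_{A'\to B'})$ from the primal form \eqref{eq:kappa-channel-primal}, prove the reverse (superadditivity) inequality from the dual form \eqref{eq:a dual channel}, and then invoke the strong-duality equality \eqref{eq:strong-dual-eq-ch-infty}, valid for arbitrary quantum channels, to identify the two and conclude.

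For subadditivity, I would take operators $Q_{AB}$ and $Q'_{A'B'}$ feasible for $E_\kappa(\cN_{A\to B})$ and $E_\kappa(\cM_{A'\to B'})$, respectively, and argue that $Q_{AB}\ox Q'_{A'B'}$ is feasible for $E_\kappa(\cN_{A\to B}\ox\cM_{A'\to B'})$. Here I use that $J^{\cN\ox\cM}_{ABA'B'}=J_{AB}^{\cN}\ox J_{A'B'}^{\cM}$, so that, writing $X_\pm\coloneqq Q_{AB}^{T_B}\pm(J_{AB}^{\cN})^{T_B}\ge0$ and $Y_\pm\coloneqq (Q'_{A'B'})^{T_{B'}}\pm(J_{A'B'}^{\cM})^{T_{B'}}\ge0$, the partial-transpose sandwich constraint tensorizes through the identity
\[
(Q_{AB}\ox Q'_{A'B'})^{T_{BB'}}\pm(J_{AB}^{\cN}\ox J_{A'B'}^{\cM})^{T_{BB'}}=\tfrac12\bigl(X_+\ox Y_\pm+X_-\ox Y_\mp\bigr)\ge0,
\]
using that a tensor product of positive semidefinite operators is positive semidefinite. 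Since moreover $\tr_{BB'}(Q_{AB}\ox Q'_{A'B'})=(\tr_B Q_{AB})\ox(\tr_{B'}Q'_{A'B'})$ and the operator norm is multiplicative on tensor products, we obtain $\bigl\Vert\tr_{BB'}(Q_{AB}\ox Q'_{A'B'})\bigr\Vert_\infty=\Vert\tr_B Q_{AB}\Vert_\infty\cdot\Vert\tr_{B'}Q'_{A'B'}\Vert_\infty$; taking logarithms and then infima over $Q_{AB}$ and $Q'_{A'B'}$ gives the subadditivity inequality.

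For superadditivity, I would take $\{V^1_{AB},W^1_{AB},\rho^1_A\}$ and $\{V^2_{A'B'},W^2_{A'B'},\rho^2_{A'}\}$ feasible for $E^{\text{dual}}_\kappa(\cN_{A\to B})$ and $E^{\text{dual}}_\kappa(\cM_{A'\to B'})$, and -- exactly as in the state proof -- set $V_{ABA'B'}=V^1_{AB}\ox V^2_{A'B'}+W^1_{AB}\ox W^2_{A'B'}$, $W_{ABA'B'}=V^1_{AB}\ox W^2_{A'B'}+W^1_{AB}\ox V^2_{A'B'}$, and $\rho_{AA'}=\rho^1_A\ox\rho^2_{A'}$. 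One checks that $V_{ABA'B'}^{T_{BB'}},W_{ABA'B'}^{T_{BB'}}\ge0$ (each being a sum of tensor products of $(V^1_{AB})^{T_B}$, $(W^1_{AB})^{T_B}$, $(V^2_{A'B'})^{T_{B'}}$, $(W^2_{A'B'})^{T_{B'}}$, all positive semidefinite), that $V_{ABA'B'}+W_{ABA'B'}=(V^1_{AB}+W^1_{AB})\ox(V^2_{A'B'}+W^2_{A'B'})\le(\rho^1_A\ox\1_B)\ox(\rho^2_{A'}\ox\1_{B'})=\rho_{AA'}\ox\1_{BB'}$, and that $\rho_{AA'}\ge0$ with $\tr\rho_{AA'}=1$, so that $\{V_{ABA'B'},W_{ABA'B'},\rho_{AA'}\}$ is feasible for $E^{\text{dual}}_\kappa(\cN_{A\to B}\ox\cM_{A'\to B'})$. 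Since $V_{ABA'B'}-W_{ABA'B'}=(V^1_{AB}-W^1_{AB})\ox(V^2_{A'B'}-W^2_{A'B'})$ and the Choi operator factorizes, the dual objective $\tr[J^{\cN\ox\cM}_{ABA'B'}(V_{ABA'B'}-W_{ABA'B'})]$ equals the product $\tr[J_{AB}^{\cN}(V^1_{AB}-W^1_{AB})]\cdot\tr[J_{A'B'}^{\cM}(V^2_{A'B'}-W^2_{A'B'})]$; restricting to feasible points with positive objective (harmless since $E^{\text{dual}}_\kappa\ge0$, as witnessed by $V=\rho_A\ox\1_B$, $W=0$, for which the objective is $\tr[\rho_A\tr_B J_{AB}^{\cN}]=\tr\rho_A=1$), I take logarithms and then suprema to obtain $E^{\text{dual}}_\kappa(\cN\ox\cM)\ge E^{\text{dual}}_\kappa(\cN)+E^{\text{dual}}_\kappa(\cM)$. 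Combining the two inequalities with \eqref{eq:strong-dual-eq-ch-infty} applied to $\cN_{A\to B}$, $\cM_{A'\to B'}$, and $\cN_{A\to B}\ox\cM_{A'\to B'}$ yields the desired equality. I expect the only subtle point to be the bookkeeping behind the feasibility verifications -- above all the tensorization of the partial-transpose sandwich constraint in the subadditivity step, which the $X_\pm,Y_\pm$ decomposition handles -- while the remainder reduces to the factorization of Choi operators, multiplicativity of the operator norm, and positivity of tensor products of positive operators.
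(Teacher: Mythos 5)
Your proof is correct and takes essentially the same route as the paper's: subadditivity from the primal SDP by tensoring feasible operators $Q_{AB}\ox P_{A'B'}$, superadditivity from the dual via the cross construction $V^1\ox V^2+W^1\ox W^2$ and $V^1\ox W^2+W^1\ox V^2$ together with $\rho^1_A\ox\rho^2_{A'}$, and identification of the two bounds through the strong-duality equality \eqref{eq:strong-dual-eq-ch-infty}. The only difference is that you spell out details the paper leaves implicit, namely the tensorization of the partial-transpose sandwich constraint via the $X_{\pm},Y_{\pm}$ identity and the positivity of the dual objective before taking logarithms, while the remaining feasibility verifications coincide with those in the paper's proof of Proposition~\ref{prop: add kappa channel}.
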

\begin{proof}
	The proof is similar to that of  Proposition~\ref{lemma:add}. To be self-contained, we show the details as follows.
First, by definition, we can write $E_\kappa(\mathcal{N}_{A\to B})$ as 
	\begin{multline}
	E_\kappa(\mathcal{N}_{A\to B})=\inf_{Q_{AB}\ge 0} \{\log_2 \Vert \tr_B Q_{AB}\Vert_\infty:\\
	 -Q^{T_B}_{AB} \le (J_{AB}^{\cN})^{T_B} \le Q^{T_B}_{AB}\}.
		\end{multline}
		Let $Q_{AB}$ be an arbitrary operator satisfying $-Q^{T_B}_{AB} \le (J_{AB}^{\cN})^{T_B} \le Q^{T_B}_{AB}, \, Q_{AB}\ge 0$, and let $P_{A'B'}$ be an arbitrary operator satisfying $-P^{T_{B'}}_{A'B'} \le (J_{A'B'}^{\cM})^{T_{B'}} \le P^{T_{B'}}_{A'B'}, \, P_{A'B'}\ge 0$. Then $Q_{AB}\otimes P_{A'B'}$ satisfies
		\begin{align}
		-(Q_{AB}\otimes P_{A'B'})^{T_{BB'}} & \le (J_{AB}^{\cN} \otimes J_{A'B'}^{\cM})^{T_{BB'}} \notag \\
		& \le (Q_{AB}\otimes P_{A'B'})^{T_{BB'}}, \\
		 Q_{AB}\otimes P_{A'B'}& \ge 0,
		\end{align}
		so that
		\begin{align}
		& E_\kappa(\mathcal{N}_{A\to B}\otimes \mathcal{M}_{A'\to B'}) \notag \\
		& \leq
		\log_2 \Vert\tr_{BB'} Q_{AB}\otimes P_{A'B'} \Vert_\infty \\
		& = \log_2 \Vert\tr_{B} Q_{AB} \Vert_\infty +
		\log_2 \Vert\tr_{B'}P_{A'B'} \Vert_\infty.
		\end{align}		
 Since the inequality holds for all $Q_{AB}$ and $P_{A'B'}$ satisfying the above conditions, we conclude that
	\begin{align}\label{eq:a sub channel}
		E_\kappa(\cN \ox \cM)\le E_\kappa(\cN)+E_\kappa(\cM).
	\end{align}
	
	To see the super-additivity of $E_\kappa$ for quantum channels, let us suppose
	that $\{V^1_{AB},W^1_{AB},\rho^1_A\}$ and $\{V^2_{A'B'},W^2_{A'B'},\rho^2_{A'}\}$
	are arbitrary operators satisfying the conditions in \eqref{eq:a dual channel} for
	$\cN_{A\to B}$ and $\cM_{A'\to B'}$, respectively.
	Now we choose 
	\begin{align}
		R_{ABA'B'}
		&=V^1_{AB} \ox V^2_{A'B'}
		+ W^1_{AB} \ox W^2_{A'B'},\\
		S_{ABA'B'}
		&=V^1_{AB} \ox W^2_{A'B'}
		+ W^1_{AB} \ox V^2_{A'B'}.
	\end{align}
	One can verify from \eqref{eq:a dual channel} that
	\begin{align}
	R_{ABA'B'}^{T_{BB'}},\, S_{ABA'B'}^{T_{BB'}}& \ge 0, \\
	R_{ABA'B'}+S_{ABA'B'} & =(V^1_{AB}+W^1_{AB})\ox(V^2_{A'B'}+W^2_{A'B'})\notag \\
	& \le \rho_{A}^1\ox\rho_{A'}^2\ox\1_{BB'},
	\end{align}
	which implies that $\{R_{ABA'B'},S_{ABA'B'},\rho^1_{A}\ox\rho^2_{A'}\}$ is feasible for 
	$E_\kappa(\mathcal{N}_{A\to B}\ox\mathcal{M}_{A'\to B'})$
	in \eqref{eq:a dual channel}. 
	Thus, we have that
	\begin{align}	 
	& E^{\text{dual}}_\kappa(\cN_{A\to B}\ox\cM_{A'\to B'}) \notag \\
	&
	\ge \log_2 \tr (J_{AB}^{\cN}\ox J_{A'B'}^{\cM}) (R_{ABA'B'}-S_{ABA'B'})\\
	&=\log_2 [\tr J_{AB}^{\cN} (V^1_{AB}-W^1_{AB}) \cdot \tr J_{A'B'}^{\cM} (V^2_{A'B'}-W^2_{A'B'})]\\
	&= \log_2 (\tr J_{AB}^{\cN}  (V^1_{AB}-W^1_{AB})) \notag \\
	& \qquad + \log_2( \tr J_{A'B'}^{\cM}  (V^2_{A'B'}-W^2_{A'B'})).
	\end{align}
	Since the inequality has been shown for arbitrary $\{V^1_{AB},W^1_{AB},\rho_{A}^1\}$ and $\{V^2_{A'B'},W^2_{A'B'},\rho_{A'}^2\}$
	satisfying the conditions in \eqref{eq:a dual channel} for
	$\cN_{A\to B}$ and $\cM_{A'\to B'}$, respectively, we conclude that
	\begin{multline}
	E^{\text{dual}}_\kappa(\cN_{A\to B}\ox\cM_{A'\to B'}) \geq 
E^{\text{dual}}_\kappa(\cN_{A\to B})\\
+E^{\text{dual}}_\kappa(\cM_{A'\to B'}).
\label{eq:add-dual-ineq-ch}
	\end{multline}
	The proof is concluded by combining 
\eqref{eq:a sub channel}, \eqref{eq:add-dual-ineq-ch}, and \eqref{eq:strong-dual-eq-ch-infty}.
\end{proof}

\subsection{Normalization, faithfulness, and no convexity}

In this subsection, we prove that the $\kappa$-entanglement of a quantum channel is normalized, faithful, and generally not convex.

\begin{proposition}
[Normalization]
\label{prop:normalization-channels}
Let $\operatorname{id}_{A\to B}^{M}$ be a noiseless quantum channel with dimension~$d_A=d_B=M$. Then%
\begin{equation}
E_{\kappa}(\operatorname{id}^{M})=\log_2 M.
\end{equation}
Moreover, for every finite-dimensional quantum channel $\cN_{A\to B}$,
\begin{equation}
E_{\kappa}(\cN_{A\to B})\le \min \{\log_2 d_A, \log_2 d_B\}.
\end{equation}
\end{proposition}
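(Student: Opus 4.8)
The plan is to deduce both statements from facts already established for bipartite states, using the channel-to-state identity $E_{\kappa}(\mathcal{N}_{A\to B})=\sup_{\rho_{RA}}E_{\kappa}(\mathcal{N}_{A\to B}(\rho_{RA}))$ of Proposition~\ref{eq:state-opt-for-E_kappa-ch} together with the normalization and dimension bound for states in Proposition~\ref{prop:normalize MES}. No new machinery is needed; the work is entirely in tracking Hilbert-space dimensions.

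First I would prove the general upper bound $E_{\kappa}(\mathcal{N}_{A\to B})\le\log\min\{d_A,d_B\}$. By Proposition~\ref{eq:state-opt-for-E_kappa-ch} it suffices to bound $E_{\kappa}(\mathcal{N}_{A\to B}(\phi_{RA}))$ for pure inputs $\phi_{RA}$ with $R$ isomorphic to $A$ (this reduction is exactly what is shown inside the proof of that proposition). For such an input, the output state $\mathcal{N}_{A\to B}(\phi_{RA})$ acts on $R\otimes B$ with $d_R=d_A$, so the dimension bound \eqref{eq:dim bound of EK} gives $E_{\kappa}(\mathcal{N}_{A\to B}(\phi_{RA}))\le\log\min\{d_R,d_B\}=\log\min\{d_A,d_B\}$. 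Taking the supremum over all admissible inputs yields the claimed bound.

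For the noiseless channel $\operatorname{id}^{M}$ with $d_A=d_B=M$, the inequality $E_{\kappa}(\operatorname{id}^{M})\le\log M$ is then just the special case of the bound just proved. For the matching lower bound, I would feed in the maximally entangled state $\Phi^{M}_{RA}$ with $R\simeq A$; its image under $\operatorname{id}^{M}$ is $\Phi^{M}_{RB}$, so Proposition~\ref{eq:state-opt-for-E_kappa-ch} together with the normalization identity \eqref{eq:normalize MES} gives $E_{\kappa}(\operatorname{id}^{M})\ge E_{\kappa}(\Phi^{M}_{RB})=\log M$. Combining the two inequalities gives $E_{\kappa}(\operatorname{id}^{M})=\log M$.

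I do not expect a genuine obstacle here, since Proposition~\ref{eq:state-opt-for-E_kappa-ch} and Proposition~\ref{prop:normalize MES} already carry all the weight. The one point deserving a little care is controlling the input-reference dimension: one should invoke the reduction to pure inputs with $R\simeq A$ from the proof of Proposition~\ref{eq:state-opt-for-E_kappa-ch}, rather than attempting to exhibit a feasible $Q_{AB}$ directly in the primal program \eqref{eq:kappa-channel-primal}, which would be more cumbersome. (An operational alternative for the $\operatorname{id}^{M}$ lower bound — noting that simulating $\operatorname{id}^{M}$ in particular transmits half of $\Phi^{M}$ — also works, but the state-optimization route is cleaner and self-contained.)
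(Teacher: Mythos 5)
Your proposal is correct and follows essentially the same route as the paper's proof: reduce to pure inputs with $R\simeq A$ via Proposition~\ref{eq:state-opt-for-E_kappa-ch}, apply the state dimension bound of Proposition~\ref{prop:normalize MES} for the general upper bound, and obtain the lower bound for $\operatorname{id}^{M}$ by inputting $\Phi^{M}_{RA}$ and invoking the normalization on maximally entangled states. No substantive differences.
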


\begin{proof}
By Propositions~\ref{prop:normalize MES} and \ref{eq:state-opt-for-E_kappa-ch}, we have
\begin{align}
E_{\kappa}(\cN_{A\to B})& =\sup_{\rho_{RA}}E_{\kappa
}(\cN_{A\to B}(\rho_{RA}
))\\
&  = \sup_{\psi_{RA}}E_{\kappa
}(\cN_{A\to B}(\psi_{RA}%
)) \\
& \le \log_2 \min \{ d_A,  d_B\},
\end{align}
where, in the second equality, the optimization is with respect to pure states with system $R$ isomorphic to the channel input system $A$.

This implies that $E_{\kappa}(\operatorname{id}^{M})\le\log_2 M$. Furthermore,
\begin{align}
E_{\kappa}(\operatorname{id}^{M})
\geq E_{\kappa
}(\operatorname{id}_{A\rightarrow B}(\Phi^M_{RA}%
)) = \log_2 M,
\end{align}
where $\Phi^M_{RA}$ denotes a maximally entangled state of Schmidt rank~$M$ and the second equality follows from Proposition~\ref{prop:normalize MES}.
\end{proof}

\begin{proposition}
[Faithfulness]
\label{prop:faithful-channels}
Let $\cN_{A\to B}$  be a quantum channel. Then $E_{\kappa}(\cN_{A\to B})\geq0$ and $E_{\kappa}(\cN_{A\to B})=0$ if and
only if $\cN_{A\to B}$ is a PPT entanglement binding channel \cite{HHH00}.
\end{proposition}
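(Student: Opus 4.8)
The plan is to reduce the claim to the already-established faithfulness of $E_\kappa$ for bipartite states (Proposition~\ref{prop:faithfulness}) via the variational formula $E_{\kappa}(\mathcal{N}_{A\rightarrow B})=\sup_{\rho_{RA}}E_{\kappa}(\mathcal{N}_{A\rightarrow B}(\rho_{RA}))$ from Proposition~\ref{eq:state-opt-for-E_kappa-ch}, supplemented by a one-line feasibility check in the primal SDP \eqref{eq:kappa-channel-primal}.

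First I would establish non-negativity: since $E_{\kappa}(\mathcal{N}_{A\rightarrow B})=\sup_{\rho_{RA}}E_{\kappa}(\mathcal{N}_{A\rightarrow B}(\rho_{RA}))$ and $E_\kappa(\sigma_{RB})\geq 0$ for every bipartite state by Proposition~\ref{prop:faithfulness}, we get $E_\kappa(\mathcal{N}_{A\rightarrow B})\geq 0$. (Equivalently, one can take $V_{AB}=\rho_A\otimes\1_B$ for any state $\rho_A$ and $W_{AB}=0$ as a feasible point of the dual \eqref{eq:a dual channel}; the constraints are immediate, and using $\operatorname{Tr}_B J_{AB}^{\mathcal{N}}=\1_A$ the objective equals $\log\operatorname{Tr}[\rho_A\1_A]=0$, so $E_\kappa(\mathcal{N}_{A\rightarrow B})\geq 0$ follows by weak duality \eqref{eq:weak-duality-kappa-channel}.) Next I would record the dictionary between the operational notion and a Choi condition: $\mathcal{N}_{A\rightarrow B}$ is a PPT entanglement binding channel \cite{HHH00} --- i.e., $\mathcal{N}_{A\rightarrow B}(\rho_{RA})$ is a PPT state for every input state $\rho_{RA}$ with $R$ arbitrary --- if and only if $(J_{AB}^{\mathcal{N}})^{T_B}\geq 0$. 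The forward direction comes from taking $\rho_{RA}$ maximally entangled; the backward direction follows by writing an arbitrary pure input as $\psi_{RA}=X_R\Gamma_{RA}X_R^{\dagger}$, so that $[\mathcal{N}_{A\rightarrow B}(\psi_{RA})]^{T_B}=X_R (J_{RB}^{\mathcal{N}})^{T_B} X_R^{\dagger}\geq 0$, and then extending to arbitrary mixed inputs on arbitrary $R$ using that positivity of the partial transpose is preserved under local channels acting on $R$ and that every mixed input is the image of a pure one under such a channel.

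With this dictionary the equivalence is immediate in both directions. If $\mathcal{N}_{A\rightarrow B}$ is PPT entanglement binding, then $(J_{AB}^{\mathcal{N}})^{T_B}\geq 0$, and I would take $Q_{AB}=J_{AB}^{\mathcal{N}}$ in \eqref{eq:kappa-channel-primal}: indeed $Q_{AB}\geq 0$, the upper constraint $(J_{AB}^{\mathcal{N}})^{T_B}\leq Q_{AB}^{T_B}$ holds with equality, and the lower constraint reads $-(J_{AB}^{\mathcal{N}})^{T_B}\leq (J_{AB}^{\mathcal{N}})^{T_B}$, which is true because $(J_{AB}^{\mathcal{N}})^{T_B}\geq 0$; the objective is $\log\|\operatorname{Tr}_B J_{AB}^{\mathcal{N}}\|_{\infty}=\log\|\1_A\|_{\infty}=0$, so $E_\kappa(\mathcal{N}_{A\rightarrow B})\leq 0$, hence $E_\kappa(\mathcal{N}_{A\rightarrow B})=0$ by the non-negativity already shown. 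Conversely, if $E_\kappa(\mathcal{N}_{A\rightarrow B})=0$, then $\sup_{\rho_{RA}}E_\kappa(\mathcal{N}_{A\rightarrow B}(\rho_{RA}))=0$; since every term is non-negative, $E_\kappa(\mathcal{N}_{A\rightarrow B}(\rho_{RA}))=0$ for all $\rho_{RA}$, whence by Proposition~\ref{prop:faithfulness} every output $\mathcal{N}_{A\rightarrow B}(\rho_{RA})$ is a PPT state, i.e., $\mathcal{N}_{A\rightarrow B}$ is a PPT entanglement binding channel.

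I expect the only mildly delicate point --- the main obstacle --- to be pinning down the precise definition of a PPT entanglement binding channel from \cite{HHH00} and verifying that it coincides with the Choi condition $(J_{AB}^{\mathcal{N}})^{T_B}\geq 0$ (equivalently, that all outputs are PPT), including the reduction from arbitrary mixed inputs to pure inputs in the separable-Hilbert-space setting; everything else is a routine SDP-feasibility computation.
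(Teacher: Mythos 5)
Your proposal is correct, and its skeleton coincides with the paper's proof in two of the three pieces: non-negativity via Proposition~\ref{prop:faithfulness} combined with Proposition~\ref{eq:state-opt-for-E_kappa-ch}, and the converse direction ($E_\kappa(\cN)=0$ implies all outputs have vanishing $E_\kappa$, hence are PPT by state faithfulness) is verbatim the paper's argument. Where you diverge is the forward direction: the paper stays entirely at the level of outputs, arguing that if every $\cN_{A\to B}(\rho_{RA})$ is PPT then each $E_\kappa(\cN_{A\to B}(\rho_{RA}))=0$ by Proposition~\ref{prop:faithfulness}, so the supremum in Proposition~\ref{eq:state-opt-for-E_kappa-ch} is zero; you instead pass to the Choi-level condition $(J_{AB}^{\cN})^{T_B}\geq 0$ and exhibit $Q_{AB}=J_{AB}^{\cN}$ as a feasible point of the primal \eqref{eq:kappa-channel-primal} with objective $\log\Vert \1_A\Vert_\infty=0$. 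Your feasibility check is fine, and it has the small advantage of not invoking the minimax result behind Proposition~\ref{eq:state-opt-for-E_kappa-ch} for this direction. The price is the ``dictionary'' step, which is exactly the delicate point you flag: the implication ``all outputs PPT $\Rightarrow (J_{AB}^{\cN})^{T_B}\geq 0$'' via a maximally entangled input is immediate only in finite dimensions, whereas the proposition is stated for channels between separable Hilbert spaces, where no normalized maximally entangled state exists; one would have to argue through finite-rank (projected) entangled inputs and a weak-limit argument, or simply adopt ``all outputs PPT'' as the operative definition, as the paper implicitly does, in which case its output-level argument closes the forward direction with no extra work. So both routes are sound; the paper's is slightly more economical in the infinite-dimensional setting, while yours makes the Choi-operator characterization explicit.
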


\begin{proof}
To see that $E_{\kappa}(\cN_{A\to B})\geq 0$, we could utilize the dual representation in \eqref{eq:a dual channel} and the equality in \eqref{eq:strong-dual-eq-ch-infty}, or alternatively employ Propositions~\ref{prop:faithfulness} and \ref{eq:state-opt-for-E_kappa-ch} to find that
\begin{align}
E_{\kappa}(\mathcal{N}_{A\rightarrow B})=\sup_{\rho_{RA}}E_{\kappa
}(\cN_{A\rightarrow B}(\rho_{RA}%
)) \ge 0.
\end{align}

Now if $\cN_{A\to B}$ is a PPT entanglement binding channel (as defined in \cite{HHH00}), then the state $\cN_{A\rightarrow B}(\rho_{RA})$ is PPT for every input state $\rho_{RA}$. Thus, $E_{\kappa}(\mathcal{N}_{A\rightarrow B})=0$.  On the other hand, if
$E_{\kappa}(\mathcal{N}_{A\rightarrow B})=0$, then 
for every $\rho_{RA}$ it holds that $E_{\kappa}(\mathcal{N}_{A\rightarrow B}(\rho_{RA}))=0$. By Proposition~\ref{prop:faithfulness}, we conclude that $\cN_{A\rightarrow B}(\rho_{RA})$ is PPT for every state $\rho_{RA}$, and thus $\cN_{A\to B}$ is a PPT entanglement binding channel.
\end{proof}

\begin{proposition}[No convexity] The $\kappa$-entanglement of quantum channel  is not generally convex.
\end{proposition}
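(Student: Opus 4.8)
The plan is to reduce this to the already-established non-convexity of the state measure $E_\kappa$ (Proposition~\ref{prop:no-convexity}), by exhibiting two qubit channels whose outputs on the maximally entangled input are precisely the two-qubit states $\rho_1=\Phi_2$ and $\rho_2=\tfrac12(\proj{00}+\proj{11})$ used in that proof. Concretely, I would take $\mathcal{N}_1=\operatorname{id}^{2}_{A\to B}$, the noiseless qubit channel, and $\mathcal{N}_2=\Delta_{A\to B}$, the completely dephasing qubit channel defined by $\Delta(X_A)=\proj0_A X_A\proj0_A+\proj1_A X_A\proj1_A$.

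First I would evaluate $E_\kappa$ on the two endpoint channels. By Proposition~\ref{prop:normalization-channels}, $E_\kappa(\mathcal{N}_1)=E_\kappa(\operatorname{id}^2)=1$. For $\mathcal{N}_2$, every output $\Delta_{A\to B}(\rho_{RA})$ is separable (hence PPT) across the cut $R\,|\,B$, so by Proposition~\ref{eq:state-opt-for-E_kappa-ch} together with Proposition~\ref{prop:faithfulness} — equivalently, since $\Delta$ is a PPT entanglement binding channel, by Proposition~\ref{prop:faithful-channels} — we get $E_\kappa(\mathcal{N}_2)=0$.

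Next I would lower bound $E_\kappa$ on the midpoint channel $\mathcal{M}\coloneqq\tfrac12(\mathcal{N}_1+\mathcal{N}_2)$. By Proposition~\ref{eq:state-opt-for-E_kappa-ch}, $E_\kappa(\mathcal{M})\ge E_\kappa(\mathcal{M}(\Phi^2_{RA}))$ for the maximally entangled input $\Phi^2_{RA}$. A short Choi computation gives $\mathcal{N}_1(\Phi^2_{RA})=\Phi^2_{RB}=\rho_1$ and $\Delta_{A\to B}(\Phi^2_{RA})=\tfrac12(\proj{00}+\proj{11})_{RB}=\rho_2$, hence $\mathcal{M}(\Phi^2_{RA})=\tfrac12\rho_1+\tfrac12\rho_2=\rho$, exactly the state appearing in the proof of Proposition~\ref{prop:no-convexity} (up to relabeling the reference system). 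Therefore $E_\kappa(\mathcal{M})\ge E_\kappa(\rho)=\log\tfrac32$. Combining, $E_\kappa\!\left(\tfrac12\mathcal{N}_1+\tfrac12\mathcal{N}_2\right)\ge\log\tfrac32>\tfrac12=\tfrac12\big(E_\kappa(\mathcal{N}_1)+E_\kappa(\mathcal{N}_2)\big)$, which witnesses the failure of convexity.

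There is no genuine obstacle here; the only things to be careful about are minor bookkeeping items: pinning down $E_\kappa(\operatorname{id}^2)\le 1$ via the dimension/normalization bound, and observing that the dephasing channel's output is always PPT so that $E_\kappa(\Delta)=0$. Importantly, only a \emph{lower} bound on $E_\kappa(\mathcal{M})$ is needed, and that is delivered for free by feeding the maximally entangled state into the channel through Proposition~\ref{eq:state-opt-for-E_kappa-ch}, so no optimization over channel inputs has to be carried out.
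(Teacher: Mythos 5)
Your proposal is correct and matches the paper's own proof essentially step for step: the same pair of channels (noiseless qubit channel and completely dephasing channel), the same evaluations $E_\kappa(\mathcal{N}_1)=1$ via normalization and $E_\kappa(\mathcal{N}_2)=0$ via the PPT entanglement binding property, and the same lower bound $\log\tfrac{3}{2}$ on the mixture obtained by feeding in one share of the maximally entangled state. No issues to flag.
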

\begin{proof}
To see this, we construct channels with Choi states given by the examples in Eq.~\eqref{state:no convex}. Let us choose the following qubit channels:
\begin{align}
\cN_1(\rho)&=\rho,  \\
\cN_2(\rho)&=\proj{0}\rho\proj{0}+\proj{1}\rho\proj{1}.
\end{align}
Since $\cN_1$ is a qubit noiseless channel, Proposition~\ref{prop:normalization-channels} implies that $\EK(\cN_1)=1$. Noting that $\cN_2$ is a PPT entanglement binding channel, Proposition~\ref{prop:faithful-channels} implies that $\EK(\cN_2)=0$.

Let $\cN=\frac{1}{2} (\cN_1+\cN_2)$ denote the uniform mixture of the two channels.
The mixed channel $\cN$ is actually a dephasing channel with dephasing parameter $1/2$. Then we have that $\EK(\cN) \geq \log_2 \frac{3}{2}$, which follows by inputting one share of the maximally entangled state. 
Thus, we find that	
\begin{align}
\EK(\cN) 
		> \frac 1 2(\EK(\cN_1)+\EK(\cN_1)).
\end{align}	
This concludes the proof.	
\end{proof}

\section{Exact entanglement cost of quantum channels}

\label{sec:cost channel}

In this section, we introduce two channel simulation tasks. First, we consider the exact parallel simulation of a quantum channel, when completely-PPT-preserving channels are allowed for free and the goal is to meter the entanglement cost. We also consider the exact sequential simulation of a quantum channel. In both cases, the entanglement cost is equal to the $\kappa$-entanglement of the channel, thus endowing it with a direct operational meaning. After these results are established, we focus on PPT-simulable  \cite{KW17a} and resource-seizable \cite{Wilde2018} channels, demonstrating that the theory significantly simplifies for these kinds of channels.

\subsection{Exact parallel simulation of quantum channels}

Another fundamental problem is to quantify the entanglement
required for an exact simulation of an arbitrary quantum
channel, via free channels (LOCC or PPT) and by making use of an entangled resource state.
Recall that $\Omega$ represents the set of free channels. Also,  two quantum channels $\cN_{A\to B}$ and $\cM_{A\to B}$ are equal if for orthonormal bases $\{\ket{i}_A\}_i$ and $\{\ket{k}_B\}_k$, the following equalities hold for all $i,j,k,l\in \mathbb{N}$:
\begin{equation}
\bra{k}_B\cN_{A\to B}(\ket{i}_A \bra{j}_A)\ket{l}_B = \bra{k}_B\cM_{A\to B}(\ket{i}_A \bra{j}_A)\ket{l}_B.
\label{eq:channel-equality}
\end{equation}
This is equivalent to the Choi operators of the channels being equal:
\begin{equation}
\cN_{A\to B}(\Gamma_{RA}) = \cM_{A\to B}(\Gamma_{RA}).
\label{eq:channel-equality-Choi}
\end{equation}
Furthermore, the following identity holds for an arbitrary state $\rho_{CS}$ with $S \simeq R \simeq A$:
\begin{equation}
\bra{\Gamma}_{SR} [\rho_{CS}
\otimes  \cN_{A\to B}(\Gamma_{RA})]\ket{\Gamma}_{SR} = \cN_{A\to B}(\rho_{CA}),\label{eq:PS-tele}
\end{equation}
understood intuitively as a post-selected variant \cite{B05,HM04} of quantum teleportation \cite{Bennett1993}.
From the identity in \eqref{eq:PS-tele}, we conclude that if two channels are equal in the sense of \eqref{eq:channel-equality} and \eqref{eq:channel-equality-Choi}, then there is no physical procedure that can distinguish them.

We define the one-shot exact entanglement cost of a quantum channel $\cN_{A\to B}$, under the $\Omega$ channels, as
\begin{multline}
E^{(1)}_{\O}(\cN_{A\to B})= \inf_{\Lambda\in \Omega}\big\{\log_2 d: \\
  \cN_{A\to B}(\Gamma_{RA})=\Lambda_{\hat{A}\hat{B}A\to B} (\Gamma_{RA} \otimes \Phi^{d}_{\hat{A}\hat{B}})\big \}.
\end{multline}
Figure~\ref{fig:one-shot channel} depicts this simulation task.
 The exact parallel entanglement cost of quantum channel $\cN_{A\to B}$, under the $\Omega$ channels, is defined as
\begin{align}
E_{\O}^{(p)}(\cN_{A\to B})= \limsup_{n \to \infty} \frac{1}{n}E^{(1)}_{\O}(\cN_{A\to B}^{\ox n}).
\end{align}

\begin{figure}
\begin{center}
\includegraphics[
width=\linewidth
]{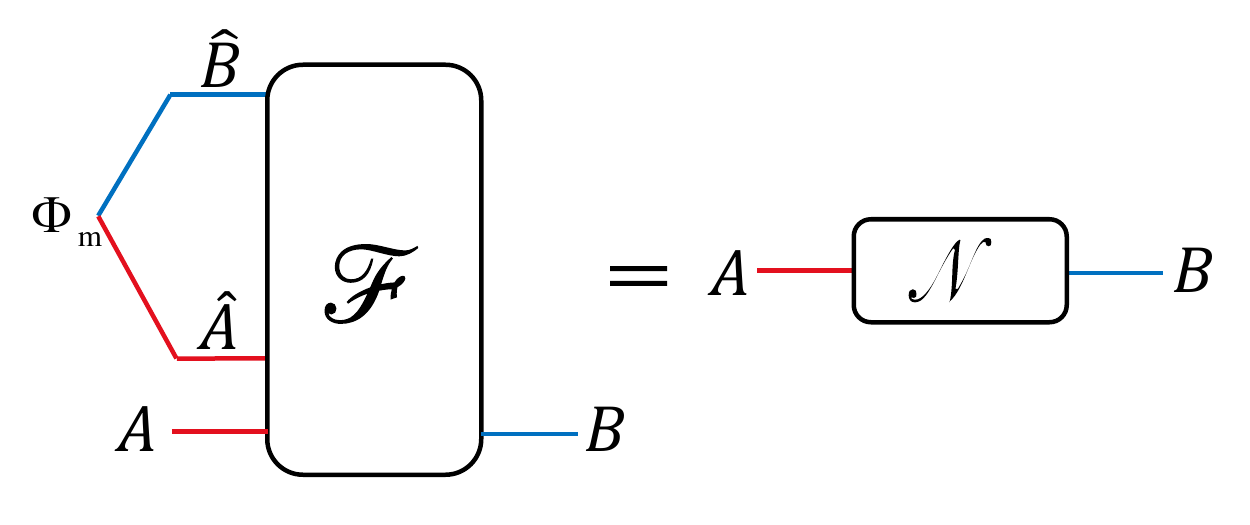}
\end{center}
\caption{Simulating the quantum channel $\cN$ via a free channel $\cF_{A\hat{A} \hat{B}\to B}$ and a maximally entangled state~$\Phi_m$.}
	\label{fig:one-shot channel}
\end{figure}

\begin{theorem}
\label{prop:one-shot channel}
The one-shot exact PPT-entanglement cost $E_{\operatorname{PPT}}
^{(1)}(\mathcal{N}_{A\rightarrow B})$ of a quantum channel $\mathcal{N}_{A\rightarrow
B}$ is given by the following optimization:
\begin{multline}
E_{\operatorname{PPT}}^{(1)}(\mathcal{N}_{A\rightarrow B})=
\inf_{m\in \mathbb{Z}^+, Q_{AB}\geq
0}\big\{
\log_2 m:\operatorname{Tr}_{B}Q_{AB}=\1_{A},\\
-\left(  m-1\right)  Q_{AB}^{T_{B}}\leq(J_{AB}^{\mathcal{N}}%
)^{T_{B}}\leq\left(  m+1\right)  Q_{AB}^{T_{B}} \big\}  .\label{eq:ch-sim-thm-parallel}%
\end{multline}
\end{theorem}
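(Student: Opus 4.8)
The plan is to follow the template of the proof of Theorem~\ref{thm:exact-cost-states} for states, showing that the right-hand side of \eqref{eq:ch-sim-thm-parallel} is simultaneously an upper and a lower bound on $E_{\operatorname{PPT}}^{(1)}(\mathcal{N}_{A\rightarrow B})$. The first move is to use the equivalence recorded in \eqref{eq:channel-equality}--\eqref{eq:PS-tele}: an exact one-shot simulation of $\mathcal{N}_{A\rightarrow B}$ from $\Phi^{m}_{\hat{A}\hat{B}}$ is the same thing as a completely-PPT-preserving channel $\mathcal{F}_{A\hat{A}\hat{B}\rightarrow B}$ (with Alice holding $A\hat{A}$ and Bob holding $\hat{B}$ on the input side, Bob holding $B$ on the output side, so that ``completely PPT-preserving'' means $T_{B}\circ\mathcal{F}\circ T_{\hat{B}}$ is completely positive) such that $\mathcal{F}_{A\hat{A}\hat{B}\rightarrow B}(\Gamma_{RA}\otimes\Phi^{m}_{\hat{A}\hat{B}})=J^{\mathcal{N}}_{RB}$, i.e., $\mathcal{F}$ together with the resource reproduces the Choi operator of $\mathcal{N}$.

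For achievability (the right-hand side of \eqref{eq:ch-sim-thm-parallel} is $\geq E_{\operatorname{PPT}}^{(1)}$), I would take a feasible pair $(m,Q_{AB})$, let $\mathcal{Q}_{A\rightarrow B}$ be the channel with Choi operator $Q_{AB}$ (legitimate precisely because $Q_{AB}\geq0$ and $\operatorname{Tr}_{B}Q_{AB}=\1_{A}$), and define the measure-prepare-type simulation channel
\begin{equation}
\mathcal{F}_{A\hat{A}\hat{B}\rightarrow B}(X_{A\hat{A}\hat{B}})\coloneqq \mathcal{N}_{A\rightarrow B}\!\left(\operatorname{Tr}_{\hat{A}\hat{B}}[\Phi^{m}_{\hat{A}\hat{B}}X_{A\hat{A}\hat{B}}]\right)+\mathcal{Q}_{A\rightarrow B}\!\left(\operatorname{Tr}_{\hat{A}\hat{B}}[(\1_{\hat{A}\hat{B}}-\Phi^{m}_{\hat{A}\hat{B}})X_{A\hat{A}\hat{B}}]\right).
\end{equation}
Then I would check three things: (i) $\mathcal{F}$ reproduces the Choi operator, using $\operatorname{Tr}[(\Phi^{m})^{2}]=1$ and $\operatorname{Tr}[(\1-\Phi^{m})\Phi^{m}]=0$; (ii) $\mathcal{F}$ is trace preserving, using that $\mathcal{N}$ and $\mathcal{Q}$ are trace preserving (the latter equivalent to $\operatorname{Tr}_{B}Q_{AB}=\1_{A}$); and (iii) $\mathcal{F}$ is completely PPT-preserving, i.e., $T_{B}\circ\mathcal{F}\circ T_{\hat{B}}$ is completely positive. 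Step (iii) is the exact analogue of the corresponding computation in the proof of Theorem~\ref{thm:exact-cost-states}: inserting $(\Phi^{m}_{\hat{A}\hat{B}})^{T_{\hat{B}}}=F_{\hat{A}\hat{B}}/m$ together with $F_{\hat{A}\hat{B}}=\Pi^{\mathcal{S}}_{\hat{A}\hat{B}}-\Pi^{\mathcal{A}}_{\hat{A}\hat{B}}$ and $\1_{\hat{A}\hat{B}}=\Pi^{\mathcal{S}}_{\hat{A}\hat{B}}+\Pi^{\mathcal{A}}_{\hat{A}\hat{B}}$, the map $T_{B}\circ\mathcal{F}\circ T_{\hat{B}}$ splits into a sum of two compositions of completely positive maps, the symmetric branch routed through $T_{B}\circ(\mathcal{N}+(m-1)\mathcal{Q})$ and the antisymmetric branch through $T_{B}\circ((m+1)\mathcal{Q}-\mathcal{N})$; the Choi operators of these last two maps are $(J^{\mathcal{N}}_{AB})^{T_{B}}+(m-1)Q_{AB}^{T_{B}}$ and $(m+1)Q_{AB}^{T_{B}}-(J^{\mathcal{N}}_{AB})^{T_{B}}$, which are positive semidefinite exactly by the two hypothesized operator inequalities. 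Taking the infimum over feasible $(m,Q_{AB})$ then yields the bound.

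For the converse (the right-hand side of \eqref{eq:ch-sim-thm-parallel} is $\leq E_{\operatorname{PPT}}^{(1)}$), I would start from an arbitrary completely-PPT-preserving $\mathcal{F}_{A\hat{A}\hat{B}\rightarrow B}$ reproducing $J^{\mathcal{N}}_{RB}$ from $\Phi^{m}_{\hat{A}\hat{B}}$ and pre-compose the resource with the isotropic twirling channel $\mathcal{T}_{\hat{A}\hat{B}}$ from the proof of Theorem~\ref{thm:exact-cost-states}. Since $\mathcal{T}_{\hat{A}\hat{B}}$ is an LOCC channel (hence completely PPT-preserving) that fixes $\Phi^{m}_{\hat{A}\hat{B}}$, the channel $\mathcal{F}\circ(\operatorname{id}_{A}\otimes\mathcal{T}_{\hat{A}\hat{B}})$ is again completely PPT-preserving, still reproduces $J^{\mathcal{N}}_{RB}$, and---because $\mathcal{T}_{\hat{A}\hat{B}}$ maps every input into the span of $\Phi^{m}_{\hat{A}\hat{B}}$ and $\1_{\hat{A}\hat{B}}-\Phi^{m}_{\hat{A}\hat{B}}$---has exactly the standard form displayed above, for some completely positive maps in the two slots. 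The Choi-reproduction constraint forces the first slot to equal $\mathcal{N}$, trace preservation forces the second slot to be a channel $\mathcal{Q}$ with Choi operator $Q_{AB}\geq0$ satisfying $\operatorname{Tr}_{B}Q_{AB}=\1_{A}$, and---running the symmetric/antisymmetric calculation of step (iii) in the necessity direction (using that $\Pi^{\mathcal{S}}_{\hat{A}\hat{B}}$ and $\Pi^{\mathcal{A}}_{\hat{A}\hat{B}}$ project onto orthogonal subspaces and probing with suitable inputs)---complete PPT-preservation forces $(J^{\mathcal{N}}_{AB})^{T_{B}}+(m-1)Q_{AB}^{T_{B}}\geq0$ and $(m+1)Q_{AB}^{T_{B}}-(J^{\mathcal{N}}_{AB})^{T_{B}}\geq0$. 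Hence every achievable $\log m$ is at least the right-hand side of \eqref{eq:ch-sim-thm-parallel}.

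I expect the main obstacle to be organizational rather than conceptual: unlike in the state case, here the simulating channel also ingests the channel input $A$, so one must carry $A$ (and the reference $R$) carefully through the construction, confirm that the isotropic twirl can be applied to the resource $\hat{A}\hat{B}$ alone while preserving both the PPT property and the Choi-reproduction identity, and verify that the twirled channel really reduces to the standard form. Once this reduction is in hand, the positivity analysis is an essentially verbatim replay of the argument in Theorem~\ref{thm:exact-cost-states}; the one genuinely new feature is the normalization $\operatorname{Tr}_{B}Q_{AB}=\1_{A}$ in place of $\operatorname{Tr}G_{AB}=1$, which is precisely what the trace-preservation constraint on $\mathcal{F}$ imposes. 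A harmless edge case is $m=1$: then $\Pi^{\mathcal{A}}_{\hat{A}\hat{B}}=0$, only one inequality survives, it reads $(J^{\mathcal{N}}_{AB})^{T_{B}}\geq0$, and the conclusion is that $\mathcal{N}$ is a PPT channel of zero cost.
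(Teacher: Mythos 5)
Your proposal is correct and follows essentially the same route as the paper's proof: the achievability construction is the same measure-and-prepare channel (your map form has exactly the Choi operator $J^{\mathcal{N}}_{AB}\otimes\Phi^{m}_{\hat{A}\hat{B}}+Q_{AB}\otimes(\1_{\hat{A}\hat{B}}-\Phi^{m}_{\hat{A}\hat{B}})$ used in the paper), and the converse uses the same isotropic-twirl reduction followed by the same symmetric/antisymmetric positivity analysis yielding the two operator inequalities and the normalization $\operatorname{Tr}_{B}Q_{AB}=\1_{A}$. No gaps worth noting.
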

\begin{proof}
The proof is somewhat similar to the proof of Theorem~\ref{thm:exact-cost-states}, which is available in \cite{WW20}. The achievability part
features a construction of a completely-PPT-preserving channel $\mathcal{P}%
_{\hat{A}\hat{B}\rightarrow AB}$ such that $\mathcal{P}_{A\hat{A}\hat
{B}\rightarrow B}(X_{A}\otimes\Phi_{\hat{A}\hat{B}}^{m})=\mathcal{N}%
_{A\rightarrow B}(X_{A})$ for every input operator $X_{A}$ (including density
operators), and then the converse part demonstrates that the constructed
channel is essentially the only form that is needed to consider for the
one-shot exact PPT-entanglement cost task.

First, in order to have an exact simulation of a channel, it is only necessary
to check the simulation on a single input, the maximally entangled
vector $\ket{\Gamma}_{RA}$. So we require that%
\begin{equation}
\mathcal{P}_{A\hat{A}\hat{B}\rightarrow B}(\Gamma_{RA}\otimes\Phi_{\hat{A}%
\hat{B}}^{m})=\mathcal{N}_{A\rightarrow B}(\Gamma_{RA}%
),\label{eq:simulated-channel-ach-ch-sim}%
\end{equation}
where $\Gamma_{RA}$ is the unnormalized maximally entangled operator.

We now prove the achievability part. Let $m\geq1$ be a positive integer and
$Q_{AB}$ a Choi operator for a quantum channel (i.e., $Q_{AB}\geq
0,\ \operatorname{Tr}_{B}Q_{AB}=\1_{A}$) such that the following inequalities
hold%
\begin{equation}
-\left(  m-1\right)  Q_{AB}^{T_{B}}\leq(J_{AB}^{\mathcal{N}})^{T_{B}}%
\leq\left(  m+1\right)  Q_{AB}^{T_{B}}.\label{eq:ch-cond-sim-for-PPT-pres}%
\end{equation}
Then we take the completely-PPT-preserving channel $\mathcal{P}_{A\hat{A}%
\hat{B}\rightarrow B}$ to have a Choi operator given by%
\begin{equation}
J_{A\hat{A}\hat{B}B}^{\mathcal{P}}=J_{AB}^{\mathcal{N}}\otimes\Phi_{\hat
{A}\hat{B}}^{m}+Q_{AB}\otimes(\1_{\hat{A}\hat{B}}-\Phi_{\hat{A}\hat{B}}^{m}).
\end{equation}
Observe that $J_{A\hat{A}\hat{B}B}^{\mathcal{P}}\geq0$. Furthermore, we have
that%
\begin{align}
& \operatorname{Tr}_{B}J_{A\hat{A}\hat{B}B}^{\mathcal{P}}  \notag\\
& =\operatorname{Tr}%
_{B}J_{AB}^{\mathcal{N}}\otimes\Phi_{\hat{A}\hat{B}}^{m}+\operatorname{Tr}%
_{B}Q_{AB}\otimes(\1_{\hat{A}\hat{B}}-\Phi_{\hat{A}\hat{B}}^{m})\\
& =\1_{A}\otimes\Phi_{\hat{A}\hat{B}}^{m}+\1_{A}\otimes(\1_{\hat{A}\hat{B}}%
-\Phi_{\hat{A}\hat{B}}^{m})\\
& =\1_{A\hat{A}\hat{B}}.
\end{align}
Thus, $\mathcal{P}_{A\hat{A}\hat{B}\rightarrow B}$ is a quantum channel.
Setting $|\Gamma\rangle_{AA^{\prime}\hat{A}\hat{A}^{\prime}\hat{B}\hat
{B}^{\prime}}\coloneqq |\Gamma\rangle_{AA^{\prime}}\otimes|\Gamma\rangle_{\hat
{A}\hat{A}^{\prime}}\otimes|\Gamma\rangle_{\hat{B}\hat{B}^{\prime}}$, its
action on the input $\Gamma_{RA}\otimes\Phi_{\hat{A}\hat{B}}^{m}$ is given by
\begin{widetext}
\begin{align}
& \langle\Gamma|_{AA^{\prime}\hat{A}\hat{A}^{\prime}\hat{B}\hat{B}^{\prime}%
}\left(  \Gamma_{RA}\otimes\Phi_{\hat{A}\hat{B}}^{m}\otimes J_{A^{\prime}%
\hat{A}^{\prime}\hat{B}^{\prime}B}^{\mathcal{P}}\right)  |\Gamma
\rangle_{AA^{\prime}\hat{A}\hat{A}^{\prime}\hat{B}\hat{B}^{\prime}%
}\nonumber\\
& =\langle\Gamma|_{AA^{\prime}\hat{A}\hat{A}^{\prime}\hat{B}\hat{B}^{\prime}%
}\left(  \Gamma_{RA}\otimes\Phi_{\hat{A}\hat{B}}^{m}\otimes J_{A^{\prime}%
B}^{\mathcal{N}}\otimes\Phi_{\hat{A}^{\prime}\hat{B}^{\prime}}^{m}\right)
|\Gamma\rangle_{AA^{\prime}\hat{A}\hat{A}^{\prime}\hat{B}\hat{B}^{\prime}%
}\nonumber\\
& \qquad+\langle\Gamma|_{AA^{\prime}\hat{A}\hat{A}^{\prime}\hat{B}\hat
{B}^{\prime}}\left(  \Gamma_{RA}\otimes\Phi_{\hat{A}\hat{B}}^{m}\otimes
Q_{A^{\prime}B}\otimes(\1_{\hat{A}^{\prime}\hat{B}^{\prime}}-\Phi_{\hat
{A}^{\prime}\hat{B}^{\prime}}^{m})\right)  |\Gamma\rangle_{AA^{\prime}\hat
{A}\hat{A}^{\prime}\hat{B}\hat{B}^{\prime}}\\
& =\langle\Gamma|_{AA^{\prime}}(\Gamma_{RA}\otimes J_{A^{\prime}%
B}^{\mathcal{N}})|\Gamma\rangle_{AA^{\prime}}\\
& =\mathcal{N}_{A\rightarrow B}(\Gamma_{RA}).
\end{align}
The second equality follows because%
\begin{align}
\left(  \langle\Gamma|_{\hat{A}\hat{A}^{\prime}}\otimes\langle\Gamma|_{\hat
{B}\hat{B}^{\prime}}\right)  \left(  \Phi_{\hat{A}\hat{B}}^{m}\otimes
\Phi_{\hat{A}^{\prime}\hat{B}^{\prime}}^{m}\right)  \left(  |\Gamma
\rangle_{\hat{A}\hat{A}^{\prime}}\otimes|\Gamma\rangle_{\hat{B}\hat{B}%
^{\prime}}\right)    & =\operatorname{Tr}\Phi_{\hat{A}\hat{B}}^{m}\Phi
_{\hat{A}\hat{B}}^{m}=1,\\
\left(  \langle\Gamma|_{\hat{A}\hat{A}^{\prime}}\otimes\langle\Gamma|_{\hat
{B}\hat{B}^{\prime}}\right)  \left(  \Phi_{\hat{A}\hat{B}}^{m}\otimes
\1_{\hat{A}^{\prime}\hat{B}^{\prime}}\right)  \left(  |\Gamma\rangle_{\hat
{A}\hat{A}^{\prime}}\otimes|\Gamma\rangle_{\hat{B}\hat{B}^{\prime}}\right)
& =\operatorname{Tr}\Phi_{\hat{A}\hat{B}}^{m}=1.
\end{align}
Thus, for the constructed channel, we have that
\eqref{eq:simulated-channel-ach-ch-sim}  holds. Finally, we need to show that
the constructed channel $\mathcal{P}_{A\hat{A}\hat{B}\rightarrow B}$ is
completely-PPT-preserving:%
\begin{equation}
(J_{A\hat{A}\hat{B}B}^{\mathcal{P}})^{T_{\hat{B}B}}\geq
0.\label{eq:need-PPT-pres-ch-sim}%
\end{equation}
Consider that
\begin{align}
 (J_{A\hat{A}\hat{B}B}^{\mathcal{P}})^{T_{\hat{B}B}}
& =(J_{AB}^{\mathcal{N}})^{T_{B}}\otimes(\Phi_{\hat{A}\hat{B}}^{m}%
)^{T_{\hat{B}}}+Q_{AB}^{T_{B}}\otimes(\1_{\hat{A}\hat{B}}-\Phi_{\hat{A}\hat{B}%
}^{m})^{T_{\hat{B}}}\\
& =\frac{1}{m}(J_{AB}^{\mathcal{N}})^{T_{B}}\otimes(F_{\hat{A}\hat{B}}%
)+Q_{AB}^{T_{B}}\otimes(\1_{\hat{A}\hat{B}}-\frac{1}{m}F_{\hat{A}\hat{B}})\\
& =\frac{1}{m}(J_{AB}^{\mathcal{N}})^{T_{B}}\otimes(\Pi_{\hat{A}\hat{B}%
}^{\mathcal{S}}-\Pi_{\hat{A}\hat{B}}^{\mathcal{A}})+Q_{AB}^{T_{B}}\otimes
(\Pi_{\hat{A}\hat{B}}^{\mathcal{S}}+\Pi_{\hat{A}\hat{B}}^{\mathcal{A}}%
-\frac{1}{m}[\Pi_{\hat{A}\hat{B}}^{\mathcal{S}}-\Pi_{\hat{A}\hat{B}%
}^{\mathcal{A}}])\\
& =\left[  \frac{1}{m}(J_{AB}^{\mathcal{N}})^{T_{B}}+\left(  1-\frac{1}%
{m}\right)  Q_{AB}^{T_{B}}\right]  \otimes\Pi_{\hat{A}\hat{B}}^{\mathcal{S}%
}+\left[  \left(  1+\frac{1}{m}\right)  Q_{AB}^{T_{B}}-\frac{1}{m}%
(J_{AB}^{\mathcal{N}})^{T_{B}}\right]  \otimes\Pi_{\hat{A}\hat{B}%
}^{\mathcal{A}}\\
& =\frac{1}{m}\left[  (J_{AB}^{\mathcal{N}})^{T_{B}}+\left(  m-1\right)
Q_{AB}^{T_{B}}\right]  \otimes\Pi_{\hat{A}\hat{B}}^{\mathcal{S}}+\frac{1}%
{m}\left[  \left(  m+1\right)  Q_{AB}^{T_{B}}-(J_{AB}^{\mathcal{N}})^{T_{B}%
}\right]  \otimes\Pi_{\hat{A}\hat{B}}^{\mathcal{A}}.
\end{align}
\end{widetext}
Applying the condition in \eqref{eq:ch-cond-sim-for-PPT-pres}, we conclude
\eqref{eq:need-PPT-pres-ch-sim}. Thus, we have shown that for all $m$ and
$Q_{AB}$ satisfying \eqref{eq:ch-cond-sim-for-PPT-pres} and $Q_{AB}%
\geq0,\ \operatorname{Tr}_{B}Q_{AB}=\1_{A}$, there exists a
completely-PPT-preserving channel $\mathcal{P}_{A\hat{A}\hat{B}\rightarrow B}$
such that \eqref{eq:simulated-channel-ach-ch-sim} holds. Now taking an infimum
over all such $m$ and $Q_{AB}$, we conclude that the right-hand side of
\eqref{eq:ch-sim-thm-parallel} is greater than or equal to
$E_{\operatorname{PPT}}^{(1)}(\mathcal{N}_{A\rightarrow B})$.

To see the opposite inequality, let $\mathcal{P}_{A\hat{A}\hat{B}\rightarrow
B}$ be a completely-PPT-preserving channel such that
\eqref{eq:simulated-channel-ach-ch-sim} holds. Then preceding $\mathcal{P}%
_{A\hat{A}\hat{B}\rightarrow B}$ by the isotropic twirling channel
$\mathcal{T}_{\hat{A}\hat{B}}$ results in a completely-PPT-preserving channel
$\mathcal{P}_{A\hat{A}\hat{B}\rightarrow B}^{\prime}=\mathcal{P}_{A\hat{A}%
\hat{B}\rightarrow B}\circ\mathcal{T}_{\hat{A}\hat{B}}$ achieving the same
simulation task, and so it suffices to focus on the channel $\mathcal{P}%
_{A\hat{A}\hat{B}\rightarrow B}^{\prime}$ in order to establish an expression
for the one-shot exact PPT-entanglement cost. Consider that%
\begin{align}
J_{R\hat{A}^{\prime}\hat{B}^{\prime}B}^{\mathcal{P}^{\prime}}  &
=\mathcal{P}_{A\hat{A}\hat{B}\rightarrow B}^{\prime}(\Gamma_{RA}\otimes
\Gamma_{\hat{A}^{\prime}\hat{A}}\otimes\Gamma_{\hat{B}^{\prime}\hat{B}%
})\nonumber\\
& =(\mathcal{P}_{A\hat{A}\hat{B}\rightarrow B}\circ\mathcal{T}_{\hat{A}\hat
{B}})(\Gamma_{RA}\otimes\Gamma_{\hat{A}^{\prime}\hat{A}}\otimes\Gamma_{\hat
{B}^{\prime}\hat{B}}).
\end{align}
Considering that
\begin{align}
& \mathcal{T}_{\hat{A}\hat{B}}(\Gamma_{\hat{A}^{\prime}\hat{A}}\otimes
\Gamma_{\hat{B}^{\prime}\hat{B}})  \notag \\
& =\Phi_{\hat{A}\hat{B}}^{m}\otimes
\operatorname{Tr}_{\hat{A}\hat{B}}[\Phi_{\hat{A}\hat{B}}^{m}(\Gamma_{\hat
{A}^{\prime}\hat{A}}\otimes\Gamma_{\hat{B}^{\prime}\hat{B}})]\nonumber\\
& \qquad+\frac{\1_{\hat{A}\hat{B}}-\Phi_{\hat{A}\hat{B}}^{m}}{m^{2}%
-1}\operatorname{Tr}_{\hat{A}\hat{B}}[(\1_{\hat{A}\hat{B}}-\Phi_{\hat{A}\hat
{B}}^{m})(\Gamma_{\hat{A}^{\prime}\hat{A}}\otimes\Gamma_{\hat{B}^{\prime}%
\hat{B}})]\\
& =\Phi_{\hat{A}\hat{B}}^{m}\otimes\Phi_{\hat{A}^{\prime}\hat{B}^{\prime}}%
^{m}+\frac{\1_{\hat{A}\hat{B}}-\Phi_{\hat{A}\hat{B}}^{m}}{m^{2}-1}%
\otimes(\1_{\hat{A}\hat{B}}-\Phi_{\hat{A}\hat{B}}^{m}),
\end{align}
with the equalities understood in terms of entanglement swapping \cite{Bennett1993}, 
we conclude that
\begin{align}
& (\mathcal{P}_{A\hat{A}\hat{B}\rightarrow B}\circ\mathcal{T}_{\hat{A}\hat{B}%
})(\Gamma_{RA}\otimes\Gamma_{\hat{A}^{\prime}\hat{A}}\otimes\Gamma_{\hat
{B}^{\prime}\hat{B}})\nonumber\\
& =\mathcal{P}_{A\hat{A}\hat{B}\rightarrow B}(\Gamma_{RA}\otimes\Phi_{\hat
{A}\hat{B}}^{m})\otimes\Phi_{\hat{A}^{\prime}\hat{B}^{\prime}}^{m}\notag \\
& \qquad +\mathcal{P}_{A\hat{A}\hat{B}\rightarrow B}\!\left(  \Gamma_{RA}\otimes
\frac{\1_{\hat{A}\hat{B}}-\Phi_{\hat{A}\hat{B}}^{m}}{m^{2}-1}\right)
\otimes(\1_{\hat{A}\hat{B}}-\Phi_{\hat{A}\hat{B}}^{m})\\
& =\mathcal{N}_{A\rightarrow B}(\Gamma_{RA})\otimes\Phi_{\hat{A}^{\prime}
\hat{B}^{\prime}}^{m}+ \notag \\
& \qquad \mathcal{P}_{A\hat{A}\hat{B}\rightarrow B}\!\left(
\Gamma_{RA}\otimes\frac{\1_{\hat{A}\hat{B}}-\Phi_{\hat{A}\hat{B}}^{m}}{m^{2}%
-1}\right)  \otimes(\1_{\hat{A}^{\prime}\hat{B}^{\prime}}-\Phi_{\hat{A}%
^{\prime}\hat{B}^{\prime}}^{m})\\
& =J_{RB}^{\mathcal{N}}\otimes\Phi_{\hat{A}^{\prime}\hat{B}^{\prime}}%
^{m}+Q_{RB}\otimes(\1_{\hat{A}^{\prime}\hat{B}^{\prime}}-\Phi_{\hat{A}^{\prime
}\hat{B}^{\prime}}^{m}).
\end{align}
where we have used the assumption that \eqref{eq:simulated-channel-ach-ch-sim}
holds and set%
\begin{equation}
Q_{RB}=\mathcal{P}_{A\hat{A}\hat{B}\rightarrow B}\!\left(  \Gamma_{RA}%
\otimes\frac{\1_{\hat{A}\hat{B}}-\Phi_{\hat{A}\hat{B}}^{m}}{m^{2}-1}\right)  ,
\end{equation}
from which it follows that $Q_{RB}\geq0$ and $\operatorname{Tr}_{B}%
Q_{RB}=\1_{R}$. In order for the channel $\mathcal{P}_{A\hat{A}\hat
{B}\rightarrow B}^{\prime}$\ to be completely-PPT-preserving, it is necessary
that%
\begin{equation}
(J_{R\hat{A}^{\prime}\hat{B}^{\prime}B}^{\mathcal{P}^{\prime}})^{T_{\hat
{B}^{\prime}B}}\geq0.
\end{equation}
Writing this out and using calculations given above, we find that it is
necessary that the following operator is positive semi-definite:
\begin{multline}
\frac{1}{m}\left[  (J_{AB}^{\mathcal{N}})^{T_{B}}+\left(  m-1\right)
Q_{AB}^{T_{B}}\right]  \otimes\Pi_{\hat{A}\hat{B}}^{\mathcal{S}}\\
+\frac{1}%
{m}\left[  \left(  m+1\right)  Q_{AB}^{T_{B}}-(J_{AB}^{\mathcal{N}})^{T_{B}%
}\right]  \otimes\Pi_{\hat{A}\hat{B}}^{\mathcal{A}}.
\end{multline}
Since $\Pi_{\hat{A}\hat{B}}^{\mathcal{S}}$ and $\Pi_{\hat{A}\hat{B}%
}^{\mathcal{A}}$ project onto orthogonal subspaces, we find that the condition
\eqref{eq:ch-cond-sim-for-PPT-pres} is necessary. Thus, it follows that the
quantity on the right-hand side of \eqref{eq:ch-sim-thm-parallel}\ is less
than or equal to $E_{\operatorname{PPT}}^{(1)}(\mathcal{N}_{A\rightarrow B})$.
\end{proof}

\update{
\begin{proposition}
\label{prop:ch-one-shot-bnd}
Let $\mathcal{N}_{A\rightarrow B}$ be a quantum channel.
Then%
\begin{equation}
\log_2(2^{E_{\kappa}(\mathcal{N})}-1)\leq E_{\operatorname{PPT}}^{(1)}%
(\mathcal{N}_{A\rightarrow B})\leq\log_2(2^{E_{\kappa}(\mathcal{N})}+2).
\end{equation}
\end{proposition}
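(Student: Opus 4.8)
The plan is to follow the proof of the state analog, Proposition~\ref{lemma: one-shot}, combining the operational characterization of $E_{\operatorname{PPT}}^{(1)}(\mathcal{N}_{A\to B})$ from Theorem~\ref{prop:one-shot channel} with the primal SDP for $E_\kappa(\mathcal{N}_{A\to B})$ in \eqref{eq:kappa-channel-primal}. The observation used throughout is that any $Q_{AB}$ in these optimizations automatically obeys $Q_{AB}^{T_B}\ge 0$: combining $-(m-1)Q_{AB}^{T_B}\le (J_{AB}^{\mathcal{N}})^{T_B}\le (m+1)Q_{AB}^{T_B}$ gives $2mQ_{AB}^{T_B}\ge 0$, and combining $-Q_{AB}^{T_B}\le (J_{AB}^{\mathcal{N}})^{T_B}\le Q_{AB}^{T_B}$ gives $2Q_{AB}^{T_B}\ge 0$. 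For the lower bound I start from Theorem~\ref{prop:one-shot channel} and relax $-(m-1)Q_{AB}^{T_B}\le (J_{AB}^{\mathcal{N}})^{T_B}$ to $-(m+1)Q_{AB}^{T_B}\le (J_{AB}^{\mathcal{N}})^{T_B}$, which is a genuine relaxation precisely because $Q_{AB}^{T_B}\ge 0$. Substituting $S_{AB}=(m+1)Q_{AB}$ turns the constraints into $-S_{AB}^{T_B}\le (J_{AB}^{\mathcal{N}})^{T_B}\le S_{AB}^{T_B}$, $S_{AB}\ge 0$, $\operatorname{Tr}_B S_{AB}=(m+1)\1_A$, so that $\log m=\log(\Vert\operatorname{Tr}_B S_{AB}\Vert_\infty-1)$; dropping the Choi-normalization constraint $\operatorname{Tr}_B Q_{AB}=\1_A$ only enlarges the feasible set, leaving exactly that of \eqref{eq:kappa-channel-primal}, and since $t\mapsto\log(t-1)$ is nondecreasing the infimum passes through it, giving $E_{\operatorname{PPT}}^{(1)}(\mathcal{N})\ge\log(2^{E_\kappa(\mathcal{N})}-1)$.

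For the upper bound I build a feasible point of the optimization in Theorem~\ref{prop:one-shot channel} from an arbitrary $Q_{AB}$ feasible for \eqref{eq:kappa-channel-primal}. Writing $c=\Vert\operatorname{Tr}_B Q_{AB}\Vert_\infty$ (with $c>0$, since $Q_{AB}=0$ is infeasible) and fixing any state $\pi_B$, I set
\begin{equation}
Q'_{AB}=\frac{1}{c}\Big[\,Q_{AB}+\big(c\,\1_A-\operatorname{Tr}_B Q_{AB}\big)\otimes\pi_B\,\Big].
\end{equation}
Then $Q'_{AB}\ge 0$ and $\operatorname{Tr}_B Q'_{AB}=\1_A$ by construction, so $Q'_{AB}$ is a Choi operator. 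The point is that, because $B$ is a single system, $\pi_B^{T_B}=\pi_B^{T}\ge 0$, and together with $c\,\1_A-\operatorname{Tr}_B Q_{AB}\ge 0$ this gives $c\,(Q'_{AB})^{T_B}\ge Q_{AB}^{T_B}$; hence $(Q'_{AB})^{T_B}\ge 0$ and, taking $m=c+1$, we get $-(m-1)(Q'_{AB})^{T_B}\le -Q_{AB}^{T_B}\le (J_{AB}^{\mathcal{N}})^{T_B}$ and $(J_{AB}^{\mathcal{N}})^{T_B}\le Q_{AB}^{T_B}\le (m-1)(Q'_{AB})^{T_B}\le (m+1)(Q'_{AB})^{T_B}$. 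Thus $(m,Q'_{AB})$ is feasible for Theorem~\ref{prop:one-shot channel}, so $E_{\operatorname{PPT}}^{(1)}(\mathcal{N})\le\log m=\log(\Vert\operatorname{Tr}_B Q_{AB}\Vert_\infty+1)$; taking the infimum over feasible $Q_{AB}$, and treating $m$ as a continuous parameter exactly as in Proposition~\ref{lemma: one-shot}, yields $E_{\operatorname{PPT}}^{(1)}(\mathcal{N})\le\log(2^{E_\kappa(\mathcal{N})}+1)$.

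The step I expect to be the main obstacle is this last one, since $E_\kappa(\mathcal{N})$ places no normalization on $Q_{AB}$ whereas channel simulation forces $\operatorname{Tr}_B Q_{AB}=\1_A$; the padding construction reconciles the two, and it succeeds only because the partial transpose acts on the single output system $B$, so tensoring in an arbitrary state of $B$ cannot violate the partial-transpose inequalities. The remainder is a direct transcription of the semidefinite-relaxation argument used for states in Proposition~\ref{lemma: one-shot}.
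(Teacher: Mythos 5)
Your proof is correct and follows essentially the same route as the paper: relax the operational characterization from Theorem~\ref{prop:one-shot channel} (using $Q_{AB}^{T_B}\ge 0$, which you rightly make explicit) and absorb $m$ into the operator to land on the SDP in \eqref{eq:kappa-channel-primal}. For the upper bound, where the paper only says ``similarly,'' your padding construction $Q'_{AB}=\tfrac{1}{c}[Q_{AB}+(c\1_A-\operatorname{Tr}_B Q_{AB})\otimes\pi_B]$ is exactly the right way to reconcile the missing Choi normalization, and it checks out.
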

}
\begin{proof}
The idea of the proof is to use the technique of SDP relaxation. Consider that
\begin{widetext}
\begin{align}
  E_{\operatorname{PPT}}^{(1)}(\mathcal{N}_{A\rightarrow B}) 
&  =\inf\left\{  \log_2 m:-\left(  m-1\right)  Q_{AB}^{T_{B}}\leq(J_{AB}%
^{\mathcal{N}})^{T_{B}}\leq\left(  m+1\right)  Q_{AB}^{T_{B}},\ Q_{AB}%
\geq0,\ \operatorname{Tr}_{B}Q_{AB}=\1_{A}\right\} \notag \\
&  \geq\inf\left\{  \log_2 m:-\left(  m+1\right)  Q_{AB}^{T_{B}}\leq
(J_{AB}^{\mathcal{N}})^{T_{B}}\leq\left(  m+1\right)  Q_{AB}^{T_{B}}%
,\ Q_{AB}\geq0,\ \operatorname{Tr}_{B}Q_{AB}=\1_{A}\right\} \notag \\
&  =\inf\left\{  \log_2 m:-R_{AB}^{T_{B}}\leq(J_{AB}^{\mathcal{N}})^{T_{B}}\leq
R_{AB}^{T_{B}},\ R_{AB}\geq0,\ \operatorname{Tr}_{B}R_{AB}=\left(  m+1\right)
\1_{A}\right\} \notag \\
&  =\inf\left\{  \log_2(\left\Vert \operatorname{Tr}_{B}R_{AB}\right\Vert
_{\infty}-1):-R_{AB}^{T_{B}}\leq(J_{AB}^{\mathcal{N}})^{T_{B}}\leq
R_{AB}^{T_{B}},\ R_{AB}\geq0\right\} \notag \\
&  =\log_2(2^{E_{\kappa}(\mathcal{N})}-1).
\end{align}
The first inequality follows by relaxing the constraint $-\left(  m-1\right)
Q_{AB}^{T_{B}}\leq(J_{AB}^{\mathcal{N}})^{T_{B}}$ to $-\left(  m+1\right)
Q_{AB}^{T_{B}}\leq(J_{AB}^{\mathcal{N}})^{T_{B}}$. The second equality follows
by absorbing $m$ into $Q_{AB}$ and setting $R_{AB}=\left(  m+1\right)  Q_{AB}%
$. The last equality follows from the definition of $E_{\kappa}(\mathcal{N})$.

Similarly, we have that $E_{\operatorname{PPT}}^{(1)}(\mathcal{N}%
_{A\rightarrow B})\leq\log_2(2^{E_{\kappa}(\mathcal{N})}+2)$ following the chain of inequalities:
\begin{align}
 E_{\text{PPT}}^{(1)}(\cN_{A\to B}) 
&  =\inf\left\{  \log_2 m:-\left(  m-1\right)  Q_{AB}^{T_{B}}\leq(J_{AB}%
^{\mathcal{N}})^{T_{B}}\leq\left(  m+1\right)  Q_{AB}^{T_{B}},\ Q_{AB}%
\geq0,\ \operatorname{Tr}_{B}Q_{AB}=\1_{A}, m\in\mathbb{N}, m \geq 2\right\} \notag \\
& \le \inf\left\{  \log_2 m:-\left(  m-1\right)  Q_{AB}^{T_{B}}\leq(J_{AB}%
^{\mathcal{N}})^{T_{B}}\leq\left(  m-1\right)  Q_{AB}^{T_{B}},\ Q_{AB}%
\geq0,\ \operatorname{Tr}_{B}Q_{AB}=\1_{A}, m\in\mathbb{N}, m \geq 2\right\} \notag \\
& = \inf\left\{  \log_2 \left\lfloor \mu\right\rfloor:-\left(  \left\lfloor \mu\right\rfloor-1\right)  Q_{AB}^{T_{B}}\leq(J_{AB}%
^{\mathcal{N}})^{T_{B}}\leq\left(  \left\lfloor \mu\right\rfloor-1\right)  Q_{AB}^{T_{B}},\ Q_{AB}%
\geq0,\ \operatorname{Tr}_{B}Q_{AB}=\1_{A},  \mu \geq 2\right\} \notag \\
& \le \inf\left\{  \log_2 \left\lfloor \mu\right\rfloor:-\left(  \mu-2\right)  Q_{AB}^{T_{B}}\leq(J_{AB}%
^{\mathcal{N}})^{T_{B}}\leq\left(  \mu-2\right)  Q_{AB}^{T_{B}},\ Q_{AB}%
\geq0,\ \operatorname{Tr}_{B}Q_{AB}=\1_{A},  \mu \geq 2\right\} \notag \\
& \le \inf\left\{  \log_2  \mu:-\left(  \mu-2\right)  Q_{AB}^{T_{B}}\leq(J_{AB}%
^{\mathcal{N}})^{T_{B}}\leq\left(  \mu-2\right)  Q_{AB}^{T_{B}},\ Q_{AB}%
\geq0,\ \operatorname{Tr}_{B}Q_{AB}=\1_{A},  \mu \geq 2\right\} \notag \\
&  =\inf\left\{  \log_2(\left\Vert \operatorname{Tr}_{B}R_{AB}\right\Vert
_{\infty}+2):-R_{AB}^{T_{B}}\leq(J_{AB}^{\mathcal{N}})^{T_{B}}\leq
R_{AB}^{T_{B}},\ R_{AB}\geq0\right\} \notag \\
&  =\log_2(2^{E_{\kappa}(\mathcal{N})}+2).
\end{align}
\end{widetext}
The first inequality follows since we choose more restricted condition $\left(  m-1\right)  Q_{AB}^{T_{B}}\leq(J_{AB}%
^{\mathcal{N}})^{T_{B}}\leq\left(  m-1\right)  Q_{AB}^{T_{B}}$. The second inequality follows since  $-(\left\lfloor \mu\right\rfloor
-1)
\leq -(\mu-2)$ and $\mu-2\leq\left\lfloor \mu\right\rfloor
-1$. In this case, the set over which we are optimizing becomes smaller. The third inequality follows since $\left\lfloor \mu\right\rfloor\le\mu$ in the loss function. We also take $R_{AB}=(\mu-2)Q_{AB}$ to simplify the optimization and then arrive at the final equality following the  definition of $E_{\kappa}(\mathcal{N})$.
\end{proof}

\update{
\begin{theorem}[Exact parallel  cost]
\label{thm:parallel-cost-asymp}
Let $\mathcal{N}_{A\rightarrow B}$ be a  quantum channel.
Then the exact parallel entanglement cost of $\mathcal{N}_{A\rightarrow B}$ is equal to its $\kappa$-entanglement:
\begin{equation}
E^{(p)}_{\operatorname{PPT}}(\mathcal{N}_{A\rightarrow B})=E_{\kappa}(\mathcal{N}%
_{A\rightarrow B}).
\end{equation}
\end{theorem}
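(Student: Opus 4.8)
The plan is to mirror the proof of Theorem~\ref{th:exact cost} for states: combine the one-shot sandwich bounds of Proposition~\ref{prop:ch-one-shot-bnd} with the additivity of $E_\kappa$ for channels established in Proposition~\ref{prop: add kappa channel}. All of the substantive work has in fact already been carried out in proving those two statements (together with the one-shot operational characterization in Theorem~\ref{prop:one-shot channel}); what remains here is a short asymptotic squeeze, so I expect the present theorem to read essentially as a corollary.

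For the upper bound, I would apply the second inequality of Proposition~\ref{prop:ch-one-shot-bnd} to the tensor-power channel $\mathcal{N}_{A\to B}^{\otimes n}$, obtaining $E^{(1)}_{\operatorname{PPT}}(\mathcal{N}^{\otimes n}) \le \log(2^{E_\kappa(\mathcal{N}^{\otimes n})}+1)$, then invoke additivity, $E_\kappa(\mathcal{N}^{\otimes n}) = n E_\kappa(\mathcal{N})$, to rewrite the right-hand side as $\log(2^{n E_\kappa(\mathcal{N})}+1)$, and finally take $\liminf_{n\to\infty}\tfrac1n$ of both sides. Since $\tfrac1n\log(2^{n E_\kappa(\mathcal{N})}+1)\to E_\kappa(\mathcal{N})$, this yields $E^{(p)}_{\operatorname{PPT}}(\mathcal{N})\le E_\kappa(\mathcal{N})$. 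For the lower bound, the same three steps applied to the first inequality of Proposition~\ref{prop:ch-one-shot-bnd} give $E^{(1)}_{\operatorname{PPT}}(\mathcal{N}^{\otimes n}) \ge \log(2^{n E_\kappa(\mathcal{N})}-1)$, and dividing by $n$ and passing to the limit gives $E^{(p)}_{\operatorname{PPT}}(\mathcal{N}) \ge E_\kappa(\mathcal{N})$.

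The only point needing a little care is the degenerate case $E_\kappa(\mathcal{N})=0$, where $\log(2^{n E_\kappa(\mathcal{N})}-1)$ collapses to $-\infty$ and the lower-bound inequality becomes vacuous. Here I would simply note that $E^{(1)}_{\operatorname{PPT}}$, and hence $E^{(p)}_{\operatorname{PPT}}$, is manifestly nonnegative, being an infimum of quantities $\log d$ with $d\ge 1$, so that $E^{(p)}_{\operatorname{PPT}}(\mathcal{N})\ge 0 = E_\kappa(\mathcal{N})$ holds trivially (consistently, faithfulness from Proposition~\ref{prop:faithful-channels} identifies this with the PPT entanglement-binding case, for which no entanglement is needed); the upper bound still forces $E^{(p)}_{\operatorname{PPT}}(\mathcal{N})\le 0$. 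I do not anticipate any genuine obstacle: the one real difficulty in such a result — obtaining a single-letter formula in spite of the regularization over $n$ — has already been absorbed into the additivity statement of Proposition~\ref{prop: add kappa channel}, which in turn relied on the strong-duality identity \eqref{eq:strong-dual-eq-ch-infty}. The remaining argument is the routine limit computation sketched above.
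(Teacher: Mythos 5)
Your proposal matches the paper's own proof of Theorem~\ref{thm:parallel-cost-asymp}: apply Proposition~\ref{prop:ch-one-shot-bnd} to $\mathcal{N}^{\otimes n}$, use the additivity from Proposition~\ref{prop: add kappa channel} to single-letterize, and take the $\liminf$ of $\tfrac1n$ of both bounds. Your extra remark handling the degenerate case $E_\kappa(\mathcal{N})=0$ via nonnegativity of the cost is a harmless (and correct) refinement of the paper's terse ``similarly'' for the lower bound.
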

}
\begin{proof}
The main idea behind the proof is to employ the one-shot bound in Proposition~\ref{prop:ch-one-shot-bnd}
and then the additivity relation from Proposition~\ref{prop: add kappa channel}. Consider that%
\begin{align}
E^{(p)}_{\operatorname{PPT}}(\mathcal{N}_{A\rightarrow B})  & =\limsup
_{n\rightarrow\infty}\frac{1}{n}E_{\operatorname{PPT}}^{(1)}(\mathcal{N}%
_{A\rightarrow B}^{\otimes n})\\
& \leq\limsup_{n\rightarrow\infty}\frac{1}{n}\log_2(2^{E_{\kappa}(\mathcal{N}%
^{\otimes n})}+2)\\
& =\limsup_{n\rightarrow\infty}\frac{1}{n}\log_2(2^{nE_{\kappa}(\mathcal{N}%
)}+2)\\
& =E_{\kappa}(\mathcal{N}_{A\rightarrow B}).
\end{align}
Similarly, $E_{\operatorname{PPT}}(\mathcal{N}_{A\rightarrow B})\geq
E_{\kappa}(\mathcal{N}_{A\rightarrow B})$.
\end{proof}

%\begin{theorem}(Exact parallel PPT-entanglement cost of channels)
%	For given quantum channel $\cN$, the exact PPT-entanglement cost of simulating $\cN$ is given by
%	\begin{equation}
%	\begin{split}
%	\EPPT(\cN_{A\rightarrow B})=\max \ & \log_2 \tr J_\cN(V_{AB}-W_{AB})\\
%	 \text{s.t. } & V_{AB}+W_{AB}\le \rho_A\ox\1_B, V_{AB}^{T_B},W_{AB}^{T_B}\ge 0, \\
%	 &\rho_{A}\ge 0, \tr \rho_{A}=1.
%	\end{split}
%	\end{equation}
%	Moreover, we have
%	\begin{align}
%	\EPPT(\cN_{A\rightarrow B})=\sup_{\rho_{AR}}\EPPT(\cN(\phi_{AA'})),
%	\end{align}
%	where  $\phi_{AA'}$ is a purification of $\rho_A$.
%	
%\end{theorem}
%\begin{proof}
%%	Since $
%%	\EPPTone(\cN)=\sup_{\rho_{AR}}\EPPTone(\cN(\rho_{AR}))$, we have 
%%	\begin{align}
%%		\EPPT(\cN)=\lim_{n \to \infty} 
%%	\end{align}
%The proof idea is similar to the proof of Theorem~\ref{th:exact cost}. First, we can establish
% \begin{align}
%\log_2 (2^{E_\kappa(\cN)}-1)\le	\EPPTone(\cN)\le \log_2 (2^{E_\kappa(\cN)}+1),
%\end{align}
%which leads to
%\begin{align}
%	\EPPT(\cN)&=\lim_{n\to \infty}\frac{1}{n} \EPPTone(\cN^{\ox n})\\
%	&\le \lim_{n\to \infty}\frac{1}{n}\log_2 (2^{E_\kappa(\cN^{\ox n})}+1)\\
%	&=\lim_{n\to \infty}\frac{1}{n}\log_2 (2^{nE_\kappa(\cN)}+1)  \label{eq:add EPPTN}\\
%	&=E_\kappa(\cN).
%\end{align}
%The equality in Eq.~\eqref{eq:add EPPTN} follows due to the additivity of $E_\kappa(\cN)$, which can be proved via the techniques of semidefinite programming. Details can be found in Appendix TBA.
%
%Similarly, one can show $\EPPT(\cN)\ge E_{\kappa}(\cN)$. Thus, 
%\begin{align}
%\EPPT(\cN)=E_{\kappa}(\cN).
%\end{align}
%\end{proof}
 
\subsection{Exact sequential simulation of quantum channels}

A more general notion of channel simulation, called sequential channel
simulation, was recently proposed and studied in \cite{Wilde2018}. In this section, we define
and characterize exact sequential channel simulation, as opposed to the
approximate sequential channel simulation focused on in \cite{Wilde2018}. For concreteness, we
set the free channels $\Omega$ to be completely-PPT-preserving channels. The
main idea behind sequential channel simulation is to simulate $n$ uses of the
channel $\mathcal{N}_{A\rightarrow B}$\ in such a way that they can be called
in an arbitrary order, i.e., on demand when they are needed. An $(n,M)$ exact
sequential channel simulation code consists of a maximally entangled resource
state $\Phi_{\overline{A}_{0}\overline{B}_{0}}^{M}$\ of Schmidt rank $M$ and a
set%
\begin{equation}
\{\mathcal{P}_{A_{i}\overline{A}_{i-1}\overline{B}_{i-1}\rightarrow
B_{i}\overline{A}_{i}\overline{B}_{i}}^{(i)}\}_{i=1}^{n}\label{eq:sim-prot}%
\end{equation}
of completely-PPT-preserving channels. Note that the systems $\overline{A}%
_{n}\overline{B}_{n}$ of the final completely-PPT-preserving channel $\mathcal{P}_{A_{n}%
\overline{A}_{n-1}\overline{B}_{n-1}\rightarrow B_{n}\overline{A}_{n}%
\overline{B}_{n}}^{(n)}$ can be taken trivial without loss of generality. As
before, Alice has access to all systems labeled by $A$, Bob has access to all
systems labeled by $B$, and they are in distant laboratories. The structure of
this simulation protocol is intended to be compatible with a discrimination
strategy that can test the actual $n$ channels versus the above simulation in
a sequential way, along the lines discussed in \cite{CDP08a,GDP09}%
\ and\ \cite{G12}.

We define the simulation to be exact if the following equalities hold for
orthonormal bases $\{|i\rangle_{A}\}_{A}$ and $\{|k\rangle_{B}\}_{k}$ and for
all $i_{1},j_{1},k_{1},l_{1},\ldots,i_{n},j_{n},k_{n},l_{n}\in\mathbb{N}$:
\begin{widetext}
\begin{equation}
p^{\{i_{r},j_{r},k_{r},l_{r}\}_{r=1}^{n}}=\prod\limits_{r=1}^{n}\langle
k_{r}|_{B_{r}}\mathcal{N}_{A_{r}\rightarrow B_{r}}(|i_{r}\rangle\!\langle
j_{r}|_{A_{r}})|l_{r}\rangle_{B_{r}}, \label{eq:equality-sequential-sim}%
\end{equation}
where%
\begin{align}
P_{\overline{A}_{1}\overline{B}_{1}}^{i_{1},j_{1},k_{1},l_{1}}  &  \coloneqq\langle
k_{1}|_{B_{1}}\left[  \mathcal{P}_{A_{1}\overline{A}_{0}\overline{B}%
_{0}\rightarrow B_{1}\overline{A}_{1}\overline{B}_{1}}^{(1)}(|i_{1}%
\rangle\!\langle j_{1}|_{A_{1}}\otimes\Phi_{\overline{A}_{0}\overline{B}_{0}%
}^{M})\right]  |l_{1}\rangle_{B_{1}},\\
P_{\overline{A}_{2}\overline{B}_{2}}^{i_{2},j_{2},k_{2},l_{2},i_{1}%
,j_{1},k_{1},l_{1}}  &  \coloneqq\langle k_{2}|_{B_{2}}\left[  \mathcal{P}%
_{A_{2}\overline{A}_{1}\overline{B}_{1}\rightarrow B_{2}\overline{A}%
_{2}\overline{B}_{2}}^{(2)}(|i_{2}\rangle\!\langle j_{2}|_{A_{2}}\otimes
P_{\overline{A}_{1}\overline{B}_{1}}^{i_{1},j_{1},k_{1},l_{1}})\right]
|l_{2}\rangle_{B_{2}},\\
&  \vdots\nonumber\\
P_{\overline{A}_{n-1}\overline{B}_{n-1}}^{\{i_{r},j_{r},k_{r},l_{r}%
\}_{r=1}^{n-1}}  &  \coloneqq\langle k_{n-1}|_{B_{n-1}}[\mathcal{P}_{A_{n-1}%
\overline{A}_{n-2}\overline{B}_{n-2}\rightarrow B_{n-1}\overline{A}%
_{n-1}\overline{B}_{n-1}}^{(n-1)}(|i_{n-1}\rangle\!\langle j_{n-1}|_{A_{n-1}%
}\nonumber\\
&  \qquad\otimes P_{\overline{A}_{n-2}\overline{B}_{n-2}}^{\{i_{r},j_{r}%
,k_{r},l_{r}\}_{r=1}^{n-2}})]|l_{n-1}\rangle_{B_{n-1}},\\
p^{\{i_{r},j_{r},k_{r},l_{r}\}_{r=1}^{n}}  &  \coloneqq\langle k_{n}|_{B_{n}}\left[
\mathcal{P}_{A_{n}\overline{A}_{n-1}\overline{B}_{n-1}\rightarrow B_{n}}%
^{(n)}(|i_{n}\rangle\!\langle j_{n}|_{A_{n}}\otimes P_{\overline{A}%
_{n-1}\overline{B}_{n-1}}^{\{i_{r},j_{r},k_{r},l_{r}\}_{r=1}^{n-1}})\right]
|l_{n}\rangle_{B_{n}}.
\end{align}
\end{widetext}
Figure~\ref{fig:sequential-sim} depicts the channel simulation and the exact
simulation condition in \eqref{eq:equality-sequential-sim}.

\begin{figure}[ptb]
\begin{center}
\includegraphics[
width=\linewidth
]{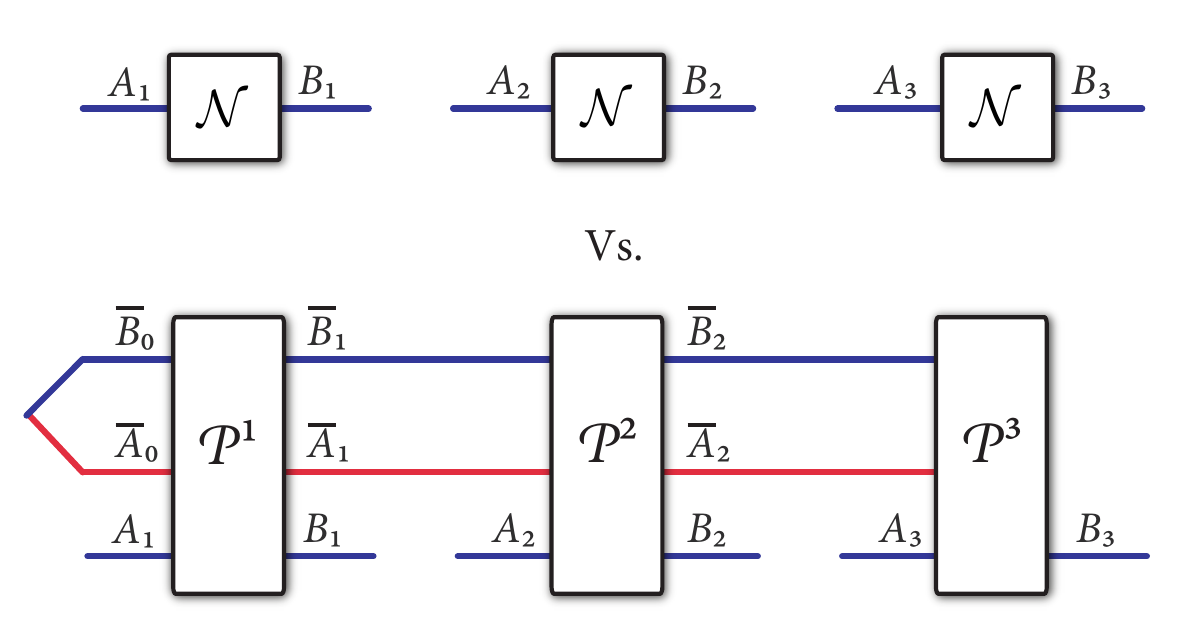}
\end{center}
\caption{The top part of the figure depicts the $n=3$ sequential uses of the
channel $\mathcal{N}_{A\rightarrow B}$\ that should be simulated. The bottom
part of the figure depicts the simulation. The simulation is considered to be
exact, as written in \eqref{eq:equality-sequential-sim}, if, after inputting
the operator $|i_{r}\rangle\!\langle j_{r}|_{A_{r}}$ to the input system $A_{r}$
and contracting the output system $B_{r}$ in terms of $\langle k_{r}|_{B_{r}%
}(\cdot)|l_{r}\rangle_{B_{r}}$, the resulting numbers are the same for both
the original channels and their simulation, for all possible $|i_{r}%
\rangle_{A_{r}}$, $|j_{r}\rangle_{A_{r}}$, $|k_{r}\rangle_{B_{r}}$, and
$|l_{r}\rangle_{B_{r}}$ and for $r\in\left\{  1,\ldots,n\right\}  $.}%
\label{fig:sequential-sim}%
\end{figure}

By defining the completely-PPT-preserving quantum channel $\mathcal{P}%
_{A^{n}\overline{A}_{0}\overline{B}_{0}\rightarrow B^{n}}$ as the serial
composition of the individual channels in \eqref{eq:sim-prot} (depicted in
Figure~\ref{fig:simulation-parallel})%
\begin{multline}
\mathcal{P}_{A^{n}\overline{A}_{0}\overline{B}_{0}\rightarrow B^{n}%
}\coloneqq(\mathcal{P}_{A_{n}\overline{A}_{n-1}\overline{B}_{n-1}\rightarrow B_{n}%
}^{(n)}\circ\\
\mathcal{P}_{A_{n-1}\overline{A}_{n-2}\overline{B}_{n-2}%
\rightarrow B_{n-1}\overline{A}_{n-1}\overline{B}_{n-1}}^{(n-1)}\circ
\cdots \circ\label{eq:serial-comp-protocol}\\
\mathcal{P}_{A_{2}\overline{A}_{1}\overline{B}_{1}\rightarrow
B_{2}\overline{A}_{2}\overline{B}_{2}}^{(2)}\circ\mathcal{P}_{A_{1}%
\overline{A}_{0}\overline{B}_{0}\rightarrow B_{1}\overline{A}_{1}\overline
{B}_{1}}^{(1)}),
\end{multline}
we conclude that the condition in \eqref{eq:equality-sequential-sim}\ is
equivalent to the following condition:%
\begin{equation}
(\mathcal{N}_{A\rightarrow B})^{\otimes n}(\Gamma_{R^{n}A^{n}})=\mathcal{P}%
_{A^{n}\overline{A}_{0}\overline{B}_{0}\rightarrow B^{n}}(\Gamma_{R^{n}A^{n}%
}\otimes\Phi_{\overline{A}_{0}\overline{B}_{0}}^{M}%
),\label{eq:choi-equivalence}%
\end{equation}
where $\Gamma_{R^{n}A^{n}}\coloneqq\bigotimes\limits_{i=1}^{n}\Gamma_{R_{i}A_{i}}$.
This latter condition is depicted in Figure~\ref{fig:sequential-sim-choi}.

\begin{figure}[ptb]
\begin{center}
\includegraphics[
width=\linewidth
]{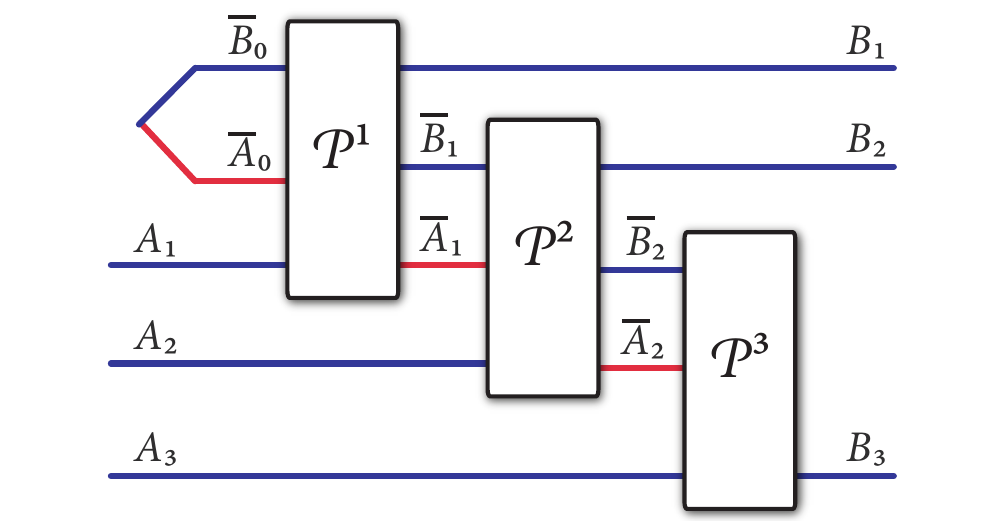}
\end{center}
\caption{The channel in \eqref{eq:serial-comp-protocol}, defined as the serial
composition of the completely-PPT-preserving channels in the simulation.}%
\label{fig:simulation-parallel}%
\end{figure}

\begin{figure}[ptb]
\begin{center}
\includegraphics[
width=\linewidth
]{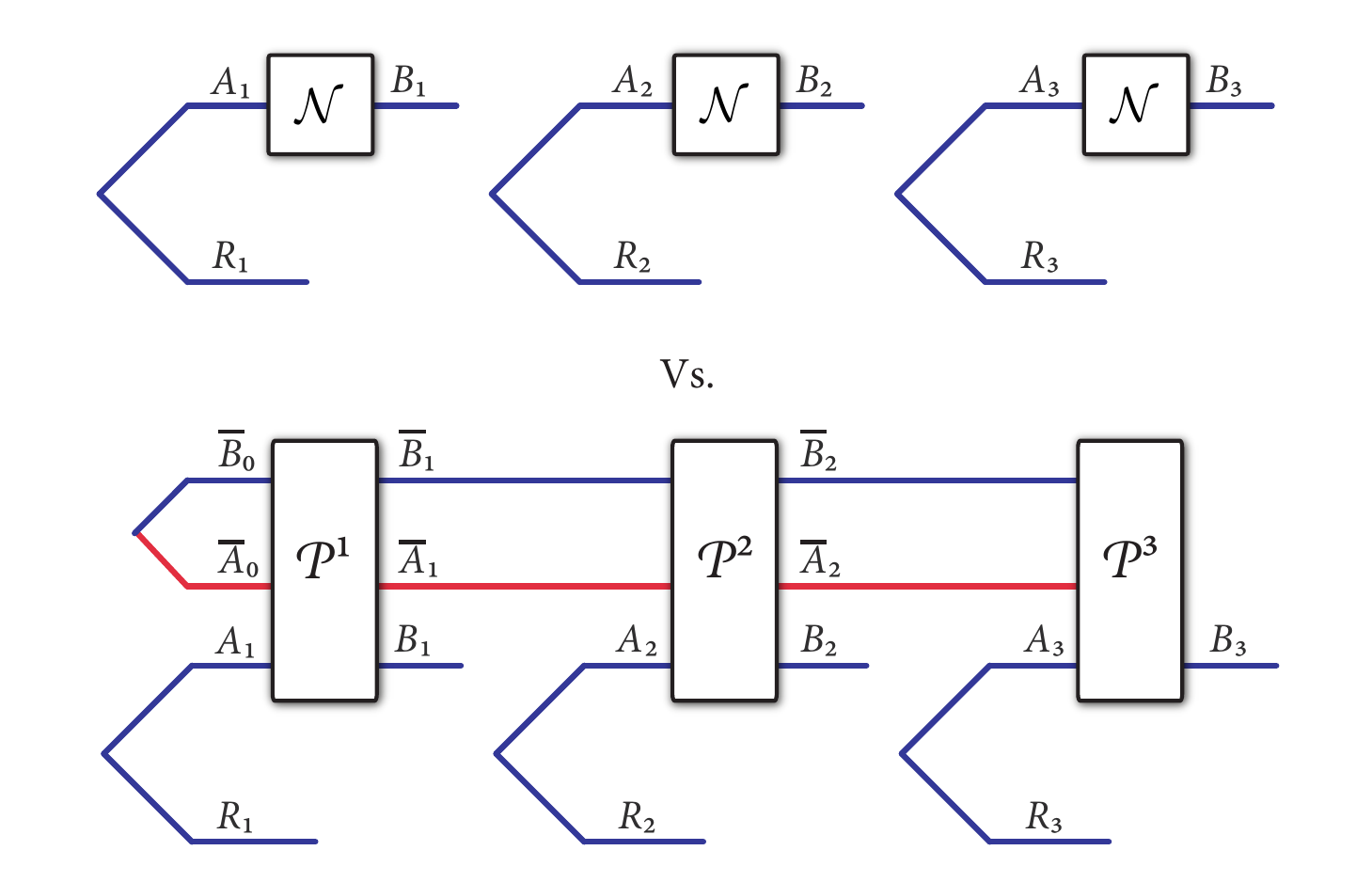}
\end{center}
\caption{The exact channel simulation condition in
\eqref{eq:equality-sequential-sim} is equivalent to the condition that the
Choi operators as depicted above are equal, as written in
\eqref{eq:choi-equivalence}.}%
\label{fig:sequential-sim-choi}%
\end{figure}

The $n$-shot exact sequential simulation cost of the channel $\mathcal{N}%
_{A\rightarrow B}$ is then defined as%
\begin{multline}
E_{\operatorname{PPT}}(\mathcal{N}_{A\rightarrow B},n)\coloneqq\inf\big\{  \log_2
M:\\
(\mathcal{N}_{A\rightarrow B})^{\otimes n}(\Gamma_{R^{n}A^{n}}%
)=\mathcal{P}_{A^{n}\overline{A}_{0}\overline{B}_{0}\rightarrow B^{n}}%
(\Gamma_{R^{n}A^{n}}\otimes\Phi_{\overline{A}_{0}\overline{B}_{0}}%
^{M})\big\}  ,
\end{multline}
where the optimization is with respect to sequential protocols of the form in
\eqref{eq:sim-prot} and the channel $\mathcal{P}_{A^{n}\overline{A}%
_{0}\overline{B}_{0}\rightarrow B^{n}}$ is defined as in
\eqref{eq:serial-comp-protocol}. The exact (sequential)\ simulation cost of
the channel $\mathcal{N}_{A\rightarrow B}$ is defined as%
\begin{equation}
E_{\operatorname{PPT}}(\mathcal{N}_{A\rightarrow B})\coloneqq\limsup_{n\rightarrow
\infty}\frac{1}{n}E_{\operatorname{PPT}}(\mathcal{N}_{A\rightarrow B},n).
\end{equation}

The condition in \eqref{eq:choi-equivalence}\ illustrates that a sequential
simulation is a particular kind of parallel simulation, but with more
constraints. That is, in a parallel simulation, the channel $\mathcal{P}%
_{A^{n}\overline{A}_{0}\overline{B}_{0}\rightarrow B^{n}}$ can be arbitrary,
whereas in a sequential simulation, it is constrained to have the form in
\eqref{eq:sim-prot}. For this reason, we can immediately conclude the
following bound for every integer $n\geq 1$:%
\begin{equation}
E_{\operatorname{PPT}}^{(1)}((\mathcal{N}_{A\rightarrow B})^{\otimes n})\leq
E_{\operatorname{PPT}}(\mathcal{N}_{A\rightarrow B}%
,n),\label{eq:parallel-lower-seq}%
\end{equation}
which in turn implies that%
\begin{equation}
E_{\operatorname{PPT}}^{(p)}(\mathcal{N}_{A\rightarrow B})\leq
E_{\operatorname{PPT}}(\mathcal{N}_{A\rightarrow B}).
\end{equation}

\subsection{Physical justification for definition of exact sequential channel
simulation}

The most general method for distinguishing the $n$ channel uses from its
simulation is with an adaptive discrimination strategy. Such a strategy was
described in \cite{Wilde2018} and consists of an initial state $\rho_{R_{1}A_{1}}$, a set
$\{\mathcal{A}_{R_{i}B_{i}\rightarrow R_{i+1}A_{i+1}}^{(i)}\}_{i=1}^{n-1}$ of
adaptive channels, and a quantum measurement $\{Q_{R_{n}B_{n}},\1_{R_{n}B_{n}%
}-Q_{R_{n}B_{n}}\}$. Let us employ the shorthand $\{\rho,\mathcal{A},Q\}$ to
abbreviate such a discrimination strategy. Note that, in performing a
discrimination strategy, the discriminator has a full description of the
channel $\mathcal{N}_{A\rightarrow B}$ and the simulation protocol, which
consists of $\Phi_{\overline{A}_{0}\overline{B}_{0}}$ and the set in
\eqref{eq:sim-prot}. If this discrimination strategy is performed on the $n$
uses of the actual channel $\mathcal{N}_{A\rightarrow B}$, the relevant states
involved are%
\begin{equation}
\rho_{R_{i+1}A_{i+1}}\coloneqq \mathcal{A}_{R_{i}B_{i}\rightarrow R_{i+1}A_{i+1}%
}^{(i)}(\rho_{R_{i}B_{i}}),
\end{equation}
for $i\in\left\{  1,\ldots,n-1\right\}  $ and
\begin{equation}
\rho_{R_{i}B_{i}}\coloneqq \mathcal{N}_{A_{i}\rightarrow B_{i}}(\rho_{R_{i}A_{i}%
}),
\end{equation}
for $i\in\left\{  1,\ldots,n\right\}  $. If this discrimination strategy is
performed on the simulation protocol discussed above, then the relevant states
involved are%
\begin{align}
\tau_{R_{1}B_{1}\overline{A}_{1}\overline{B}_{1}} &  \coloneqq \mathcal{P}%
_{A_{1}\overline{A}_{0}\overline{B}_{0}\rightarrow B_{1}\overline{A}%
_{1}\overline{B}_{1}}^{(1)}(\tau_{R_{1}A_{1}}\otimes\Phi_{\overline{A}%
_{0}\overline{B}_{0}}),\nonumber\\
\tau_{R_{i+1}A_{i+1}\overline{A}_{i}\overline{B}_{i}} &  \coloneqq \mathcal{A}%
_{R_{i}B_{i}\rightarrow R_{i+1}A_{i+1}}^{(i)}(\tau_{R_{i}B_{i}\overline{A}%
_{i}\overline{B}_{i}}),
\end{align}
for $i\in\left\{  1,\ldots,n-1\right\}  $, where $\tau_{R_{1}A_{1}}%
=\rho_{R_{1}A_{1}}$, and
\begin{equation}
\tau_{R_{i}B_{i}\overline{A}_{i}\overline{B}_{i}}\coloneqq \mathcal{P}%
_{A_{i}\overline{A}_{i-1}\overline{B}_{i-1}\rightarrow B_{i}\overline{A}%
_{i}\overline{B}_{i}}^{(i)}(\tau_{R_{i}A_{i}\overline{A}_{i-1}\overline
{B}_{i-1}}),
\end{equation}
for $i\in\left\{  2,\ldots,n\right\}  $. The discriminator then performs the
measurement $\{Q_{R_{n}B_{n}},\1_{R_{n}B_{n}}-Q_{R_{n}B_{n}}\}$ and guesses
\textquotedblleft actual channel\textquotedblright\ if the outcome is
$Q_{R_{n}B_{n}}$ and \textquotedblleft simulation\textquotedblright\ if the
outcome is $\1_{R_{n}B_{n}}-Q_{R_{n}B_{n}}$. Figure~\ref{fig:adaptive-prot}%
\ depicts the discrimination strategy in the case that the actual channel is
called $n=3$ times and in the case that the simulation is performed.%

\begin{figure}
[ptb]
\begin{center}
\includegraphics[
width=\linewidth
]%
{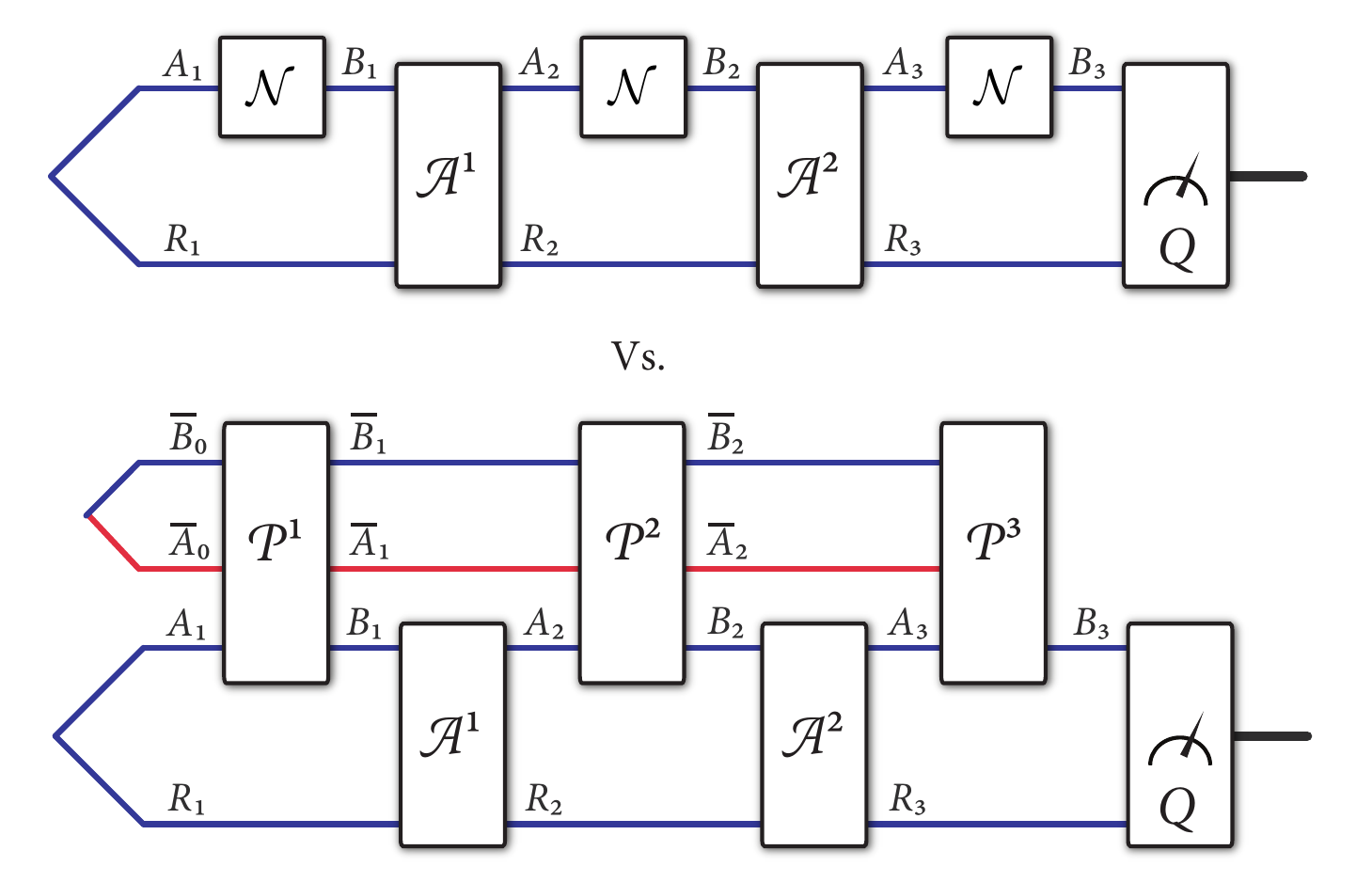}%
\caption{An adaptive protocol for discriminating the original channels (top) from
their simulation (bottom).}%
\label{fig:adaptive-prot}%
\end{center}
\end{figure}

From the physical point of view, the $n$ channel uses of $\mathcal{N}%
_{A\rightarrow B}$ are perfectly indistinguishable from the simulation if
every possible discrimination strategy as described above leads to the exact
same final decision probabilities. That is, for all possible discrimination
strategies, the original channels and their simulation are indistinguishable if the following equality holds%
\begin{equation}
\operatorname{Tr}Q_{R_{n}B_{n}}\rho_{R_{n}B_{n}}=\operatorname{Tr}%
Q_{R_{n}B_{n}}\tau_{R_{n}B_{n}}.\label{eq:physical-equiv-PPT-seq-sim}%
\end{equation}

We now prove that this physical notion of exact channel simulation is
equivalent to the more mathematical notion of exact channel simulation
described in the previous section. First, suppose that the physical notion of
exact channel simulation holds; i.e., the equality in
\eqref{eq:physical-equiv-PPT-seq-sim} holds for all possible discrimination
strategies. Then this means that $\rho_{R_{n}B_{n}}=\tau_{R_{n}B_{n}}$ for all
possible discrimination strategies. One possible strategy could be to pick the
input state for each system $A_{i}$ as one of the following states%
\begin{equation}
\rho_{A}^{x,y}=\left\{
\begin{array}
[c]{cc}%
|x\rangle\!\langle x|_{A} & \text{if }x=y\\
\frac{1}{2}\left(  |x\rangle_{A}+|y\rangle_{A}\right)  \left(  \langle
x|_{A}+\langle y|_{A}\right)   & \text{if }x<y\\
\frac{1}{2}\left(  |x\rangle_{A}+i|y\rangle_{A}\right)  \left(  \langle
x|_{A}-i\langle y|_{A}\right)   & \text{if }x>y
\end{array}
\right.  .\label{eq-nqt:density-op-basis}%
\end{equation}
and the output system $B_{i}$ could be measured in the same way, but with
respect to an orthonormal basis for the output system. Then all input state
choices and measurement outcomes could be stored in auxiliary classical
registers. Consider that for all $x,y$ such that $x<y$, the following holds
\begin{align}
|x\rangle\!\langle y|_{A} &  =\left(  \rho_{A}^{x,y}-\frac{1}{2}\rho_{A}%
^{x,x}-\frac{1}{2}\rho_{A}^{y,y}\right)  \notag \\
& \qquad -i\left(  \rho_{A}^{y,x}-\frac{1}%
{2}\rho_{A}^{x,x}-\frac{1}{2}\rho_{A}^{y,y}\right)  ,\\
|y\rangle\!\langle x|_{A} &  =\left(  \rho_{A}^{x,y}-\frac{1}{2}\rho_{A}%
^{x,x}-\frac{1}{2}\rho_{A}^{y,y}\right) \notag  \\
&\qquad  +i\left(  \rho_{A}^{y,x}-\frac{1}%
{2}\rho_{A}^{x,x}-\frac{1}{2}\rho_{A}^{y,y}\right)  ,
\end{align}
so that linear combinations of all the outcomes realize the operator basis discussed in the mathematical definition of equivalence.
Since the equivalence holds for all possible discrimination strategies, we can
collect the data from them in the auxiliary registers, and then finally
conclude that the condition in \eqref{eq:equality-sequential-sim}\ holds.%

\begin{figure}
[ptb]
\begin{center}
\includegraphics[
width=\linewidth
]%
{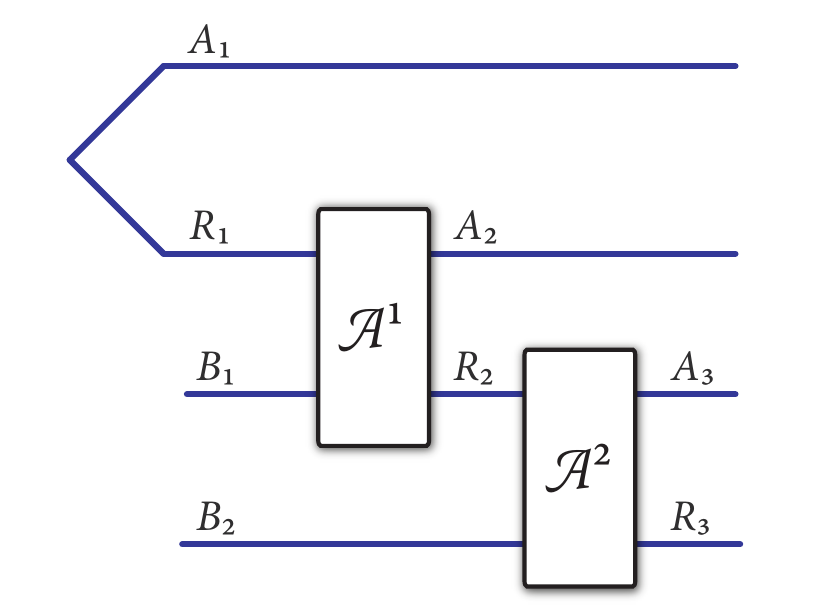}%
\caption{The discrimination strategy $\rho_{R_1 A_1}$ and $\{\mathcal{A}_{R_{i}B_{i}\rightarrow R_{i+1}A_{i+1}}^{(i)}\}_{i=1}^{n-1}$ represented as a single channel
$\mathcal{A}_{B^{n}\rightarrow A^{n}R_{n}}$, as written in \eqref{eq:disc-strat-to-parallel}.}%
\label{fig:disc-strat-parallel-channel}%
\end{center}
\end{figure}

To see that the mathematical notion of exact sequential simulation implies the
physical one, we use the method of post-selected teleportation, essentially
the same idea as what was used in the proof of \cite[Theorem~4]{BSW11}. Consider the channel defined by
the serial composition of the channels in the discrimination strategy
$\{\rho,\mathcal{A},Q\}$:
\begin{multline}
\mathcal{A}_{B^{n}\rightarrow A^{n}R_{n}}=\mathcal{A}_{R_{n-1}B_{n-1}%
\rightarrow R_{n}A_{n}}^{(n-1)}\circ\cdots\circ
\\
\mathcal{A}_{R_{2}%
B_{2}\rightarrow R_{3}A_{3}}^{(2)}\circ
\mathcal{A}_{R_{1}B_{1}\rightarrow
R_{2}A_{2}}^{(1)}\circ\rho_{R_{1}A_{1}},
\label{eq:disc-strat-to-parallel}
\end{multline}
where the notation $\rho_{R_{1}A_{1}}$ indicates a preparation channel that
tensors in the state $\rho_{R_{1}A_{1}}$.
Figure~\ref{fig:disc-strat-parallel-channel}\ depicts this channel. By acting
on both sides of the exact simulation condition with the channel and then the
projection onto
$|\Gamma\rangle\!\langle\Gamma|_{A^{n}S^{n}}$, with $S\simeq R$, we find that
\begin{widetext}
\begin{multline}
\langle\Gamma|_{A^{n}S^{n}}\left[  \mathcal{A}_{B^{n}\rightarrow A^{n}R_{n}%
}\circ(\mathcal{N}_{A\rightarrow B})^{\otimes n}(\Gamma_{S^{n}A^{n}})\right]
|\Gamma\rangle_{A^{n}S^{n}}\\
=\langle\Gamma|_{A^{n}S^{n}}\left[  \mathcal{A}_{B^{n}\rightarrow A^{n}R_{n}%
}\circ\mathcal{P}_{A^{n}\overline{A}_{0}\overline{B}_{0}\rightarrow B^{n}%
}(\Gamma_{S^{n}A^{n}}\otimes\Phi_{\overline{A}_{0}\overline{B}_{0}}%
^{M})\right]  |\Gamma\rangle_{A^{n}S^{n}}.
\label{eq:postselected-tele-arg-1}
\end{multline}
where%
\begin{equation}
|\Gamma\rangle_{A^{n}S^{n}}=|\Gamma\rangle_{A_{1}S_{1}}\otimes|\Gamma
\rangle_{A_{2}S_{2}}\otimes\cdots\otimes|\Gamma\rangle_{A_{n}S_{n}}.
\end{equation}
From the method of post-selected teleportation, we conclude that%
\begin{align}
\langle\Gamma|_{A^{n}S^{n}}\left[  \mathcal{A}_{B^{n}\rightarrow A^{n}R_{n}%
}\circ(\mathcal{N}_{A\rightarrow B})^{\otimes n}(\Gamma_{S^{n}A^{n}})\right]
|\Gamma\rangle_{A^{n}S^{n}} &  =\rho_{R_{n}B_{n}},\\
\langle\Gamma|_{A^{n}S^{n}}\left[  \mathcal{A}_{B^{n}\rightarrow A^{n}R_{n}%
}\circ\mathcal{P}_{A^{n}\overline{A}_{0}\overline{B}_{0}\rightarrow B^{n}%
}(\Gamma_{S^{n}A^{n}}\otimes\Phi_{\overline{A}_{0}\overline{B}_{0}}%
^{M})\right]  |\Gamma\rangle_{A^{n}S^{n}} &  =\tau_{R_{n}B_{n}}.
\end{align}
\end{widetext}
Putting these together, we finally conclude that
\begin{equation}
\rho_{R_{n}B_{n}}
=\tau_{R_{n}B_{n}}.
\label{eq:postselected-tele-arg-last}
\end{equation}
Thus, no physical discrimination strategy can distinguish
the original channels from their simulation if the exact simulation condition
in \eqref{eq:choi-equivalence} holds. Figure~\ref{fig:teleportation-argument} depicts the operator $\mathcal{A}_{B^{n}\rightarrow A^{n}R_{n}%
}\circ\mathcal{P}_{A^{n}\overline{A}_{0}\overline{B}_{0}\rightarrow B^{n}%
}(\Gamma_{S^{n}A^{n}}\otimes\Phi_{\overline{A}_{0}\overline{B}_{0}}%
^{M})$ in order to help visualize the above argument.%

\begin{figure}
[ptb]
\begin{center}
\includegraphics[
width=\linewidth
]%
{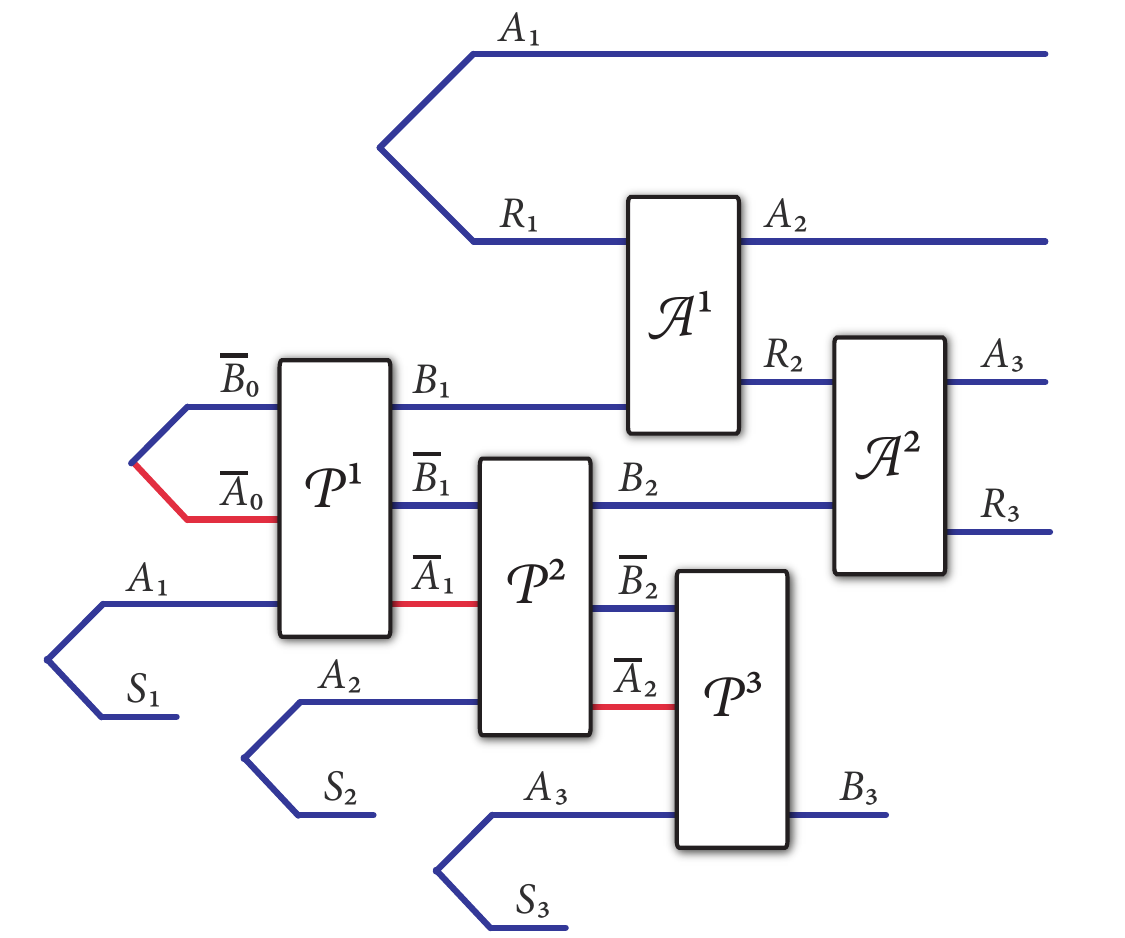}%
\caption{This figure depicts the operator 
$\mathcal{A}_{B^{n}\rightarrow A^{n}R_{n}%
}\circ\mathcal{P}_{A^{n}\overline{A}_{0}\overline{B}_{0}\rightarrow B^{n}%
}(\Gamma_{S^{n}A^{n}}\otimes\Phi_{\overline{A}_{0}\overline{B}_{0}}%
^{M})$ in
order to help visualize the
argument in \eqref{eq:postselected-tele-arg-1}--\eqref{eq:postselected-tele-arg-last}. By projecting the systems $S_{1}A_{1}$ onto $\langle
\Gamma|_{S_{1}A_{1}}$, $S_{2}A_{2}$ onto $\langle\Gamma|_{S_{2}A_{2}}$, and
$S_{3}A_{3}$ onto $\langle\Gamma|_{S_{3}A_{3}}$, the method of post-selected
teleportation guarantees that the remaining state is $\tau_{R_{3}B_{3}}$,
which is the final state of the bottom part of Figure~\ref{fig:adaptive-prot}.}%
\label{fig:teleportation-argument}%
\end{center}
\end{figure}

\subsection{Exact sequential channel simulation cost}

We first establish the following bounds on the $n$-shot exact sequential
simulation cost:

\begin{proposition}
\label{prop:n-shot-seq-bnd}
Let $\mathcal{N}_{A\rightarrow B}$ be a quantum channel such that $E_{\kappa}(\mathcal{N}) > 0$. Then the $n$-shot
exact sequential simulation cost is bounded as%
\begin{align}
\log_2\left[  2^{nE_{\kappa}(\mathcal{N})}-1\right] & \leq
E_{\operatorname{PPT}}(\mathcal{N}_{A\rightarrow B},n)\\
& \leq\log_2\left[
\frac{2^{\left(  n+1\right)  E_{\kappa}(\mathcal{N})}-1}{2^{E_{\kappa
}(\mathcal{N})}-1}\right]  .
\end{align}
If $E_{\kappa}(\mathcal{N})=0$, then 
$E_{\operatorname{PPT}}(\mathcal{N}_{A\rightarrow B},n)=0$.
\end{proposition}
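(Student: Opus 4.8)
The plan is to prove the two bounds separately: the lower bound follows by chaining results already established, while the upper bound requires an explicit recursive construction of sequential protocols; the degenerate case $E_\kappa(\mathcal{N})=0$ is treated at the end via faithfulness.

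For the lower bound I would simply combine inequalities already in hand. Since a sequential simulation is a restricted kind of parallel simulation, \eqref{eq:parallel-lower-seq} gives $E_{\operatorname{PPT}}^{(1)}((\mathcal{N})^{\otimes n}) \le E_{\operatorname{PPT}}(\mathcal{N},n)$. Applying Proposition~\ref{prop:ch-one-shot-bnd} to the channel $\mathcal{N}^{\otimes n}$ yields $\log(2^{E_\kappa(\mathcal{N}^{\otimes n})}-1) \le E_{\operatorname{PPT}}^{(1)}((\mathcal{N})^{\otimes n})$, and the additivity of $E_\kappa$ under tensor products (Proposition~\ref{prop: add kappa channel}) gives $E_\kappa(\mathcal{N}^{\otimes n}) = nE_\kappa(\mathcal{N})$. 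Putting these together produces $\log(2^{nE_\kappa(\mathcal{N})}-1) \le E_{\operatorname{PPT}}(\mathcal{N},n)$, which is the claimed lower bound; note the hypothesis $E_\kappa(\mathcal{N})>0$ is exactly what makes the left-hand side well defined.

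For the upper bound, the plan is to build a sequential simulation protocol recursively in $n$, using the one-shot parallel construction of Theorem~\ref{prop:one-shot channel} as the atomic step. Fix a positive integer $m$ and a channel Choi operator $Q_{AB}$ satisfying the feasibility condition $-(m-1)Q_{AB}^{T_B} \le (J_{AB}^{\mathcal{N}})^{T_B} \le (m+1)Q_{AB}^{T_B}$. Given a sequential simulation of $\mathcal{N}^{\otimes(n-1)}$ that uses a maximally entangled resource of Schmidt rank $M_{n-1}$, I would build one for $\mathcal{N}^{\otimes n}$ whose first channel $\mathcal{P}^{(1)}_{A_1\overline A_0\overline B_0\to B_1\overline A_1\overline B_1}$ "peels off" an $m$-dimensional maximally entangled block of the resource and uses it, exactly as in the achievability part of Theorem~\ref{prop:one-shot channel}, to simulate $\mathcal{N}$ on $A_1\to B_1$, while routing a residual maximally entangled block of Schmidt rank $M_{n-1}$ into the memory systems $\overline A_1\overline B_1$; the channels $\mathcal{P}^{(2)},\dots,\mathcal{P}^{(n)}$ are then the inductively constructed $(n-1)$-use protocol on $A_2,\dots,A_n$ with that block. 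The resulting Schmidt rank obeys a recursion of the form $M_n = aM_{n-1}+1$ with $M_0=1$, whose solution is a geometric sum; a careful choice of the atomic step — in particular choosing block sizes so that the relevant one-shot cost is controlled through $E_\kappa$ by Proposition~\ref{prop:ch-one-shot-bnd} together with additivity (Proposition~\ref{prop: add kappa channel}) — is what is needed to make the closed form exactly $\tfrac{2^{(n+1)E_\kappa(\mathcal{N})}-1}{2^{E_\kappa(\mathcal{N})}-1}$, giving $E_{\operatorname{PPT}}(\mathcal{N},n) \le \log M_n$. Along the way two things must be verified: that the composed channel is completely PPT-preserving, which reduces to the same symmetric/antisymmetric-subspace computation as in Theorem~\ref{prop:one-shot channel} — now applied to a resource that decomposes as a superposition of $\Phi^m\otimes\Phi^{M_{n-1}}$ and a product vector, the extra product term contributing a positive-semidefinite and hence harmless operator after partial transpose — and that the composition reproduces $(\mathcal{N})^{\otimes n}(\Gamma_{R^nA^n})$ exactly, which follows from the post-selected teleportation identity \eqref{eq:PS-tele}.

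Finally, for the case $E_\kappa(\mathcal{N})=0$: by Proposition~\ref{prop:faithful-channels} this means $\mathcal{N}$ is a PPT entanglement binding channel, i.e., $(J_{AB}^{\mathcal{N}})^{T_B}\ge 0$, so that $T_B\circ\mathcal{N}_{A\to B}$ is completely positive and $\mathcal{N}$ itself is realized by a completely-PPT-preserving channel requiring no entangled resource; taking each $\mathcal{P}^{(i)}$ to be this realization (acting trivially on a Schmidt-rank-one resource) yields an exact sequential simulation with $M=1$, whence $E_{\operatorname{PPT}}(\mathcal{N},n)=0$. The main obstacle is the upper bound — specifically, verifying that the recursively assembled channel is genuinely completely PPT-preserving while still reproducing the target Choi operator exactly, and arranging the recursion so that its closed-form solution is precisely the geometric series in the statement rather than a weaker bound.
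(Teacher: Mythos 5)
Your lower bound and your treatment of the $E_{\kappa}(\mathcal{N})=0$ case match the paper's argument: chain \eqref{eq:parallel-lower-seq} with Proposition~\ref{prop:ch-one-shot-bnd} and additivity (Proposition~\ref{prop: add kappa channel}), and use faithfulness (Proposition~\ref{prop:faithful-channels}) for the degenerate case. The gap is in the upper bound. The construction you describe---peel an $m$-dimensional maximally entangled block off the resource, use it as in Theorem~\ref{prop:one-shot channel} to simulate the first channel use, and route an \emph{untouched} residual block $\Phi^{M_{n-1}}$ into memory---forces the initial resource to factor as $\Phi^{m}\otimes\Phi^{M_{n-1}}$, so its Schmidt rank obeys the multiplicative recursion $M_{n}=m\,M_{n-1}$ with $m\geq 2^{E_{\kappa}(\mathcal{N})}+1$ (roughly), not the affine recursion $M_{n}=aM_{n-1}+1$ that you then assert. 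The multiplicative recursion only yields $E_{\operatorname{PPT}}(\mathcal{N},n)\leq n\log\bigl(2^{E_{\kappa}(\mathcal{N})}+1\bigr)$, which is strictly weaker than the claimed bound $\log\bigl[\sum_{\ell=0}^{n}2^{\ell E_{\kappa}(\mathcal{N})}\bigr]$ (compare via the binomial expansion of $(2^{E_{\kappa}}+1)^{n}$), and it is too weak for the downstream Theorem~\ref{thm:seq-ch-sim-kappa-ent}, since its per-use rate tends to $\log(2^{E_{\kappa}}+1)>E_{\kappa}$. Your phrase ``a careful choice of the atomic step'' is exactly where the missing idea sits, but it is not supplied.

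The idea the paper uses is that round $i$ should not pass entanglement along untouched; instead it performs a single one-shot PPT simulation of the \emph{composite} channel $\mathcal{N}_{A\to B}\otimes\operatorname{id}^{M_{i}}$, so that the simulation both enacts $\mathcal{N}$ on $A_{i}$ and regenerates a fresh $\Phi^{M_{i}}$ (Alice feeds half of a locally prepared $\Phi^{M_{i}}$ into the simulated identity channel) to serve as the next round's resource. Applying Proposition~\ref{prop:ch-one-shot-bnd} to this composite channel, together with additivity and $E_{\kappa}(\operatorname{id}^{M_{i}})=\log M_{i}$ (Propositions~\ref{prop: add kappa channel} and~\ref{prop:normalization-channels}), gives $M_{i-1}\leq 2^{E_{\kappa}(\mathcal{N})}M_{i}+1$; the last round needs only $M_{n-1}\leq 2^{E_{\kappa}(\mathcal{N})}+1$, and iterating backwards gives precisely $M_{0}\leq\sum_{\ell=0}^{n}2^{\ell E_{\kappa}(\mathcal{N})}$. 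This is where the ``$+1$'' in your recursion must come from---the additive overhead of the one-shot bound applied once per round to the composite channel---not from block sizes. A further problem with your sketch: the claim that the modified resource, ``a superposition of $\Phi^{m}\otimes\Phi^{M_{n-1}}$ and a product vector,'' contributes a ``harmless'' positive-semidefinite term after partial transpose is not a valid argument, since the cross terms of such a superposition need not have positive partial transpose; in the composite-channel route this verification is unnecessary, because complete PPT preservation of each round's simulating channel is already certified by Theorem~\ref{prop:one-shot channel}.
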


\begin{proof}
Suppose that $E_{\kappa}(\mathcal{N}) > 0$.
The inequality%
\begin{equation}
\log_2\left[  2^{nE_{\kappa}(\mathcal{N})}-1\right]  \leq
E_{\operatorname{PPT}}(\mathcal{N}_{A\rightarrow B},n)
\end{equation}
is a direct consequence of \eqref{eq:parallel-lower-seq}, Proposition~\ref{prop:ch-one-shot-bnd}, and Proposition~\ref{prop: add kappa channel}.

So we now prove the other inequality. The main idea behind the construction is
for the $i$th completely-PPT-preserving channel to perform the following exact
simulation:%
\begin{multline}
\mathcal{P}_{A_{i}\overline{A}_{i-1}\overline{B}_{i-1}\rightarrow
B_{i}\overline{A}_{i}\overline{B}_{i}}^{(i)}(\rho_{A_{i}}\otimes
\Phi_{\overline{A}_{i-1}\overline{B}_{i-1}}^{M_{i-1}})=\\
\mathcal{N}%
_{A\rightarrow B}(\rho_{A_{i}})\otimes\Phi_{\overline{A}_{i}\overline{B}_{i}%
}^{M_{i}},
\label{eq:ach-constr-seq-sim}
\end{multline}
for $i\in\left\{  1,\ldots,n-1\right\}  $ and for the $n$th
completely-PPT-preserving channel to perform the following exact simulation:%
\begin{equation}
\mathcal{P}_{A_{n}\overline{A}_{n-1}\overline{B}_{n-1}\rightarrow B_{n}}%
^{(n)}(\rho_{A_{n}}\otimes\Phi_{\overline{A}_{n-1}\overline{B}_{n-1}}%
^{M_{n-1}})=\mathcal{N}_{A\rightarrow B}(\rho_{A_{n}}).
\end{equation}
Note that, in order to perform the simulation in \eqref{eq:ach-constr-seq-sim}, we could actually simulate
the channel $\mathcal{N}_{A\rightarrow B}\otimes\operatorname{id}^{M_{i}}$,
and then send one share of the maximally entangled state $\Phi_{\overline
{A}_{i}\overline{B}_{i}}^{M_{i}}$ through the exactly simulated identity
channel $\operatorname{id}^{M_{i}}$ to produce the output in \eqref{eq:ach-constr-seq-sim}.

Thus, we should now determine an upper bound on the simulation cost when using this
construction. The most effective way to do so is to start from the final
($n$th) simulation. By the one-shot bound from Proposition~\ref{prop:ch-one-shot-bnd}, its cost $\log_2 M_{n-1}$ is
bounded as%
\begin{equation}
\log_2 M_{n-1}\leq\log_2\left[  2^{E_{\kappa}(\mathcal{N})}+1\right]  .
\end{equation}
The cost $\log_2 M_{n-2}$ of the $n-1$ simulation is then bounded as%
\begin{align}
\log_2 M_{n-2}  & \leq\log_2\left[  2^{E_{\kappa}(\mathcal{N}\otimes
\operatorname{id}^{M_{n-1}})}+1\right]  \\
& \leq\log_2\left[  2^{E_{\kappa}(\mathcal{N})+\log_2 M_{n-1}}+1\right]  \\
& =\log_2\left[  2^{E_{\kappa}(\mathcal{N})}M_{n-1}+1\right]  \\
& \leq\log_2\left[  2^{E_{\kappa}(\mathcal{N})}\left(  2^{E_{\kappa}%
(\mathcal{N})}+1\right)  +1\right]  \\
& =\log_2\left[  \sum_{\ell=0}^{2}2^{\ell E_{\kappa}(\mathcal{N})}\right]  ,
\end{align}
where we made use of the subadditivity inequality from Proposition~\ref{prop: add kappa channel}.
Performing this kind of reasoning iteratively, going backward until the first
simulation, we find the following bound:%
\begin{equation}
\log_2 M_{0}\leq\log_2\left[  \sum_{\ell=0}^{n}2^{\ell E_{\kappa}(\mathcal{N}%
)}\right]  =\log_2\left[  \frac{2^{\left(  n+1\right)  E_{\kappa}(\mathcal{N}%
)}-1}{2^{E_{\kappa}(\mathcal{N})}-1}\right]  .
\end{equation}

If $E_{\kappa}(\mathcal{N})=0$, then 
the channel $\mathcal{N}$ is PPT entanglement binding by Proposition~\ref{prop:faithful-channels} and thus can be simulated at no cost, so that 
$E_{\operatorname{PPT}}(\mathcal{N}_{A\rightarrow B},n)=0$.
This concludes the proof.
\end{proof}

\begin{theorem}[Exact sequential  cost]
\label{thm:seq-ch-sim-kappa-ent}
Let $\mathcal{N}_{A\rightarrow B}$ be a quantum channel.
Then the exact sequential channel simulation cost of $\mathcal{N}%
_{A\rightarrow B}$ is equal to its $\kappa$-entanglement:%
\begin{equation}
E_{\operatorname{PPT}}(\mathcal{N}_{A\rightarrow B})=E_{\kappa}(\mathcal{N}%
_{A\rightarrow B}).
\end{equation}

\end{theorem}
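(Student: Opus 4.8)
The plan is to derive the theorem directly from the two-sided estimate on the $n$-shot exact sequential simulation cost in Proposition~\ref{prop:n-shot-seq-bnd}, together with the definition $E_{\operatorname{PPT}}(\mathcal{N}_{A\rightarrow B})=\liminf_{n\rightarrow\infty}\frac{1}{n}E_{\operatorname{PPT}}(\mathcal{N}_{A\rightarrow B},n)$. The substantive content has already been absorbed into that proposition, so what remains is a short asymptotic argument: divide both bounds by $n$ and let $n\rightarrow\infty$.

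First I would dispose of the degenerate case $E_{\kappa}(\mathcal{N})=0$: Proposition~\ref{prop:n-shot-seq-bnd} gives $E_{\operatorname{PPT}}(\mathcal{N}_{A\rightarrow B},n)=0$ for every $n$, hence $E_{\operatorname{PPT}}(\mathcal{N}_{A\rightarrow B})=0=E_{\kappa}(\mathcal{N})$. For the main case $E_{\kappa}(\mathcal{N})>0$, the lower bound in Proposition~\ref{prop:n-shot-seq-bnd} yields $\frac{1}{n}E_{\operatorname{PPT}}(\mathcal{N}_{A\rightarrow B},n)\geq\frac{1}{n}\log\left(2^{nE_{\kappa}(\mathcal{N})}-1\right)$, and writing $\log(2^{nE_{\kappa}(\mathcal{N})}-1)=nE_{\kappa}(\mathcal{N})+\log(1-2^{-nE_{\kappa}(\mathcal{N})})$ shows that the right-hand side tends to $E_{\kappa}(\mathcal{N})$; thus $E_{\operatorname{PPT}}(\mathcal{N}_{A\rightarrow B})\geq E_{\kappa}(\mathcal{N})$. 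For the reverse direction, the upper bound gives $\frac{1}{n}E_{\operatorname{PPT}}(\mathcal{N}_{A\rightarrow B},n)\leq\frac{1}{n}\log\left(\frac{2^{(n+1)E_{\kappa}(\mathcal{N})}-1}{2^{E_{\kappa}(\mathcal{N})}-1}\right)\leq\frac{(n+1)E_{\kappa}(\mathcal{N})-\log(2^{E_{\kappa}(\mathcal{N})}-1)}{n}$, which likewise converges to $E_{\kappa}(\mathcal{N})$, so that $E_{\operatorname{PPT}}(\mathcal{N}_{A\rightarrow B})\leq E_{\kappa}(\mathcal{N})$. Combining the two inequalities proves the claimed equality (and in fact shows the $\liminf$ is a genuine limit).

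The only place where any real difficulty arises is upstream, in Proposition~\ref{prop:n-shot-seq-bnd} itself, and I would not expect it to resurface here. Its converse half rests on the observation \eqref{eq:parallel-lower-seq} that a sequential simulation is a restricted parallel simulation, combined with the one-shot parallel bound (Proposition~\ref{prop:ch-one-shot-bnd}) and additivity of $E_{\kappa}$ for channels (Proposition~\ref{prop: add kappa channel}); its achievability half requires the recursive construction in which the $i$th completely-PPT-preserving channel simulates $\mathcal{N}_{A\rightarrow B}\otimes\operatorname{id}^{M_{i}}$ and feeds one share of $\Phi^{M_{i}}$ through the simulated identity channel, with subadditivity of $E_{\kappa}$ controlling the geometric growth of the intermediate resource dimensions $M_{i}$. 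Given those inputs, the present statement is purely a matter of taking the limit, and there is no additional obstacle.
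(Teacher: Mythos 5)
Your argument is correct and follows essentially the same route as the paper: both directions come from the two-sided $n$-shot bound in Proposition~\ref{prop:n-shot-seq-bnd} (the paper phrases its lower bound via Theorem~\ref{thm:parallel-cost-asymp}, which is equivalent), followed by dividing by $n$ and taking the limit, with the $E_{\kappa}(\mathcal{N})=0$ case handled separately exactly as you do.
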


\begin{proof}
First suppose that $E_{\kappa}(\mathcal{N}) > 0$.
The lower bound follows from Proposition~\ref{prop:n-shot-seq-bnd} and Theorem~\ref{thm:parallel-cost-asymp}. The upper bound
follows from Proposition~\ref{prop:n-shot-seq-bnd}:%
\begin{align}
& \limsup_{n\rightarrow\infty}\frac{1}{n}E_{\operatorname{PPT}}(\mathcal{N}%
_{A\rightarrow B},n)  \notag \\
& \leq\limsup_{n\rightarrow\infty}\frac{1}{n}\log_2\left[
\frac{2^{\left(  n+1\right)  E_{\kappa}(\mathcal{N})}-1}{2^{E_{\kappa
}(\mathcal{N})}-1}\right]  \\
& =\limsup_{n\rightarrow\infty}\frac{1}{n}\log_2\left[  \frac{2^{nE_{\kappa
}(\mathcal{N})}-2^{-E_{\kappa}(\mathcal{N})}}{1-2^{-E_{\kappa}(\mathcal{N})}%
}\right]  \\
& =E_{\kappa}(\mathcal{N}).
\end{align}

If $E_{\kappa}(\mathcal{N}) = 0$, then the channel $\mathcal{N}$ is PPT entanglement binding by Proposition~\ref{prop:faithful-channels} and thus can be simulated at no cost.
This concludes the proof.
\end{proof}

\bigskip
By combining Theorems~\ref{thm:parallel-cost-asymp} and \ref{thm:seq-ch-sim-kappa-ent}, we reach the conclusion that the exact entanglement cost of  parallel and sequential simulation of quantum channels are in fact equal and given by the $\kappa$-entanglement of the channel. Thus, the $\kappa$-entanglement is a fundamental measure of the entanglement of a quantum channel. Not only is it efficiently computable by means of a semi-definite program (for finite-dimensional channels), but it also possesses a direct operational meaning in terms of these channel simulation tasks. It is the only known channel entanglement measure possessing these properties, and from this perspective, it can be helpful in understanding the fundamental structure of entanglement of quantum channels. 

\subsection{PPT-simulable channels}

Although the theory of exact simulation of quantum channels under PPT operations simplifies significantly due to Theorems~\ref{thm:parallel-cost-asymp} and \ref{thm:seq-ch-sim-kappa-ent}, there is a class of channels for which the theory is even simpler. These channels were defined in \cite{KW17a} and are known as PPT-simulable channels. In this section, we recall their definition and show how the theory of exact entanglement cost is quite simple for certain PPT-simulable channels. 

\begin{definition}
[PPT-simulable channel \cite{KW17a}]A channel $\mathcal{N}_{A\rightarrow B}$\ is
PPT-simulable with associated resource state $\omega_{A^{\prime}B^{\prime}}$
 if there exists a completely PPT-preserving channel
$\mathcal{P}_{AA^{\prime}B^{\prime}\rightarrow B}$ such that, for every input
state $\rho_{A}$%
\begin{equation}
\mathcal{N}_{A\rightarrow B}(\rho_{A})=\mathcal{P}_{AA^{\prime}B^{\prime
}\rightarrow B}(\rho_{A}\otimes\omega_{A^{\prime}B^{\prime}}).
\label{eq:PPT-simulable-ch}
\end{equation}

\end{definition}

A particular kind of PPT-simulable channel is one that is resource-seizable, as defined in \cite[Section~VI]{Wilde2018}:

\begin{definition}
[Resource-seizable \cite{Wilde2018}]\label{def:res-seize}Let $\mathcal{N}_{A\rightarrow B}$ be
a PPT-simulable channel with associated resource state $\omega_{A^{\prime
}B^{\prime}}$. The channel $\mathcal{N}_{A\rightarrow B}$ is resource-seizable
if there exists
a PPT state $\tau_{A_{M}AB_{M}}$ and a completely PPT-preserving
post-processing channel $\mathcal{D}_{A_{M}BB_{M}\rightarrow A^{\prime
}B^{\prime}}$ such that%
\begin{equation}
\mathcal{D}_{A_{M}BB_{M}\rightarrow A^{\prime}B^{\prime}}(\mathcal{N}%
_{A\rightarrow B}(\tau_{A_{M}AB_{M}}))=\omega_{A^{\prime}B^{\prime}}.
\end{equation}

\end{definition}

For PPT-simulable channels, it follows that the exact entanglement cost of sequential channel simulation is bounded from above by the exact entanglement cost of the underlying resource state:

\begin{theorem}
\label{prop:PPT-sim}Let $\mathcal{N}_{A\rightarrow B}$ be a PPT-simulable
channel with associated resource state $\omega_{A^{\prime}B^{\prime}}$. Then
the PPT-assisted entanglement cost of a channel is bounded from above as%
\begin{equation}
E_{\operatorname{PPT}}(\mathcal{N}_{A\rightarrow B})\leq E_{\operatorname{PPT}%
}(\omega_{A^{\prime}B^{\prime}})=E_{\kappa}(\omega_{A^{\prime}B^{\prime}}).
\end{equation}

\end{theorem}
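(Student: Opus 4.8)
The plan is to convert an exact one-shot PPT-entanglement dilution protocol for many copies of the resource state $\omega_{A'B'}$ into an exact sequential channel simulation code for $\mathcal{N}_{A\to B}$, using the PPT-simulation channel $\mathcal{P}_{AA'B'\to B}$ supplied by the definition of PPT-simulability. Fix $n$ and set $\log M \coloneqq E^{(1)}_{\operatorname{PPT}}(\omega_{A'B'}^{\otimes n})$; by definition of the one-shot exact PPT-entanglement cost there is a completely-PPT-preserving channel $\Lambda$ with $\Lambda(\Phi^M_{\hat A\hat B}) = \omega_{A'B'}^{\otimes n}$, where the $n$ copies of $A'$ sit on Alice's side and the $n$ copies of $B'$ on Bob's side. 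We take $\Phi^M_{\overline{A}_0\overline{B}_0}$ to be the resource state of an $(n,M)$ sequential code.

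I would next specify the individual completely-PPT-preserving channels $\mathcal{P}^{(i)}$ of the sequential protocol. The channel $\mathcal{P}^{(1)}_{A_1\overline{A}_0\overline{B}_0\to B_1\overline{A}_1\overline{B}_1}$ first applies $\Lambda$ to $\overline{A}_0\overline{B}_0$ to generate $\omega_{A'B'}^{\otimes n}$, then applies $\mathcal{P}_{A_1A'_1B'_1\to B_1}$ to $A_1$ together with the first copy of $\omega_{A'B'}$, outputting $\mathcal{N}_{A\to B}(\rho_{A_1})$ in $B_1$ and storing the remaining $n-1$ copies of $\omega_{A'B'}$ in the memory systems $\overline{A}_1\overline{B}_1$. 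For $2\le i\le n$, the channel $\mathcal{P}^{(i)}_{A_i\overline{A}_{i-1}\overline{B}_{i-1}\to B_i\overline{A}_i\overline{B}_i}$ applies $\mathcal{P}_{A_iA'_iB'_i\to B_i}$ to $A_i$ and the first of the $n-i+1$ remaining copies of $\omega_{A'B'}$ carried by $\overline{A}_{i-1}\overline{B}_{i-1}$, outputs $\mathcal{N}_{A\to B}(\rho_{A_i})$ in $B_i$, and passes the remaining $n-i$ copies forward (trivial memory for $i=n$). Each $\mathcal{P}^{(i)}$ is a serial composition of completely-PPT-preserving channels tensored with identity on the memory, hence completely-PPT-preserving. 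By the PPT-simulation identity \eqref{eq:PPT-simulable-ch} and the post-selected teleportation identity \eqref{eq:PS-tele}, the serial composition \eqref{eq:serial-comp-protocol} of the $\mathcal{P}^{(i)}$ satisfies $\mathcal{P}_{A^n\overline{A}_0\overline{B}_0\to B^n}(\Gamma_{R^nA^n}\otimes\Phi^M_{\overline{A}_0\overline{B}_0}) = \mathcal{N}_{A\to B}^{\otimes n}(\Gamma_{R^nA^n})$, so this is a valid $(n,M)$ exact sequential simulation code, giving $E_{\operatorname{PPT}}(\mathcal{N}_{A\to B},n)\le \log M = E^{(1)}_{\operatorname{PPT}}(\omega_{A'B'}^{\otimes n})$.

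Dividing by $n$, taking the limit inferior, and recalling the definitions then yields
\[
E_{\operatorname{PPT}}(\mathcal{N}_{A\to B}) = \liminf_{n\to\infty}\frac1n E_{\operatorname{PPT}}(\mathcal{N}_{A\to B},n) \le \liminf_{n\to\infty}\frac1n E^{(1)}_{\operatorname{PPT}}(\omega_{A'B'}^{\otimes n}) = E_{\operatorname{PPT}}(\omega_{A'B'}),
\]
and the remaining equality $E_{\operatorname{PPT}}(\omega_{A'B'}) = E_\kappa(\omega_{A'B'})$ is precisely Theorem~\ref{th:exact cost}. The one point requiring care is the bipartite bookkeeping in the sequential code: one must verify that at every step the memory register $\overline{A}_i$ stays in Alice's laboratory and $\overline{B}_i$ in Bob's, which holds because each copy of $\omega_{A'B'}$ is split as $A'|B'$ across the cut and $\mathcal{P}_{AA'B'\to B}$ respects this cut; once that is granted, checking that \eqref{eq:serial-comp-protocol} reproduces $n$ independent uses of $\mathcal{N}_{A\to B}$ on the Choi input is routine. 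A slightly slicker alternative, along the lines of the proof of Proposition~\ref{prop:n-shot-seq-bnd}, would simulate $\mathcal{N}\otimes\operatorname{id}^{M_i}$ stepwise, but the direct construction above is cleaner here since the resource is a fixed state rather than the channel itself.
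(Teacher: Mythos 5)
Your proposal is correct and follows essentially the same route as the paper, which simply sketches the argument (simulate many copies of $\omega_{A'B'}$ via an exact PPT dilution protocol, then apply $\mathcal{P}_{AA'B'\to B}$ on demand, citing the reasoning of \cite[Corollary~1]{Wilde2018}, and invoke Theorem~\ref{th:exact cost} for the equality). Your write-up just makes the sequential bookkeeping and the limit $n\to\infty$ explicit, which is a faithful elaboration of the paper's proof rather than a different argument.
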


\begin{proof}
The proof for this inequality follows the same reasoning given in \cite[Corollary~1]{Wilde2018}. First simulate a large number of copies of the resource state $\omega_{A^{\prime
}B^{\prime}}$ and then use the PPT-preserving channel $\mathcal{P}%
_{AA^{\prime}B^{\prime}\rightarrow B}$ from \eqref{eq:PPT-simulable-ch} to simulate the channel $\mathcal{N}%
_{A\rightarrow B}$. The equality follows from Proposition~\ref{th:exact cost}.
\end{proof}

\bigskip

If a PPT-simulable channel is additionally resource-seizable, then its exact entanglement cost is given by the $\kappa$-entanglement of the underlying resource state:

\begin{theorem}
\label{thm:res-seize-simplify}
Let $\mathcal{N}_{A\rightarrow B}$ be a PPT-simulable channel with associated
resource state $\omega_{A^{\prime}B^{\prime}}$. Suppose furthermore that it is
resource-seizable, as given in Definition~\ref{def:res-seize}. Then%
\begin{align}
E_{\operatorname{PPT}}(\mathcal{N}_{A\rightarrow B}) & =E_{\operatorname{PPT}%
}^{(p)}(\mathcal{N}_{A\rightarrow B})=E_{\kappa}(\mathcal{N}_{A\rightarrow
B})\\
& =E_{\operatorname{PPT}}(\omega_{A^{\prime}B^{\prime}})=E_{\kappa}%
(\omega_{A^{\prime}B^{\prime}}).
\end{align}

\end{theorem}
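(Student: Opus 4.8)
The plan is to establish a chain of equalities and inequalities that pins down every term. The key observation is that a resource-seizable channel sits in the intersection of two bounds that meet. First I would invoke Theorem~\ref{prop:PPT-sim} to obtain the upper bound $E_{\operatorname{PPT}}(\mathcal{N}_{A\rightarrow B})\leq E_{\operatorname{PPT}}(\omega_{A^{\prime}B^{\prime}})=E_{\kappa}(\omega_{A^{\prime}B^{\prime}})$, which already does not depend on resource-seizability. Next I would use Theorem~\ref{thm:seq-ch-sim-kappa-ent} (together with Theorem~\ref{thm:parallel-cost-asymp}) to write $E_{\operatorname{PPT}}(\mathcal{N}_{A\rightarrow B})=E^{(p)}_{\operatorname{PPT}}(\mathcal{N}_{A\rightarrow B})=E_{\kappa}(\mathcal{N}_{A\rightarrow B})$, so that four of the five quantities in the statement are already identified with one another, and it remains to show $E_{\kappa}(\mathcal{N}_{A\rightarrow B})\geq E_{\kappa}(\omega_{A^{\prime}B^{\prime}})$, i.e.\ the reverse of the bound from Theorem~\ref{prop:PPT-sim}.

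For that reverse inequality I would exploit the resource-seizability hypothesis: there is a PPT state $\tau_{A_{M}AB_{M}}$ and a completely-PPT-preserving post-processing channel $\mathcal{D}_{A_{M}BB_{M}\rightarrow A^{\prime}B^{\prime}}$ with $\mathcal{D}_{A_{M}BB_{M}\rightarrow A^{\prime}B^{\prime}}(\mathcal{N}_{A\rightarrow B}(\tau_{A_{M}AB_{M}}))=\omega_{A^{\prime}B^{\prime}}$. The idea is to feed $\tau_{A_{M}AB_{M}}$ into the channel and post-process: by the amortization collapse of Proposition~\ref{prop:amort-does-not-increase-E-kappa} (equivalently the amortization inequality of Proposition~\ref{prop:amort-ineq-E-kappa}), we have $E_{\kappa}(\mathcal{N}_{A\rightarrow B}(\tau_{A_{M}AB_{M}}))-E_{\kappa}(\tau_{A_{M}AB_{M}})\leq E_{\kappa}(\mathcal{N}_{A\rightarrow B})$. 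Since $\tau_{A_{M}AB_{M}}$ is a PPT state across the cut $A_{M}A \mid B_{M}$, faithfulness (Proposition~\ref{prop:faithfulness}) gives $E_{\kappa}(\tau_{A_{M}AB_{M}})=0$, so $E_{\kappa}(\mathcal{N}_{A\rightarrow B}(\tau_{A_{M}AB_{M}}))\leq E_{\kappa}(\mathcal{N}_{A\rightarrow B})$. Then monotonicity under completely-PPT-preserving channels (Theorem~\ref{prop:ent-monotone}), applied to $\mathcal{D}_{A_{M}BB_{M}\rightarrow A^{\prime}B^{\prime}}$, yields $E_{\kappa}(\omega_{A^{\prime}B^{\prime}})=E_{\kappa}(\mathcal{D}(\mathcal{N}_{A\rightarrow B}(\tau_{A_{M}AB_{M}})))\leq E_{\kappa}(\mathcal{N}_{A\rightarrow B}(\tau_{A_{M}AB_{M}}))\leq E_{\kappa}(\mathcal{N}_{A\rightarrow B})$. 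Combining this with the upper bound from the first paragraph closes the loop and forces all five quantities to coincide.

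The main obstacle I anticipate is purely bookkeeping rather than conceptual: one must be careful that the bipartite cut with respect to which $\tau_{A_{M}AB_{M}}$ is PPT and with respect to which the measures $E_{\kappa}$ are evaluated is consistently the cut separating Alice's systems ($A_{M},A$, and then $A^{\prime}$) from Bob's systems ($B_{M}$, and then $B^{\prime}$), matching the conventions of Definition~\ref{def:res-seize} and of the amortized quantity (with $A'\mapsto A_M$, $B'\mapsto B_M$ in Proposition~\ref{prop:amort-ineq-E-kappa}). Once the systems are lined up correctly, each step is a direct citation of an already-established result, so the proof is short; I would simply write out the displayed chain of inequalities and name the justification for each line.
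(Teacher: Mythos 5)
Your proposal is correct and follows essentially the same route as the paper's proof: the crucial reverse inequality $E_{\kappa}(\omega_{A'B'})\leq E_{\kappa}(\mathcal{N}_{A\to B})$ is obtained exactly as in the paper, via monotonicity under the completely-PPT-preserving channel $\mathcal{D}$ (Theorem~\ref{prop:ent-monotone}), faithfulness giving $E_{\kappa}(\tau_{A_M A B_M})=0$ for the PPT state $\tau$, and the amortization inequality of Proposition~\ref{prop:amort-ineq-E-kappa}. The only cosmetic difference is that the paper proves the forward inequality $E_{\kappa}(\mathcal{N}_{A\to B})\leq E_{\kappa}(\omega_{A'B'})$ directly from PPT-simulability (applying monotonicity to $\mathcal{P}_{AA'B'\to B}(\rho_{RA}\otimes\omega_{A'B'})$ and optimizing over inputs via Proposition~\ref{eq:state-opt-for-E_kappa-ch}), whereas you deduce it by combining Theorem~\ref{prop:PPT-sim} with Theorems~\ref{thm:parallel-cost-asymp} and \ref{thm:seq-ch-sim-kappa-ent}; this is equally valid and non-circular, since those results are established independently of the present theorem.
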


\begin{proof}
The following inequality
\begin{equation}
E_{\operatorname{PPT}}(\mathcal{N}_{A\rightarrow B})\leq E_{\operatorname{PPT}%
}(\omega_{A^{\prime}B^{\prime}})=E_{\kappa}(\omega_{A^{\prime}B^{\prime}}).
\end{equation}
is a consequence of Theorem~\ref{prop:PPT-sim}. To establish the opposite
inequality, consider that we always have that%
\begin{equation}
E_{\operatorname{PPT}}(\mathcal{N}_{A\rightarrow B})\geq E_{\operatorname{PPT}%
}^{(p)}(\mathcal{N}_{A\rightarrow B}),
\end{equation}
where $E_{\operatorname{PPT}}^{(p)}$ denotes the exact parallel simulation
entanglement cost. From Theorem~\ref{thm:parallel-cost-asymp}, we have that%
\begin{equation}
E_{\operatorname{PPT}}^{(p)}(\mathcal{N}_{A\rightarrow B})=E_{\kappa
}(\mathcal{N}_{A\rightarrow B}).
\end{equation}
So it suffices to prove that%
\begin{equation}
E_{\kappa}(\mathcal{N}_{A\rightarrow B})=E_{\kappa}(\omega_{A^{\prime
}B^{\prime}}).
\end{equation}

Letting $\rho_{RA}$ be an arbitrary input state, we have that%
\begin{align}
E_{\kappa}(\mathcal{N}_{A\rightarrow B}(\rho_{RA})) &  =E_{\kappa}%
(\mathcal{P}_{AA^{\prime}B^{\prime}\rightarrow B}(\rho_{RA}\otimes
\omega_{A^{\prime}B^{\prime}}))\\
&  \leq E_{\kappa}(\rho_{RA}\otimes\omega_{A^{\prime}B^{\prime}})\\
&  =E_{\kappa}(\omega_{A^{\prime}B^{\prime}}),
\end{align}
where the inequality follows from the monotonicity of $E_{\kappa}$ under
PPT-preserving channels and the final equality follows because the bipartite
cut is taken as $RAA^{\prime}|B^{\prime}$. Since this holds for an arbitrary
input state $\rho_{RA}$, we conclude that%
\begin{equation}
E_{\kappa}(\omega_{A^{\prime}B^{\prime}})\geq E_{\kappa}(\mathcal{N}%
_{A\rightarrow B}).
\end{equation}

Now we prove the opposite inequality, by using the fact that $\mathcal{N}%
_{A\rightarrow B}$ is resource-seizable. Let $\tau_{A_{M}AB_{M}}$ be the input
PPT state from Definition~\ref{def:res-seize}.\ Consider that%
\begin{align}
E_{\kappa}(\omega_{A^{\prime}B^{\prime}})  & =E_{\kappa}(\mathcal{D}%
_{A_{M}BB_{M}\rightarrow A^{\prime}B^{\prime}}(\mathcal{N}_{A\rightarrow
B}(\tau_{A_{M}AB_{M}})))\\
& \leq E_{\kappa}(\mathcal{N}_{A\rightarrow B}(\tau_{A_{M}AB_{M}}))\\
& =E_{\kappa}(\mathcal{N}_{A\rightarrow B}(\tau_{A_{M}AB_{M}}))-E_{\kappa
}(\tau_{A_{M}AB_{M}})\\
& \leq E_{\kappa}(\mathcal{N}_{A\rightarrow B}).
\end{align}
The first inequality follows because $E_{\kappa}$ does not increase under the
action of the completely PPT-preserving channel $\mathcal{D}_{A_{M}%
BB_{M}\rightarrow A^{\prime}B^{\prime}}$ (Theorem~\ref{prop:ent-monotone}). The second
equality follows because $\tau_{A_{M}AB_{M}}$ is a PPT state, so that
$E_{\kappa}(\tau_{A_{M}AB_{M}})=0$. The final inequality is a consequence of
the amortization inequality in Proposition~\ref{prop:amort-ineq-E-kappa}.
\end{proof}

%\bigskip
%
%
%
%By definition (operational argument), the quantity $E_{\operatorname{PPT}%
%}(\mathcal{N}_{A\rightarrow B})$ is monotone under the action of a
%PPT-preserving superchannel (same argument as in Section~II-C of
%arXiv:1807.11939. That is, if a channel $\mathcal{N}_{A\rightarrow B}$ can be
%realized from a channel $\mathcal{M}_{\hat{A}\rightarrow\hat{B}}$ by
%PPT-preserving channels $\mathcal{P}_{A\rightarrow\hat{A}A_{M}B_{M}%
%}^{\text{pre}}$ and $\mathcal{P}_{A_{M}\hat{B}B_{M}}^{\text{post}}$ as%
%\begin{equation}
%\mathcal{N}_{A\rightarrow B}=\Theta^{\operatorname{PPT}}(\mathcal{M}_{\hat
%{A}\rightarrow\hat{B}})\coloneqq\mathcal{P}_{A_{M}\hat{B}B_{M}}^{\text{post}}%
%\circ\mathcal{M}_{\hat{A}\rightarrow\hat{B}}\circ\mathcal{P}_{A\rightarrow
%\hat{A}A_{M}B_{M}}^{\text{pre}},
%\end{equation}
%then we have that%
%\begin{equation}
%E_{\operatorname{PPT}}(\mathcal{N}_{A\rightarrow B})\leq E_{\operatorname{PPT}%
%}(\mathcal{M}_{\hat{A}\rightarrow\hat{B}}).
%\end{equation}

\subsection{Relationship to other quantities}

A previously known efficiently computable upper bound for quantum capacity is the partial transposition bound \cite{HW01}:
\begin{align}
Q_{\Theta}(\cN)\coloneqq\log_2 \left\|  T_{B\to B} \circ \cN_{A \to B} \right\|_{\Diamond},
\end{align} 
where $T_{B\to B}$ is the transpose map and  $\|\cdot\|_{\Diamond}$ is the completely bounded trace norm or diamond norm. Note that $\|\cdot\|_{\Diamond}$ for finite-dimensional channels is  efficiently computable via semidefinite programming \cite{Watrous2012}.

 \begin{proposition}
 \label{prop:HW-lower-e-kappa}
 	For every quantum channel $\cN_{A\to B}$, we have that
 	\begin{align}
 	Q_{\Theta} (\cN_{A\to B}) \le E_{\kappa}(\cN_{A\to B}).
 	\end{align}
 \end{proposition}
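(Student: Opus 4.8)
The plan is to exhibit the diamond-norm quantity $\|\cN_{A\to B}\circ T_{B\to B}\|_\Diamond$ as a feasible value for the dual SDP of $E_\kappa(\cN_{A\to B})$ given in \eqref{eq:a dual channel}, which by weak duality \eqref{eq:weak-duality-kappa-channel} will immediately give the claimed bound $Q_\Theta(\cN_{A\to B}) = \log\|\cN_{A\to B}\circ T_{B\to B}\|_\Diamond \le E_\kappa(\cN_{A\to B})$. The first step is to recall (from \cite{Watrous2012,WD16pra}) the known SDP characterization of the diamond norm of the Hermiticity-preserving map $\cN_{A\to B}\circ T_{B\to B}$ in terms of its Choi operator. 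The Choi operator of $\cN_{A\to B}\circ T_{B\to B}$ is $(J^{\cN}_{AB})^{T_A}$, or equivalently one can work with $(J^{\cN}_{AB})^{T_B}$ after a global transpose (which leaves trace norms and hence the diamond norm invariant); I would set up the calculation so that the partial transpose appearing matches the $T_B$ in \eqref{eq:a dual channel}. The standard SDP for the diamond norm of a Hermiticity-preserving map $\Phi$ with Choi operator $Y$ expresses $\|\Phi\|_\Diamond$ as a maximization of $\tr Y(V-W)$ subject to $V,W\ge 0$ and $V+W \le \rho_A \otimes \1_B$ for some density operator $\rho_A$ — precisely the constraint set in \eqref{eq:a dual channel} once one drops the additional partial-transpose positivity constraints $V^{T_B}_{AB},W^{T_B}_{AB}\ge 0$.

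Given this, the second and crucial step is the observation that $E^{\text{dual}}_\kappa(\cN_{A\to B})$ in \eqref{eq:a dual channel} is a \emph{constrained} version of the diamond-norm SDP: it has all the same constraints plus the extra requirement that $V_{AB}$ and $W_{AB}$ have positive partial transpose. Therefore any feasible point of the diamond-norm SDP that happens to also satisfy $V^{T_B}_{AB},W^{T_B}_{AB}\ge 0$ is feasible for $E^{\text{dual}}_\kappa$. I would argue that the optimal $(V,W)$ for $\|\cN_{A\to B}\circ T_{B\to B}\|_\Diamond$ can indeed be taken to satisfy this extra constraint: writing $(J^{\cN}_{AB})^{T_B} = P_{AB} - N_{AB}$ as its Jordan decomposition into positive and negative parts, the natural optimal choice is $V_{AB} = \Pi^P_{AB}\cdot(\text{something})$ and $W_{AB} = \Pi^N_{AB}\cdot(\text{something})$ scaled appropriately — but more directly, one uses the explicit form of the diamond-norm optimizer. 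Concretely, after undoing the transpose, the diamond norm of $\cN\circ T$ equals $\|\rho_A^{1/2}(\ldots)\rho_A^{1/2}\|$-type expressions; the cleaner route is to note that the log of the diamond norm equals $\sup\{\log\tr J^{\cN}_{AB}(V-W) : V+W\le\rho_A\otimes\1_B, V,W\ge 0, \tr\rho_A=1\}$ \emph{after} replacing $J^{\cN}_{AB}$ by $(J^{\cN}_{AB})^{T_B}$ and using that the transpose is its own adjoint, exactly as done in the proof of Proposition~\ref{prop:connect-to-log-neg}. Mimicking that proof: substitute $V\mapsto V^{T_B}$, $W\mapsto W^{T_B}$ so the objective becomes $\log\tr (J^{\cN}_{AB})^{T_B}(V^{T_B}-W^{T_B})$ with constraints $V^{T_B}+W^{T_B}\le\rho_A\otimes\1_B$, $V,W\ge0$; this is visibly a relaxation of \eqref{eq:a dual channel} with the roles of the primal/transposed variables swapped, hence $\le E^{\text{dual}}_\kappa(\cN_{A\to B})$.

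The main obstacle I anticipate is bookkeeping with the partial transposes: making sure that the particular SDP form of the diamond norm one invokes (there are several equivalent ones in \cite{Watrous2012,WD16pra}) lines up, after the $T_B$ substitution, so that the diamond-norm optimization becomes literally a \emph{relaxation} of \eqref{eq:a dual channel} (i.e. fewer constraints, so larger supremum on the diamond side) — wait, that direction would be backwards. The correct structure, which I would need to verify carefully, is the reverse: $E^{\text{dual}}_\kappa$ has the extra PPT constraints on $V,W$, so $E^{\text{dual}}_\kappa \le \log\|\cN\circ T\|_\Diamond$ would be the naive inequality, which is the \emph{wrong} direction. The resolution — and the genuinely delicate point — is that the diamond-norm SDP as given by \cite{WD16pra} for $\cN\circ T_B$ \emph{automatically} has PPT-positive optimizers because the Choi operator being optimized against is itself a partial transpose: one shows that if $(V,W)$ is diamond-norm-optimal then $(V^{T_B}, W^{T_B})\ge 0$ can be arranged, so the diamond value is attained within the smaller feasible set of $E^{\text{dual}}_\kappa$, giving $\log\|\cN\circ T\|_\Diamond \le E^{\text{dual}}_\kappa(\cN) \le E_\kappa(\cN)$. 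I would carry this out by taking the explicit optimizer $V = (\1_A\otimes|\psi\rangle\langle\psi|_B)$-type projector constructions standard in the diamond-norm literature and checking the PPT condition holds for $\cN\circ T$ specifically; if a clean direct argument resists, the fallback is to cite the max-Rains relative entropy machinery of \cite{WD16pra,WFD17}, where exactly this comparison between the Holevo--Werner/partial-transposition bound and Rains-type SDP quantities is made, and invoke $Q_\Theta(\cN)\le R_{\max}(\cN)\le E_\kappa(\cN)$ using the last remark of Section~\ref{sec:examples-states} that $E_\kappa$ dominates all completely-PPT-monotone, MES-normalized measures including the max-Rains information.
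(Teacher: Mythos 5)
Your overall strategy---exhibiting $\log\Vert\cN_{A\to B}\circ T_{B\to B}\Vert_{\Diamond}$ as a value attained within the feasible set of the dual program \eqref{eq:a dual channel} and then invoking weak duality \eqref{eq:weak-duality-kappa-channel}---can be made to work, but the step you leave open is exactly where the content lies, and the resolution you sketch is not correct as stated. After the change of variables $V\mapsto V^{T_B}$, $W\mapsto W^{T_B}$ (using that the transpose is self-adjoint), the program \eqref{eq:a dual channel} becomes
\begin{equation*}
\sup\left\{ \log\tr\, (J^{\cN}_{AB})^{T_B}(V'_{AB}-W'_{AB}) : V'_{AB},\,W'_{AB}\ge 0,\ (V'_{AB}+W'_{AB})^{T_B}\le \rho_A\ox\1_B,\ \rho_A\ge0,\ \tr\rho_A=1\right\},
\end{equation*}
whereas the Hermiticity-preserving diamond-norm SDP for $T_B\circ\cN$ has the same objective and positivity constraints but with $V'_{AB}+W'_{AB}\le\rho_A\ox\1_B$ in place of $(V'_{AB}+W'_{AB})^{T_B}\le\rho_A\ox\1_B$. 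Since a partial transpose does not preserve operator inequalities, neither program is a relaxation of the other (so your earlier claim that the diamond SDP is \eqref{eq:a dual channel} with the PPT constraints dropped is inaccurate), and the condition you propose to verify---that a diamond-norm optimizer can be arranged with $V^{T_B},W^{T_B}\ge0$---is neither the condition actually needed nor true in general: the natural optimizers are built from spectral projections of $(\sqrt{\rho_A}\ox\1_B)(J^{\cN}_{AB})^{T_B}(\sqrt{\rho_A}\ox\1_B)$, and partial transposes of projectors need not be positive. What closes the gap is the same device as in the paper's Proposition~\ref{prop:connect-to-log-neg}: for any state $\rho_A$, let $X=(\sqrt{\rho_A}\ox\1_B)(J^{\cN}_{AB})^{T_B}(\sqrt{\rho_A}\ox\1_B)$ with Jordan projections $\Pi^{+},\Pi^{-}$, absorb the kernel projection into $\Pi^{+}$ so that $\Pi^{+}+\Pi^{-}=\1$, and set $V'=(\sqrt{\rho_A}\ox\1_B)\Pi^{+}(\sqrt{\rho_A}\ox\1_B)$ and $W'=(\sqrt{\rho_A}\ox\1_B)\Pi^{-}(\sqrt{\rho_A}\ox\1_B)$. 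Then $V'+W'=\rho_A\ox\1_B$ is invariant under $T_B$, so the constraint holds with equality, while the objective equals $\left\Vert X\right\Vert_1$; taking the supremum over $\rho_A$ yields $\log\Vert T_B\circ\cN\Vert_{\Diamond}\le E^{\text{dual}}_{\kappa}(\cN_{A\to B})\le E_{\kappa}(\cN_{A\to B})$. Without this (or an equivalent) verification, your argument has a genuine gap at its central step.

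For comparison, the paper's proof avoids the channel dual SDP entirely and is much shorter: it combines Proposition~\ref{eq:state-opt-for-E_kappa-ch}, which gives $E_{\kappa}(\cN_{A\to B})=\sup_{\phi_{RA}}E_{\kappa}(\cN_{A\to B}(\phi_{RA}))$, with the state-level inequality $E_{\kappa}\ge E_{N}$ of Proposition~\ref{prop:connect-to-log-neg}, and then identifies $\sup_{\phi_{RA}}\log\Vert[\cN_{A\to B}(\phi_{RA})]^{T_B}\Vert_1$ with $\log\Vert\cN_{A\to B}\circ T_{B\to B}\Vert_{\Diamond}$ directly from the definition of the diamond norm. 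Your fallback route via $Q_{\Theta}(\cN)\le R_{\max}(\cN)\le E_{\kappa}(\cN)$ is also not self-contained: the remark at the end of Section~\ref{sec:examples-states} concerns entanglement measures of \emph{states}, and neither of the two channel-level inequalities you would need is established in this paper.
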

 \begin{proof}
 	Given an arbitrary quantum channel $\cN_{A\to B}$, it holds that
 	\begin{align}
 	E_\kappa(\cN_{A\to B}) &= \sup_{\phi_{RA}}E_\kappa(\cN_{A\to B}(\phi_{RA})) \label{eq:channel K sup def}\\
 	&\ge \sup_{\phi_{RA}}E_N(\cN_{A\to B}(\phi_{RA})) \label{eq:channel K larger}\\
 	&= \sup_{\phi_{RA}}\log_2\|\cN_{A\to B}(\phi_{RA})^{T_B}\|_1\\
 	&=\log_2 \left\| T_{B\to B} \circ \cN_{A \to B}\right\|_{\Diamond}.
 	\end{align}
 	The equality in \eqref{eq:channel K sup def} follows from Proposition~\ref{eq:state-opt-for-E_kappa-ch}.
 	The inequality in \eqref{eq:channel K larger} follows from the property of $E_\kappa$ in Eq.~\eqref{eq:kappa-equal-log-neg}. The last equality follows due to the definition of the completely bounded trace norm.
 \end{proof}

\begin{remark}
	For qubit-input qubit-output channels, we have that
	\begin{equation}
		E_{\kappa}(\cN_{A\to B}) = Q_{\Theta} (\cN_{A\to B}).
	\end{equation}
	This follows because it suffices to optimize $E_{\kappa}(\cN_{A\to B})$ with respect to two-qubit input states $\phi_{RA}$, and then the output state consists of two qubits, so that the result of \cite{Ishizaka2004a} applies. That is, for this case,
	\begin{align}
	E_{\kappa}(\cN_{A\to B}) & = \sup_{\phi_{RA}}E_{\kappa}(\cN_{A\to B}(\phi_{RA})) \\
	& = 
	\sup_{\phi_{RA}}E_N(\cN_{A\to B}(\phi_{RA})) \\
	& = Q_{\Theta} (\cN_{A\to B}).
	\end{align}
\end{remark}

\section{Exact entanglement cost of fundamental channels} 

\label{sec:examples of channels}

Theorem~\ref{thm:res-seize-simplify} provides a formula for the exact PPT-entanglement cost of an arbitrary resource-seizable, PPT-simulable channel, given in terms of the entanglement cost of the underlying resource state
$\omega_{A^{\prime}B^{\prime}}$. We detail some simple examples here for which this simplified formula applies. We also consider amplitude damping channels, for which it is necessary to invoke Theorems~\ref{thm:parallel-cost-asymp} and \ref{thm:seq-ch-sim-kappa-ent} in order to determine their exact entanglement costs.

Let us begin by recalling the notion of a covariant channel $\mathcal{N}%
_{A\rightarrow B}$ \cite{Hol02}. For a group $G$ with unitary channel
representations $\{\mathcal{U}_{A}^{g}\}_{g \in G}$ and $\{\mathcal{V}_{B}^{g}%
\}_{g  \in G}$ acting on the input system $A$ and output system $B$ of the channel
$\mathcal{N}_{A\rightarrow B}$, the channel $\mathcal{N}_{A\rightarrow B}$ is
covariant with respect to the group$~G$ if the following equality holds for all $g \in G$:
\begin{equation}
\mathcal{N}_{A\rightarrow B}\circ\mathcal{U}_{A}^{g}=\mathcal{V}_{B}^{g}%
\circ\mathcal{N}_{A\rightarrow B}.
\label{eq:cov-to-help}
\end{equation}
If the averaging channel is such that $\frac{1}{\left\vert G\right\vert }%
\sum_{g}\mathcal{U}_{A}^{g}(X)=\operatorname{Tr}[X]I/\left\vert A\right\vert $, then we simply say that the channel
$\mathcal{N}_{A\rightarrow B}$ is covariant.

Then from \cite[Section~7]{CDP09}, we
conclude that a covariant channel is PPT-simulable with associated
resource state given by the Choi state of the channel, i.e., $\omega
_{A^{\prime}B^{\prime}}=\mathcal{N}_{A\rightarrow B}(\Phi_{A^{\prime}A})$. As
such, covariant channels are resource-seizable, so that the equality in
Theorem~\ref{thm:res-seize-simplify} applies to all covariant channels.
Thus, the exact entanglement cost of a covariant channel is equal to
the exact entanglement cost of its Choi state.

\subsection{Erasure channel}

The quantum erasure channel is denoted by
\begin{align}
	\cE_p(\rho) = (1-p)\rho + p \ket{e}\!\bra{e},
\end{align}
where $\rho$ is a $d$-dimensional input state,  $p\in\left[  0,1\right]  $ is
the erasure probability, and $|e\rangle\!\langle e|$ is a pure erasure state
orthogonal to every input state, so that the output state has $d+1$ dimensions. This channel is covariant.

The Choi matrix of $\cE_p$ is given by
\begin{align}
J_{\cE_p}= (1-p)\sum_{i,j =0}^{d-1} \ket{ii}\!\bra{jj} + p\sum_{i=0}^{d-1}\ket{i}\!\bra{i}\ox \proj e.
\end{align}
By direct calculation, we find that
\begin{align}
	\EPPT(\cE_p)& =\EPPT(J_{\cE_p}/d)\\
	& =E_N(J_{\cE_p}/d)\\
	& =\log_2 (d[1-p]+p).
\end{align}

\subsection{Depolarizing channel}

Consider the qudit depolarizing channel:
\begin{align}
\cN_{D,p}(\rho) = (1-p) \rho + \frac{p}{d^2-1} \sum_{\substack{0\leq i,j\leq d-1 \\ (i,j)\neq (0,0)}} X^i Z^j \rho (X^i Z^j)^\dagger,
\end{align} 
where $p\in[0,1]$ and $X,Z$ are the generalized Pauli operators. This channel is covariant.

The Choi matrix of $\cN_{D,p}$ is 
\begin{align}
	 J_{\cN_{D,p}} = d\left[(1-p)\Phi_{AB} +  \frac{p}{d^2-1}(\1_{AB}-\Phi_{AB})\right],
\end{align}
 where $\Phi = \frac{1}{d} \sum_{i,j=0}^{d-1} \ket{ii}\!\bra{jj}$. Observe that the state $\frac{J_{\cN_{D,p}}}{d}$ is an isotropic state. Applying the previous result from \eqref{eq:e-ppt-isotropic}, we conclude that
 \begin{align}
 \EPPT(\cN_{D,p})=
 &  \begin{cases}
 \log_2 d(1-p) & \text{ if } 1-p\ge \frac{1}{d}    \\
 0 & \text{ if } 1-p< \frac{1}{d}  
 \end{cases}
 \end{align}
 
\subsection{Dephasing channel}\label{sec: dephasing}

The qubit dephasing channel is given as
\begin{equation}
\mathcal{D}_q(\rho)=(1-q)\rho+qZ\rho Z.
\end{equation}
 Note that this channel is covariant with respect to the
Heisenberg--Weyl group of unitaries.
The Choi matrix of $\mathcal{D}_q$ is as follows:
\begin{align}
J_{\mathcal{D}_q}=2[(1-q) \psi_1+ q\psi_2],
\end{align}
where 
\begin{align}
\ket{\psi_1}=\frac{1}{\sqrt 2}(\ket{00}+\ket{11}),\quad
\ket{\psi_2}=\frac{1}{\sqrt 2}(\ket{00}-\ket{11}).
\end{align}
By direct calculation, we find that
\begin{align}
\EPPT(\mathcal{D}_q)& =\EPPT(J_{\mathcal{D}_q}/2)\\
& =E_N(J_{\mathcal{D}_q}/2)\\
& =\log_2 (1+2|q-1/2|).
\end{align}
We note that this approach also works for a $d$-dimensional dephasing channel. 

\subsection{Amplitude damping channel}

\label{sec:amp-damp-ch}

An amplitude damping channel corresponds to the process of asymmetric relaxation in a quantum system, which is a key noise process in quantum information science.
The qubit amplitude damping channel is given as $\cN_{AD,r}=\sum_{i=0}^1 E_i\cdot E_i^\dag$ 
with 
\begin{align}
E_0=\ketbra{0}{0}+\sqrt{1-r}\ketbra{1}{1}, \quad
E_1=\sqrt{r}\ketbra{0}{1},
\end{align} 
and where $r\in[0,1]$ is the damping parameter.
This channel is covariant with respect to $\{I,Z\}$, but not with respect to a one-design. So Theorem~\ref{thm:res-seize-simplify} does not apply, and we instead need to evaluate the exact entanglement cost of this channel by applying Theorems~\ref{thm:parallel-cost-asymp} and \ref{thm:seq-ch-sim-kappa-ent}.

We plot $\EPPT(\cN_{AD,r})$ in Figure~\ref{fig: AD}  and compare it with the max-Rains information of \cite{WD16,WFD17}. The fact that there is a gap between these two quantities demonstrates that the resource theory of entanglement (exact PPT case) is irreversible, given that the max-Rains information is an upper bound on the exact distillable entanglement of an arbitrary channel \cite{BW17}. 
\begin{figure}
\center
\includegraphics[width=\linewidth]{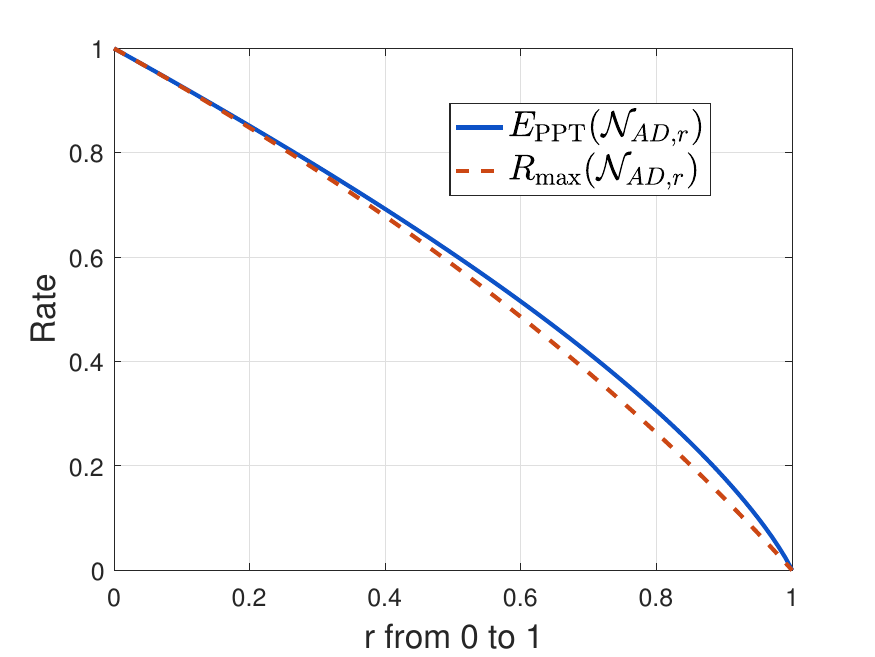}
\caption{This plot demonstrates the difference between  $\EPPT(\cN_{AD,r})$ and $R_{\max}(\cN_{AD,r})$, where $\cN_{AD,r}$ is the amplitude damping channel in Section~\ref{sec:amp-damp-ch}. The solid line depicts  $\EPPT(\cN_{AD,r})$ while the dashed line depicts $R_{\max}(\cN_{AD,r})$. The parameter $r$ ranges from $0$ to $1$, and the units of the rate (vertical axis) are ebits per channels use.}
\label{fig: AD} 
\end{figure}

\section{Exact entanglement cost of quantum Gaussian channels}

\label{sec:gaussian-channels}

In this subsection, we determine formulas for the exact entanglement cost of particular quantum Gaussian channels, which include all single-mode bosonic Gaussian channels with the exception of the pure-loss and pure-amplifier channels. In this sense, the results found here are complementary to those found recently in \cite[Theorem~2]{Wilde2018}.
The presentation and background given in this section largely follows that given recently in \cite{Wilde2018}.

\subsection{Preliminary observations about the exact entanglement cost of single-mode bosonic Gaussian channels}

The starting point for our analysis of single-mode bosonic Gaussian channels
is the Holevo classification from \cite{Holevo2007}, in which canonical forms
for all single-mode bosonic Gaussian channels have been given, classifying
them up to local Gaussian unitaries acting on the input and output of the
channel. It then suffices for us to focus our attention on the canonical
forms, as it is self-evident from definitions that local unitaries do not
alter the exact entanglement cost of a quantum channel. The thermal and amplifier
channels form the class C discussed in \cite{Holevo2007}, and the
additive-noise channels form the class B$_{2}$ discussed in the same work. The
classes that remain are labeled A, B$_{1}$, and D in \cite{Holevo2007}. The
channels in A and D\ are entanglement-breaking \cite{Holevo2008}, and are thus entanglement-binding, and as a
consequence of 
Proposition~\ref{prop:faithful-channels} and Theorems~\ref{thm:parallel-cost-asymp} and \ref{thm:seq-ch-sim-kappa-ent}, they have zero exact entanglement cost. Channels in the
class B$_{1}$ are perhaps not interesting for practical applications, and as
it turns out, they have infinite quantum capacity \cite{Holevo2007}. Thus,
their exact entanglement cost is also infinite, because a channel's quantum capacity
is a lower bound on its distillable entanglement, which is in turn a lower
bound on its partial transposition bound. The partial transposition bound is  finally a lower bound on its $\kappa$-entanglement, as shown in Proposition~\ref{prop:HW-lower-e-kappa}. For the same reason, the
exact entanglement cost of the bosonic identity channel is also infinite.

\subsection{Thermal, amplifier, and additive-noise bosonic Gaussian channels}

In light of the previous discussion, for the remainder of this section, let us focus
our attention on the thermal, amplifier, and additive-noise channels. Each of
these are defined respectively by the following Heisenberg input-output
relations:%
\begin{align}
\hat{b}  &  =\sqrt{\eta}\hat{a}+\sqrt{1-\eta}\hat{e}%
,\label{eq:thermal-channel}\\
\hat{b}  &  =\sqrt{G}\hat{a}+\sqrt{G-1}\hat{e}^{\dag}%
,\label{eq:amplifier-channel}\\
\hat{b}  &  =\hat{a}+\left(  x+ip\right)  /\sqrt{2},
\label{eq:additive-noise-channel}%
\end{align}
where $\hat{a}$, $\hat{b}$, and $\hat{e}$ are the field-mode annihilation
operators for the sender's input, the receiver's output, and the environment's
input of these channels, respectively.

The channel in \eqref{eq:thermal-channel} is a thermalizing channel, in which
the environmental mode is prepared in a thermal state $\theta(N_{B})$\ of mean
photon number $N_{B}\geq0$, defined as%
\begin{equation}
\theta(N_{B})\coloneqq \frac{1}{N_{B}+1}\sum_{n=0}^{\infty}\left(  \frac{N_{B}%
}{N_{B}+1}\right)  ^{n}|n\rangle\!\langle n|,
\end{equation}
where $\left\{  |n\rangle\right\}  _{n=0}^{\infty}$ is the orthonormal,
photonic number-state basis. When $N_{B}=0$, the state $\theta(N_{B})$ reduces to the
vacuum state, in which case the resulting channel in
\eqref{eq:thermal-channel} is called the pure-loss channel---it is said to be
quantum-limited in this case because the environment is injecting the minimum
amount of noise allowed by quantum mechanics. The parameter $\eta\in(0,1)$ is
the transmissivity of the channel, representing the average fraction of
photons making it from the input to the output of the channel. Let
$\mathcal{L}_{\eta,N_{B}}$ denote this channel, and we make the further
abbreviation $\mathcal{L}_{\eta}\equiv\mathcal{L}_{\eta,N_{B}=0}$ when it is
the pure-loss channel. The channel in \eqref{eq:thermal-channel} is
entanglement-breaking when $\left(  1-\eta\right)  N_{B}\geq\eta$
\cite{Holevo2008}, and is thus entanglement-binding in this case, and as a
consequence of 
Proposition~\ref{prop:faithful-channels} and Theorems~\ref{thm:parallel-cost-asymp} and \ref{thm:seq-ch-sim-kappa-ent}, it has zero exact entanglement cost for these values.

The channel in \eqref{eq:amplifier-channel} is an amplifier channel, and the
parameter $G>1$ is its gain. For this channel, the environment is prepared in
the thermal state $\theta(N_{B})$. If $N_{B}=0$, the amplifier channel is
called the pure-amplifier channel---it is said to be quantum-limited for a
similar reason as stated above. Let $\mathcal{A}_{G,N_{B}}$ denote this
channel, and we make the further abbreviation $\mathcal{A}_{G}\equiv
\mathcal{A}_{G,N_{B}=0}$ when it is the quantum-limited amplifier channel. The
channel in \eqref{eq:amplifier-channel} is entanglement-breaking when $\left(
G-1\right)  N_{B}\geq1$ \cite{Holevo2008}, and is thus entanglement-binding, and as a
consequence of 
Proposition~\ref{prop:faithful-channels} and Theorems~\ref{thm:parallel-cost-asymp} and \ref{thm:seq-ch-sim-kappa-ent}, it has zero exact entanglement cost for these values.

Finally, the channel in \eqref{eq:additive-noise-channel} is an additive-noise
channel, representing a quantum generalization of the classical additive white
Gaussian noise channel. In \eqref{eq:additive-noise-channel}, $x$ and $p$ are
zero-mean, independent Gaussian random variables each having variance $\xi\geq0$. Let
$\mathcal{T}_{\xi}$ denote this channel. The channel in
\eqref{eq:additive-noise-channel} is entanglement-breaking when $\xi\geq1$
\cite{Holevo2008}, and is thus entanglement-binding, and as a
consequence of 
Proposition~\ref{prop:faithful-channels} and Theorems~\ref{thm:parallel-cost-asymp} and \ref{thm:seq-ch-sim-kappa-ent}, it has zero exact entanglement cost for these values.

Kraus representations for the channels in
\eqref{eq:thermal-channel}--\eqref{eq:additive-noise-channel}\ are available
in \cite{ISS11}, which can be helpful for further understanding their action
on input quantum states.

Due to the entanglement-breaking regions discussed above, we are left with a
limited range of single-mode bosonic Gaussian channels to consider, which is
delineated by the white strip in Figure~1 of \cite{GPCH13}.

\subsection{Exact entanglement cost of thermal, amplifier, and additive-noise bosonic Gaussian channels}

We can now state our main result for this section, which applies to all thermal, amplifier, and additive-noise channels that are neither entanglement-breaking nor quantum-limited:

\begin{theorem}
\label{thm:bosonic-formulas} For a thermal  channel $\mathcal{L}_{\eta,N_B}$
with transmissivity $\eta\in(0,1)$ and thermal photon number $N_B\in(0,\eta/[1-\eta])$,  an amplifier channel $\mathcal{A}%
_{G,N_B}$ with gain $G>1$ and thermal photon number $N_B\in(0,1/[G-1])$, and an additive-noise channel $\mathcal{T}_{\xi}$ with noise variance $\xi\in(0,1]$, the following formulas characterize the exact entanglement
costs of these channels:%
\begin{align}
\EPPT(\mathcal{L}_{\eta,N_B})  &  = 
E_{\PPT}^{(p)}(\mathcal{L}_{\eta,N_B})
 \notag \\
 & =
\log_2\!\left(\frac{1+\eta}{(1-\eta)(2N_B+1)}\right) 
,\label{eq:loss-formula}\\
\EPPT(\mathcal{A}_{G,N_B})  &  =
E_{\PPT}^{(p)}(\mathcal{A}_{G,N_B}) \notag \\
& =
\log_2\!\left(\frac{G+1}{(G-1)(2N_B+1)}\right) 
, \label{eq:amp-formula}\\
\EPPT(\mathcal{T}_{\xi})  &  =
E_{\PPT}^{(p)}(\mathcal{T}_{\xi}) =
\log_2(1/\xi).
\end{align}
\end{theorem}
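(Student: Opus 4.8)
## Proof Proposal for Theorem (Exact entanglement cost of bosonic Gaussian channels)

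The plan is to reduce the computation of the exact entanglement cost of each of these channels to the evaluation of $E_\kappa$ of its Choi state, and then to evaluate that using the binegativity property of Gaussian states together with the known logarithmic negativity formula. The key structural input is that thermal, amplifier, and additive-noise channels are all \emph{covariant} with respect to a suitable group (the Heisenberg--Weyl/displacement group), whose averaging channel sends any input to the maximally mixed state (in the appropriate limiting sense). As recalled just before the erasure-channel subsection, any covariant channel is PPT-simulable and resource-seizable with resource state equal to its Choi state, so Theorem~\ref{thm:res-seize-simplify} applies and yields
\begin{equation}
\EPPT(\mathcal{N}) = E_{\PPT}^{(p)}(\mathcal{N}) = E_\kappa(\mathcal{N}) = \EPPT(\omega) = E_\kappa(\omega),
\end{equation}
where $\omega$ is the Choi state of the channel. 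First I would make this covariance argument precise for each of the three families, noting that one must be a little careful since these are infinite-dimensional channels and the Choi state is defined via a suitably normalized two-mode squeezed vacuum; the reference \cite{Wilde2018} and the teleportation-simulation literature \cite{LMGA17} handle exactly this point, so I would invoke those results rather than reprove Gaussian teleportation simulation.

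Next, I would invoke the result of \cite{Audenaert2003} (quoted in the excerpt, in the subsection on particular bipartite states) that \emph{all} bosonic Gaussian states satisfy the binegativity condition $|(\rho^G_{AB})^{T_B}|^{T_B}\geq 0$. Since the Choi state of any Gaussian channel is itself a Gaussian state, Proposition~\ref{prop:connect-to-log-neg} gives $E_\kappa(\omega) = E_N(\omega)$, i.e., the $\kappa$-entanglement of the Choi state equals its logarithmic negativity. Combining with the chain of equalities above, it remains only to compute $E_N(\omega) = \log\|\omega^{T_B}\|_1$ for the Choi state of each channel, which is a purely Gaussian-covariance-matrix computation using the explicit formula for the logarithmic negativity of a two-mode Gaussian state from \cite[Eq.~(15)]{WEP03}.

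The concrete calculation goes as follows: for each channel I would write down the covariance matrix of the (appropriately defined) Choi state in terms of the channel parameters $(\eta, N_B)$, $(G, N_B)$, or $\xi$ — these are standard two-mode Gaussian states of the form obtained by sending half of a two-mode squeezed vacuum through the channel — then extract the symplectic eigenvalue of the partial transpose, and finally assemble $\log(1/\tilde\nu_-)$ (the relevant piece of the logarithmic negativity when the partial transpose has a symplectic eigenvalue below one). The regularity conditions in the theorem statement ($N_B < \eta/[1-\eta]$, $N_B < 1/[G-1]$, $\xi<1$) are precisely the complements of the entanglement-breaking regions, which guarantees the Choi state is NPT so that the negativity is strictly positive and the formula is the stated logarithm rather than zero. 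The arithmetic should collapse to $\log\!\big((1+\eta)/[(1-\eta)(2N_B+1)]\big)$, $\log\!\big((G+1)/[(G-1)(2N_B+1)]\big)$, and $\log(1/\xi)$ respectively; the additive-noise case is the cleanest, arising as a limit of the thermal/amplifier cases or by a direct displacement-covariance argument.

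The main obstacle I anticipate is not the algebra but the infinite-dimensional bookkeeping: one must make sure that Theorem~\ref{thm:res-seize-simplify}, Proposition~\ref{prop:connect-to-log-neg}, and the covariant-channel simulation statement all go through for non-finite-dimensional channels with the Choi state replaced by its proper limiting definition, and that the optimization defining $E_\kappa$ (and its equality with $E_N$ under binegativity) is valid on the relevant separable Hilbert space. I would address this by citing the infinite-dimensional versions already established in the excerpt (the separable-Hilbert-space statements of Theorems~\ref{th:exact cost}, \ref{thm:parallel-cost-asymp}, \ref{thm:seq-ch-sim-kappa-ent}, and Proposition~\ref{prop:connect-to-log-neg}) and by appealing to \cite{Wilde2018,LMGA17} for the teleportation-simulability of Gaussian channels in the limiting energy-unbounded sense, so that the only genuinely new content is the Gaussian negativity computation, which is routine.
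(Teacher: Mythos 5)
There is a genuine gap in your reduction. Your entire argument funnels through the claim that these channels are displacement-covariant, hence PPT-simulable and resource-seizable with resource state equal to their Choi state, so that Theorem~\ref{thm:res-seize-simplify} gives the full chain of equalities. But the covariance-implies-Choi-simulability argument of \cite[Section~7]{CDP09} requires the group average to be the completely depolarizing channel and the resource to be the channel acting on a maximally entangled state; for bosonic channels neither object exists (there is no normalizable maximally entangled state and no maximally mixed state), and continuous-variable teleportation with any finite-energy substitute $\phi^{N_S}_{RA}$ simulates the channel only approximately. This is not a bookkeeping issue you can wave off by citing \cite{Wilde2018,LMGA17}: it is exactly the obstruction the paper identifies as blocking the pure-loss and pure-amplifier cases, which remain conjectural there. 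Moreover, even granting the exact teleportation simulation of \cite{LMGA17} (valid precisely for the non-quantum-limited channels in the theorem, with a legitimate finite-energy Gaussian resource state $\omega_{A'B'}$ that is \emph{not} the Choi state), Theorem~\ref{thm:res-seize-simplify} also needs resource-seizability of that $\omega_{A'B'}$, which neither \cite{LMGA17} nor the paper establishes; so your single-step equality argument is unsupported in both directions.

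The paper instead proves two one-sided bounds by different means, and you would need to do the same. For the upper bound it uses only Proposition~\ref{prop:PPT-sim} together with the exact Gaussian teleportation simulation of \cite{LMGA17}, plus $E_\kappa(\omega_{A'B'})=E_N(\omega_{A'B'})$ from the binegativity of Gaussian states (Proposition~\ref{prop:connect-to-log-neg}), giving the Holevo--Werner value \cite{HW01}. For the lower bound it uses Theorems~\ref{thm:parallel-cost-asymp} and \ref{thm:seq-ch-sim-kappa-ent} together with $E_\kappa(\mathcal{N})\geq E_\kappa(\mathcal{N}(\phi^{N_S}_{RA}))\geq E_N(\mathcal{N}(\phi^{N_S}_{RA}))$ for every finite $N_S$, and the key fact proved in \cite{LMGA17} that $E_N(\omega_{A'B'})$ equals $\lim_{N_S\to\infty}E_N(\mathcal{N}(\phi^{N_S}_{RA}))$, which is what makes the two bounds meet. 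Your covariance-matrix computation of the limiting logarithmic negativity is fine and yields the stated formulas, but without the two-sided structure above (and in particular without a lower-bound argument independent of resource-seizability), the proof does not go through.
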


\begin{proof}
To arrive at the following inequalities:
\begin{align}
\EPPT(\mathcal{L}_{\eta,N_B})  &  
 \leq
\log_2\!\left(\frac{1+\eta}{(1-\eta)(2N_B+1)}\right) 
,\\
\EPPT(\mathcal{A}_{G,N_B})  &   \leq
\log_2\!\left(\frac{G+1}{(G-1)(2N_B+1)}\right) 
, \\
\EPPT(\mathcal{T}_{\xi})  &  \leq
\log_2(1/\xi),
\end{align} 
we apply Proposition~\ref{prop:PPT-sim}, along with some recent developments, to the single-mode thermal, amplifier, and
additive-noise channels that are neither entanglement-breaking nor quantum-limited. Some recent papers \cite{LMGA17,KW17,TDR18}\ have
shown how to simulate each of these channels by using a bosonic Gaussian
resource state along with variations of the continuous-variable quantum
teleportation protocol \cite{prl1998braunstein}. Of these works, the one most
relevant for us is the original one \cite{LMGA17}, because these authors proved
that the logarithmic negativity of the underlying resource state is equal
to the logarithmic negativity that results from transmitting through the channel one share of a
two-mode squeezed vacuum state with arbitrarily large squeezing strength. That
is, let $\mathcal{N}_{A\rightarrow B}$ denote a single-mode thermal,
amplifier, or additive-noise channel. Then one of the main results of
\cite{LMGA17} is that, associated to this channel, there is a bosonic Gaussian
resource state $\omega_{A^{\prime}B^{\prime}}$ and a Gaussian LOCC channel
$\mathcal{G}_{AA^{\prime}B^{\prime}\rightarrow B}$ such that
\begin{align}
E_{N}(\omega_{A^{\prime}B^{\prime}})  &  =\sup_{N_{S}\geq0}
E_{N}
(\sigma^{N_{S}}_{RB})
\label{eq:LMGA-1}\\
&  =\lim_{N_{S}\rightarrow\infty}E_{N}(\sigma^{N_{S}}_{RB}),
\label{eq:LMGA-2}%
\end{align}
where%
\begin{align}
\sigma^{N_{S}}_{RB}  &  \coloneqq \mathcal{N}_{A\rightarrow B}(\phi_{RA}^{N_{S}}),\\
\phi_{RA}^{N_{S}}  &  \coloneqq |\phi^{N_{S}}\rangle\!\langle\phi^{N_{S}}|_{RA},\\
|\phi^{N_{S}}\rangle_{RA}  &  \coloneqq \frac{1}{\sqrt{N_{S}+1}}\sum_{n=0}%
^{\infty}\sqrt{\left(  \frac{N_{S}}{N_{S}+1}\right)  ^{n}}|n\rangle
_{R}|n\rangle_{A}, \label{eq:TMSV}%
\end{align}
and for every input state $\rho_{A}$,%
\begin{equation}
\mathcal{N}_{A\rightarrow B}(\rho_{A})=\mathcal{G}_{AA^{\prime}B^{\prime
}\rightarrow B}(\rho_{A}\otimes\omega_{A^{\prime}B^{\prime}}).
\end{equation}
In the above, $\phi_{RA}^{N_{S}}$ is the two-mode squeezed vacuum state
\cite{S17}. Note that the equality in \eqref{eq:LMGA-2}\ holds because one
can always produce $\phi_{RA}^{N_{S}}$ from $\phi_{RA}^{N_{S}^{\prime}}$ such
that $N_{S}^{\prime}\geq N_{S}$, by using Gaussian LOCC\ and the local
displacements involved in the Gaussian LOCC commute with the channel
$\mathcal{N}_{A\rightarrow B}$ \cite{GECP03} (whether it be thermal,
amplifier, or additive-noise). Furthermore, the logarithmic negativity does
not increase under the action of an LOCC\ channel.

Thus, applying the above observations and
Proposition~\ref{prop:PPT-sim}, it follows that there exist bosonic
Gaussian resource states $\omega_{A^{\prime}B^{\prime}}^{\eta,N_{B}}$,
$\omega_{A^{\prime}B^{\prime}}^{G,N_{B}}$, and $\omega_{A^{\prime}B^{\prime}%
}^{\xi}$ associated to the respective thermal, amplifier, and additive-noise
channels in \eqref{eq:thermal-channel}--\eqref{eq:additive-noise-channel},
such that the following inequalities hold%
\begin{align}
\EPPT(\mathcal{L}_{\eta,N_{B}})  &  \leq E_{\kappa}(\omega^{\eta,N_{B}}_{A^{\prime}B^{\prime}}
)  = E_{N}(\omega^{\eta,N_{B}}_{A^{\prime}B^{\prime}}) \notag \\
& 
 = \log_2\!\left(\frac{1+\eta}{(1-\eta)(2N_B+1)}\right),
\label{eq:e-cost-upper-thermal}\\
\EPPT(\mathcal{A}_{G,N_{B}})  &  \leq E_{\kappa}(\omega^{G,N_{B}}_{A^{\prime}B^{\prime}}
) = E_{N}(\omega^{G,N_{B}}_{A^{\prime}B^{\prime}}
) \notag \\
& = \log_2\!\left(\frac{G+1}{(G-1)(2N_B+1)}\right) ,
\\
\EPPT(\mathcal{T}_{\xi})  &  \leq E_{\kappa}(\omega^{\xi}_{A^{\prime}B^{\prime}}) = E_{N}(\omega^{\xi}_{A^{\prime}B^{\prime}}) \notag 
\\
& = \log_2(1/\xi),
\label{eq:e-cost-upper-additive-noise}%
\end{align}
where the first equalities in each line follow because $E_{\kappa} = E_N$ for bosonic Gaussian states (see \eqref{eq:kappa-equal-log-neg} and \cite{Audenaert2003}), and the explicit formulas on the right-hand side are found in \cite{HW01,LMGA17}.

On the other hand, Theorems~\ref{thm:parallel-cost-asymp}  and \ref{thm:seq-ch-sim-kappa-ent} imply that%
\begin{align}
\EPPT(\mathcal{L}_{\eta,N_B})  & = 
E_{\PPT}^{(p)}(\mathcal{L}_{\eta,N_B}) \\& \geq\lim_{N_{S}\rightarrow\infty}E_{N}%
(\sigma^{\eta,N_B}(N_{S})_{RB}) \\
& = \log_2\!\left(\frac{1+\eta}{(1-\eta)(2N_B+1)}\right), 
\\
\EPPT(\mathcal{A}_{G,N_B})  &  =
E_{\PPT}^{(p)}(\mathcal{A}_{G,N_B}) \\
& \geq 
\lim_{N_{S}\rightarrow\infty}E_{N}(\sigma
^{G,N_B}(N_{S})_{RB})\\
&  = \log_2\!\left(\frac{G+1}{(G-1)(2N_B+1)}\right),\\
\EPPT(\mathcal{T}_{\xi})  &  =
E_{\PPT}^{(p)}(\mathcal{T}_{\xi}) \\
& \geq 
\lim_{N_{S}\rightarrow\infty}E_{N}(\sigma
^{\xi}(N_{S})_{RB}) \\
& = \log_2(1/\xi).
\end{align}
Combining the inequalities above, we conclude the statement of the theorem.
\end{proof}

\bigskip
The significance of Theorem~\ref{thm:bosonic-formulas} above is that it establishes a clear operational meaning of the Holevo--Werner quantity \cite{HW01} (partial transposition bound) for the basic bosonic channels that are not quantum limited. This quantity has been used for a variety of purposes in prior work, as an upper bound on unassisted quantum capacity \cite{HW01}, as an upper bound on LOCC-assisted quantum capacity \cite{MRW16}, as a tool in arriving at a no-go theorem for Gaussian quantum error correction \cite{NFC09}, and as a tool in the teleportation simulation of bosonic Gaussian channels \cite{LMGA17}. Finally, Theorem~\ref{thm:bosonic-formulas} solves the long-standing open problem of giving the Holevo--Werner quantity a direct operational meaning for the basic bosonic channels, in terms of exact entanglement cost of parallel and sequential channel simulation.

In light of the results stated in Theorem~\ref{thm:bosonic-formulas}, it is quite natural to conjecture that the following formulas hold for the pure-loss and pure-amplifier channels with $\eta\in(0,1)$ and $G>1$, respectively:
\begin{align}
\EPPT(\mathcal{L}_{\eta})  &  
 =
E_{\PPT}^{(p)}(\mathcal{L}_{\eta}) \overset{?}{=} \log_2\!\left(\frac{1+\eta}{1-\eta}\right) 
,\label{eq:Gauss-conj-1}\\
\EPPT(\mathcal{A}_{G})  &  =
E_{\PPT}^{(p)}(\mathcal{A}_{G}) \overset{?}{=} 
\log_2\!\left(\frac{G+1}{G-1}\right) .
\label{eq:Gauss-conj-2}
\end{align}
Theorems~\ref{thm:parallel-cost-asymp}  and \ref{thm:seq-ch-sim-kappa-ent} imply that the following inequalities hold
\begin{align}
\EPPT(\mathcal{L}_{\eta})  &  
 =
E_{\PPT}^{(p)}(\mathcal{L}_{\eta}) \geq \log_2\!\left(\frac{1+\eta}{1-\eta}\right) 
,\\
\EPPT(\mathcal{A}_{G})  &  =
E_{\PPT}^{(p)}(\mathcal{A}_{G}) \geq 
\log_2\!\left(\frac{G+1}{G-1}\right) .
\end{align}
However, what excludes us from making a rigorous statement about the opposite inequalities is the lack of a legitimate quantum state that can be used to simulate these channels exactly, as was the case for the channels considered in Theorem~\ref{thm:bosonic-formulas}. For example, it is not clear that we could simply ``plug in'' the ``EPR state'' (i.e., the limiting object $\lim_{N_S \to \infty} \phi^{N_S}_{RA})$ and use the teleportation simulation argument as before. There are several issues: the limiting object is not actually a state and any finite squeezing leads to a slight error or inexact simulation. In spite of these obstacles, we think that it is highly plausible that the equalities in \eqref{eq:Gauss-conj-1}--\eqref{eq:Gauss-conj-2} hold.
More generally, based on the results of \cite{NFC09}, we suspect that the following equality holds for an arbitrary Gaussian channel $\cN$ described by a scaling matrix $X$ and a noise matrix $Y$ \cite{S17}:
\begin{align}
\EPPT(\cN)\overset{?}{=}
Q_{\Theta}(\cN)\overset{?}{=} \frac{1}{2}\log_2 \min\left\{\frac{(1+\det X)^2}{\det Y},1\right\}.
\label{eq:conjecture-Gaussian}
\end{align}

\section{Concluding remarks}

In the zoo of entanglment measures \cite{Horodecki2009a,Christandl2006,Plenio2007}, the $\kappa$-entanglement of a bipartite state is the first entanglement measure that is efficiently computable while possessing a direct operational meaning for general bipartite states. This unique feature of $\EK$ may help us better understand the structure and power of quantum entanglement. As a generalization of this notion, the $\kappa$-entanglement of a quantum channel is also efficiently computable while possessing a direct operational meaning as the entanglement cost for exact parallel and sequential simulation of a quantum channel.

Going forward from here, the most pressing open question is to determine whether the formula in \eqref{eq:conjecture-Gaussian} holds, for the exact entanglement cost of quantum Gaussian channels. One could potentially require new methods beyond the scope of this paper in order to establish~\eqref{eq:conjecture-Gaussian}.

\begin{acknowledgements}
We are grateful to Renato Renner and Andreas Winter for insightful discussions. XW acknowledges support from the Department of Defense. Part of this work was done when XW was at the Joint Center for Quantum Information and Computer Science (QuICS) at the  University of Maryland. MMW acknolwedges support from the National Science Foundation under Award no.~1350397.
\end{acknowledgements}

 \bibliographystyle{alpha}
 \bibliography{Ref}

\appendix

\section{Equality of $E_{\kappa}$ and $E^{\text{dual}}_{\kappa}$ for states acting on separable Hilbert spaces}

\label{app:kappa-to-dual-infty}

In this appendix, we prove that%
\begin{equation}
E_{\kappa}(\rho_{AB})=E_{\kappa}^{\text{dual}}(\rho_{AB}%
),\label{eq:kappa-dual-infty}%
\end{equation}
for a state $\rho_{AB}$ acting on a separable Hilbert space. To begin with,
let us recall that the following inequality always holds from weak duality%
\begin{equation}
E_{\kappa}(\rho_{AB})\geq E_{\kappa}^{\text{dual}}(\rho_{AB}).
\end{equation}
So our goal is to prove the opposite inequality. We suppose throughout that
$E_{\kappa}^{\text{dual}}(\rho_{AB})<\infty$. Otherwise, the desired equality
in \eqref{eq:kappa-dual-infty}\ is trivially true. We also suppose that
$\rho_{AB}$ has full support. Otherwise, it is finite-dimensional and the
desired equality in \eqref{eq:kappa-dual-infty}\ is trivially true.

To this end, consider sequences $\{\Pi_{A}^{k}\}_{k}$ and $\{\Pi_{B}^{k}%
\}_{k}$ of projectors weakly converging to the identities $\1_{A}$ and $\1_{B}$
and such that $\Pi_{A}^{k}\leq\Pi_{A}^{k^{\prime}}$ and $\Pi_{B}^{k}\leq
\Pi_{B}^{k^{\prime}}$ for $k^{\prime}\geq k$. Furthermore, we suppose that
$[\Pi_{B}^{k}]^{T_{B}}=\Pi_{B}^{k}$ for all $k$. Then define%
\begin{equation}
\rho_{AB}^{k}\coloneqq\left(  \Pi_{A}^{k}\otimes\Pi_{B}^{k}\right)  \rho_{AB}\left(
\Pi_{A}^{k}\otimes\Pi_{B}^{k}\right)  .
\end{equation}
It follows that \cite{D67}%
\begin{equation}
\lim_{k\rightarrow\infty}\left\Vert \rho_{AB}-\rho_{AB}^{k}\right\Vert _{1}=0.
\end{equation}

We now prove that%
\begin{equation}
E_{\kappa}^{\text{dual}}(\rho_{AB})\geq E_{\kappa}^{\text{dual}}(\rho_{AB}%
^{k})\label{eq:dual-projection}%
\end{equation}
for all $k$. Let $A^{k}$ and $B^{k}$ denote the subspaces onto which $\Pi
_{A}^{k}$ and $\Pi_{B}^{k}$ project. Let $V_{A^{k}B^{k}}^{k}$ and
$W_{A^{k}B^{k}}^{k}$ be arbitrary operators satisfying $V_{AB}^{k}+W_{AB}%
^{k}\leq \1_{A^{k}B^{k}}=\left(  \Pi_{A}^{k}\otimes\Pi_{B}^{k}\right)  $,
$[V_{A^{k}B^{k}}^{k}]^{T_{B}},[W_{A^{k}B^{k}}^{k}]^{T_{B}}\geq0$. Set%
\begin{align}
\overline{V}_{AB}^{k}  & \coloneqq\left(  \Pi_{A}^{k}\otimes\Pi_{B}^{k}\right)
V_{A^{k}B^{k}}^{k}\left(  \Pi_{A}^{k}\otimes\Pi_{B}^{k}\right)  ,\\
\overline{W}_{AB}^{k}  & \coloneqq\left(  \Pi_{A}^{k}\otimes\Pi_{B}^{k}\right)
W_{A^{k}B^{k}}^{k}\left(  \Pi_{A}^{k}\otimes\Pi_{B}^{k}\right)  ,
\end{align}
and note that%
\begin{align}
\overline{V}_{AB}^{k}+\overline{W}_{AB}^{k}  & \leq \1_{AB},\\
\lbrack\overline{V}_{AB}^{k}]^{T_{B}},[\overline{W}_{AB}^{k}]^{T_{B}}  &
\geq0.
\end{align}
Then%
\begin{align}
& \operatorname{Tr}\rho_{AB}^{k}(V_{A^{k}B^{k}}^{k}-W_{A^{k}B^{k}}^{k})  \notag \\
&
=\operatorname{Tr}\left(  \Pi_{A}^{k}\otimes\Pi_{B}^{k}\right)  \rho
_{AB}\left(  \Pi_{A}^{k}\otimes\Pi_{B}^{k}\right)  (V_{A^{k}B^{k}}%
^{k}-W_{A^{k}B^{k}}^{k})\\
& =\operatorname{Tr}\rho_{AB}\left(  \Pi_{A}^{k}\otimes\Pi_{B}^{k}\right)
(V_{A^{k}B^{k}}^{k}-W_{A^{k}B^{k}}^{k})\left(  \Pi_{A}^{k}\otimes\Pi_{B}%
^{k}\right)  \\
& =\operatorname{Tr}\rho_{AB}(\overline{V}_{AB}^{k}-\overline{W}_{AB}^{k})\\
& \leq E_{\kappa}^{\text{dual}}(\rho_{AB}).
\end{align}
Since the inequality holds for arbitrary $V_{A^{k}B^{k}}^{k}$ and
$W_{A^{k}B^{k}}^{k}$ satisfying the conditions above, we conclude the
inequality in \eqref{eq:dual-projection}.

Thus, we conclude that%
\begin{equation}
E_{\kappa}^{\text{dual}}(\rho_{AB})\geq\limsup_{k\rightarrow\infty}E_{\kappa
}^{\text{dual}}(\rho_{AB}^{k}).\label{eq:lim-sup-lower}%
\end{equation}

Now let us suppose that $E_{\kappa}^{\text{dual}}(\rho_{AB})<\infty$. Then for
all $V_{AB}$ and $W_{AB}$ satisfying $V_{AB}+W_{AB}\leq \1_{AB}$,
$[V_{AB}]^{T_{B}},[W_{AB}]^{T_{B}}\geq0$, as well as $\operatorname{Tr}%
\rho_{AB}(V_{AB}-W_{AB})\geq0$, we have that%
\begin{equation}
\operatorname{Tr}\rho_{AB}(V_{AB}-W_{AB})<\infty.
\end{equation}
Since $\rho_{AB}$ has full support, this means that%
\begin{equation}
\left\Vert V_{AB}-W_{AB}\right\Vert _{\infty}<\infty.
\end{equation}
Considering that from H\"older's inequality%
\begin{multline}
\left\vert \operatorname{Tr}(\rho_{AB}-\rho_{AB}^{k})(V_{AB}-W_{AB}%
)\right\vert \leq\\
\left\Vert \rho_{AB}-\rho_{AB}^{k}\right\Vert _{1}\left\Vert
V_{AB}-W_{AB}\right\Vert _{\infty},
\end{multline}
and setting%
\begin{align}
V_{AB}^{k}  & \coloneqq\left(  \Pi_{A}^{k}\otimes\Pi_{B}^{k}\right)  V_{AB}\left(
\Pi_{A}^{k}\otimes\Pi_{B}^{k}\right)  ,\\
W_{AB}^{k}  & \coloneqq\left(  \Pi_{A}^{k}\otimes\Pi_{B}^{k}\right)  W_{AB}\left(
\Pi_{A}^{k}\otimes\Pi_{B}^{k}\right)  ,
\end{align}
we conclude that
\begin{align}
& \operatorname{Tr}\rho_{AB}(V_{AB}-W_{AB}) \notag  \\
& \leq\liminf_{k\rightarrow\infty
}\operatorname{Tr}\rho_{AB}^{k}(V_{AB}-W_{AB})\\
& =\liminf_{k\rightarrow\infty}\operatorname{Tr}\rho_{AB}^{k}(V_{AB}%
^{k}-W_{AB}^{k})\\
& \leq\liminf_{k\rightarrow\infty}\sup_{V^{k},W^{k}}\operatorname{Tr}\rho
_{AB}^{k}(V_{AB}^{k}-W_{AB}^{k})\\
& =\liminf_{k\rightarrow\infty}E_{\kappa}^{\text{dual}}(\rho_{AB}^{k}).
\end{align}
Since the inequality holds for arbitrary $V_{AB}$ and $W_{AB}$ satisfying the
above conditions, we conclude that%
\begin{equation}
E_{\kappa}^{\text{dual}}(\rho_{AB})\leq\liminf_{k\rightarrow\infty}E_{\kappa
}^{\text{dual}}(\rho_{AB}^{k}).\label{eq:lim-inf-upper}%
\end{equation}

Putting together \eqref{eq:lim-sup-lower} and \eqref{eq:lim-inf-upper}, we
conclude that%
\begin{equation}
E_{\kappa}^{\text{dual}}(\rho_{AB})=\lim_{k\rightarrow\infty}E_{\kappa
}^{\text{dual}}(\rho_{AB}^{k}).\label{eq:dual-limits}%
\end{equation}

From strong duality for the finite-dimensional case, we have for all $k$ that%
\begin{equation}
E_{\kappa}^{\text{dual}}(\rho_{AB}^{k})=E_{\kappa}(\rho_{AB}^{k}),
\end{equation}
and thus that%
\begin{equation}
\lim_{k\rightarrow\infty}E_{\kappa}^{\text{dual}}(\rho_{AB}^{k})=\lim
_{k\rightarrow\infty}E_{\kappa}(\rho_{AB}^{k}).\label{eq:finite-str-dual}%
\end{equation}

It thus remains to prove that%
\begin{equation}
\lim_{k\rightarrow\infty}E_{\kappa}(\rho_{AB}^{k})=E_{\kappa}(\rho_{AB}).
\end{equation}
We first prove that%
\begin{equation}
E_{\kappa}(\rho_{AB})\geq\limsup_{k\rightarrow\infty}E_{\kappa}(\rho_{AB}%
^{k}).\label{eq:lim-sup-lower_e_kappa}%
\end{equation}
Let $S_{AB}$ be an arbitrary operator satisfying
\begin{equation}
S_{AB}\geq0,\qquad-S_{AB}^{T_{B}}\leq\rho_{AB}^{T_{B}}\leq S_{AB}^{T_{B}}.
\label{eq:conditions-for-S-limits}
\end{equation}
Then, defining $S_{AB}^{k}=\left(  \Pi_{A}^{k}\otimes\Pi_{B}^{k}\right)
S_{AB}\left(  \Pi_{A}^{k}\otimes\Pi_{B}^{k}\right)  $, we have that%
\begin{equation}
S_{AB}^{k}\geq0,\qquad-[S_{AB}^{k}]^{T_{B}}\leq\lbrack\rho_{AB}^{k}]^{T_{B}%
}\leq\lbrack S_{AB}^{k}]^{T_{B}}.
\end{equation}
Then%
\begin{equation}
\log_2\operatorname{Tr}S_{AB}\geq\log_2\operatorname{Tr}S_{AB}^{k}\geq E_{\kappa
}(\rho_{AB}^{k}).
\end{equation}
Since the inequality holds for all $S_{AB}$ satisfying \eqref{eq:conditions-for-S-limits}, we conclude that%
\begin{equation}
E_{\kappa}(\rho_{AB})\geq E_{\kappa}(\rho_{AB}^{k})
\end{equation}
for all $k$, and thus \eqref{eq:lim-sup-lower_e_kappa}\ holds.

The rest of the proof follows \cite{FAR11} closely.
Since the
condition $\Pi_{A}^{k}\leq\Pi_{A}^{k^{\prime}}$ and $\Pi_{B}^{k}\leq\Pi
_{B}^{k^{\prime}}$ for $k^{\prime}\geq k$ holds, in fact the same sequence of
steps as above allows for concluding that%
\begin{equation}
E_{\kappa}(\rho_{AB}^{k^{\prime}})\geq E_{\kappa}(\rho_{AB}^{k}),
\end{equation}
meaning that the sequence is monotone non-decreasing with $k$. Thus, we can
define%
\begin{equation}
\mu\coloneqq\lim_{k\rightarrow\infty}E_{\kappa}(\rho_{AB}^{k})\in\mathbb{R}^{+},
\end{equation}
and note from the above that%
\begin{equation}
\mu\leq E_{\kappa}(\rho_{AB}).
\end{equation}
For each $k$, let $S_{AB}^{k}$ denote an optimal operator such that
$E_{\kappa}(\rho_{AB}^{k})=\log_2\operatorname{Tr}S_{AB}^{k}$. From the fact
that $S_{AB}^{k}\geq0$, and $\operatorname{Tr}S_{AB}^{k}\leq2^{\mu}$, we
conclude that $\{S_{AB}^{k}\}_{k}$ is a bounded sequence in the trace class
operators. Since the trace class operators form the dual space of the compact
operators $\mathcal{K}(\mathcal{H}_{AB})$ \cite{RS78}, we can apply the Banach--Alaoglu
theorem \cite{RS78} to find a subsequence $\{S_{AB}^{k}\}_{k\in\Gamma}$ with a
weak$^{\ast}$ limit $\widetilde{S}_{AB}$ in the trace class operators such that
$\widetilde{S}_{AB}\geq0$ and $\operatorname{Tr}[\widetilde{S}_{AB}]\leq2^{\mu}$.
Furthermore, the sequences $[\rho_{AB}^{k}]^{T_{B}}+[S_{AB}^{k}]^{T_{B}}$ and
$[S_{AB}^{k}]^{T_{B}}-[\rho_{AB}^{k}]^{T_{B}}$ converge in the weak operator
topology to $\rho_{AB}^{T_{B}}+\widetilde{S}_{AB}^{ T_{B}}$ and $\widetilde{S}_{AB}^{ T_{B}%
}-\rho_{AB}^{T_{B}}$, respectively, and we can then conclude that $\rho
_{AB}^{T_{B}}+\widetilde{S}_{AB}^{ T_{B}},\widetilde{S}_{AB}^{ T_{B}}-\rho_{AB}^{T_{B}}\geq0$.
But this means that%
\begin{equation}
E_{\kappa}(\rho_{AB})\leq\log_2\operatorname{Tr}\widetilde{S}_{AB}\leq\mu,
\end{equation}
which implies that%
\begin{equation}
E_{\kappa}(\rho_{AB})\leq\liminf_{k\rightarrow\infty}E_{\kappa}(\rho_{AB}%
^{k}).\label{eq:lim-inf-upper_e_kappa}%
\end{equation}
Putting together
\eqref{eq:lim-sup-lower_e_kappa} and \eqref{eq:lim-inf-upper_e_kappa}, we
conclude that%
\begin{equation}
E_{\kappa}(\rho_{AB})=\lim_{k\rightarrow\infty}E_{\kappa}(\rho_{AB}%
^{k}).\label{eq:primal-limits}%
\end{equation}
Finally, putting together \eqref{eq:dual-limits}, \eqref{eq:finite-str-dual},
and \eqref{eq:primal-limits},\ we conclude \eqref{eq:kappa-dual-infty}.
\end{document}